\definecolor{citec}{HTML}{324FDA} 
\definecolor{linkc}{HTML}{A0111A}
\definecolor{urlc}{HTML}{b55c87}
\newtheorem{thm}{Theorem}[section]
\newtheorem{prop}[thm]{Proposition}
\newtheorem{lemma}[thm]{Lemma}
\newtheorem{remark}[thm]{Remark}
\newtheorem{cor}[thm]{Corollary}
\newtheorem{claim}[thm]{Claim}
\newtheorem{defi}[thm]{Definition}
\newtheorem{cnst}[thm]{Construction}
\newtheorem{example}[thm]{Example}
\newcommand{\Fq}{\mathbb{F}_q}
\newcommand{\Fp}{\mathbb{F}_p}
\newcommand{\cC}{\mathcal{C}}
\newcommand{\OurfieldSize}{200\cdot \left( \binom{n}{2k-1} \cdot (2k)!\right) ^2}
\newcommand{\bX}{{\bf X}}
\newcommand{\bfv}{{\bf v}}
\newcommand{\bfu}{{\bf u}}
\newcommand{\RS}{\text{RS}_{n,k}(\alpha_1, \ldots, \alpha_n)}
\newcommand{\ed}[2]{\textup{ED}(#1, #2)}
\newcommand{\ba}{{\boldsymbol\alpha}}
\newcommand{\F}{\mathbb{F}}
\newcommand{\cR}{\mathcal{R}}
\begin{document}
	
	\title{Anonymous Shamir's secret-sharing via Reed-Solomon Codes Against Permutations, Insertions, and Deletions}
\author{
Roni Con\thanks{Department of Computer Science, Technion--Israel Institute of Technology. \href{mailto:roni.con93@gmail.com}{\texttt{roni.con93@gmail.com}} This work was supported by the European Union (DiDAX, 101115134). 
Views and opinions expressed are those of the author(s) only and do not necessarily reflect those of the European Union or the European Research Council Executive Agency. Neither the European Union nor the granting authority can be held responsible for them. }
}
\date{}
\maketitle
\begin{abstract}
    In this work, we study the performance of Reed-Solomon codes against an adversary who first permutes the codeword symbols and then performs insertions and deletions.
    This adversarial model is motivated by recent interest in fully anonymous secret-sharing schemes \cite{eldridge2024abuse,yuval-anonymous}. A fully anonymous secret-sharing scheme has two key properties: first, the identities of the participants are not revealed before the secret is reconstructed; second, the shares of any unauthorized set of participants are uniform and independent. 
    In particular, the shares of any unauthorized subset reveal no information about the identity of the participants who hold them.

    We begin by observing that Reed–Solomon codes, when robust against an adversary who permutes the codeword and then deletes symbols, can be used to construct fully anonymous gap-threshold secret-sharing schemes.
    We then show that there exist $[n,k]$ Reed--Solomon codes (over sufficiently large fields) that are robust against an adversary that arbitrarily permutes the codeword and then performs $n-2k+1$ insertions and deletions to the permuted codeword.
    This implies the existence of a $(k-1, 2k-1, n)$ gap-threshold secret-sharing scheme that is fully anonymous. 
    That is, any $k-1$ shares reveal nothing about the secret, \emph{and} no information on the participants’ identities. Conversely, any $2k-1$ suffice to reconstruct the secret without revealing their identities.
    We also provide explicit constructions of such schemes based on previous work on Reed--Solomon codes capable of correcting insertions and deletions.
    The constructions presented here are the first gap-threshold secret-sharing schemes to simultaneously achieve the strongest form of anonymity and perfect reconstruction. 
\end{abstract}
\thispagestyle{empty}

\newpage
\tableofcontents

\newpage

\section{Introduction}

Reed--Solomon (RS) codes are a crucial family of error-correcting codes that play a significant role in ensuring data integrity across various communication and storage systems. 
Developed in the 1960s \cite{reed1960polynomial}, these codes are particularly effective in correcting multiple symbol errors, making them ideal for data transmission and storage. Specific applications include distributed storage systems, QR codes, and data transmission over noisy channels.
The ubiquity of these codes is due to their beautiful algebraic structure, which allows us to efficiently encode and decode in the presence of substitutions and erasures.

Substitutions and erasures have been well studied in the literature, and numerous ingenious constructions (including RS codes) have been developed which are optimal or near-optimal. 
A less studied model is that of insertions and deletions. An insertion error adds a symbol between two adjacent symbols (or at the beginning/end of the word), whereas a deletion error removes a symbol from the transmitted word.

Recently, due to growing theoretical interest and practical implications in DNA-based storage, constructing error-correcting codes that can recover from insertions and deletions (insdel errors, for short) has received a lot of attention. The interested reader is referred to the surveys by Mitzenmacher \cite{mitzenmacher2009survey} and Cheragchi and Ribeiro \cite{ cheraghchi2020overview} about coding for channels that introduce random deletions and insertions. The state-of-the-art codes against adversarial insdel errors, as well as other recent results, can be found in the survey \cite{haeupler2021synchronization} by Haeupler and Shahrasbi. 
For a detailed description of how insdel‐correcting codes apply to DNA‐based storage, see the survey by Sabary et al. \cite{sabary2024survey}. 

Due to the high interest in codes that correct for insertions and deletions and the fact that RS codes are a well-known and widely distributed family of codes, the question of the performance of RS codes against insdel errors was studied in a line of works \cite{safavi2002traitor,tonien2007construction,con2023reed,con2023optimal,liu2024optimal,con2024random}. It is known that RS codes over large enough fields achieve the so-called \emph{half-Singleton bound} \cite{con2023reed}. Namely, they have the optimal rate-error correction trade-off that a \emph{linear} code can achieve. \footnote{We emphasize here that nonlinear codes are known to be better than linear codes when it comes to insertions and deletions. While any linear code that can correct $\delta$-fraction of insdel errors must have rate at most $\frac{1-\delta}{2} + o(1)$, nonlinear codes can approach the singleton bound. One such example is the codes in \cite{haeupler2017synchronization} that achieve rate close to $1 - \delta$.} 
However, many questions regarding the performance of RS codes and, in general, linear codes capable of correcting insertions and deletions, are still open; see \cite[Section 1.4]{con2024random} and \cite[Section 10]{cheng2020efficient}.

This paper concerns the Shamir secret-sharing scheme, one of the most notable applications of RS codes in cryptography. We make a novel connection between the anonymity of Shamir's secret-sharing and RS codes that are robust against an adversary that first permutes the codeword and then deletes symbols from the permuted codeword. Before describing in further detail the connection, we recall the notion of secret-sharing.

Secret-sharing, in general, is a cryptographic method that divides a secret (such as a cryptographic key or sensitive information) into multiple parts, known as shares, which are distributed among a group of participants. 
The key feature of secret-sharing is that only specific subsets of participants, defined by an \emph{access structure}, can reconstruct the original secret from their shares.\footnote{In this paper we shall focus only on the specific case of threshold access structures. That is, every set of participants of size at least $r$ (for some predetermined $r$) can reconstruct the secret).} One of the most well-known secret-sharing schemes in the literature is that of Shamir's \cite{shamir1979share}. 
The formal definition of Shamir's secret-sharing (SSS) is as follows. Given a secret $s\in \mathbb{F}$ to be distributed, a reconstruction threshold $k>0$ and $n$ parties (identified as $n$ distinct elements $\alpha_i\in \mathbb{F}$),  a dealer randomly selects a polynomial $f\in \mathbb{F}[x]$ of degree less than $k$ such that $f(0)=s$, and sends to party $i$ the share $f(\alpha_i).$  
It is easy to see that given $k$ pairs $(\alpha_i, f(\alpha_i))$, one can reconstruct the polynomial $f(x)$ (using interpolation) and therefore recover the secret $f(0)$. 
Moreover, any $k-1$ such pairs reveal nothing about the secret. 

In \cite{eldridge2024abuse,yuval-anonymous}, the authors explored a stronger version of secret-sharing, where shares conceal information about the identities of the participants who hold them, and the secret is reconstructed without requiring the algorithm to know the identities of the share holders. 
Several notions of anonymity are discussed in \cite{yuval-anonymous}, with the strongest one ensuring that any unauthorized set of shares is uniform and independent (such schemes are termed \emph{fully anonymous}). 
In \cite{eldridge2024abuse}, the authors presented a variant of Shamir's secret-sharing scheme in which the shares conceal their identity, though not to the extent of being fully anonymous. 
One of the main limitations, aside from the lack of full anonymity, is that the reconstruction algorithm may fail (albeit with negligible probability, assuming sufficiently large share sizes).

In this work, we study the performance of RS codes against an adversary that permutes the codeword (arbitrary permutation) and then performs $t$ insdel errors. 
Our main motivation for this model is to construct a \emph{fully} anonymous Shamir secret-sharing scheme with perfect reconstruction. 
More specifically, we show the existence of $n$ distinct points $\alpha_i, i\in [n]$ over a large enough field such that for any polynomial, $f$, of degree $<k$, the following holds. 
For any set of $2k-1$ elements $I\subset [n]$, given the tuple $(f(\alpha_{I_1}), \ldots, f(\alpha_{I_{2k-1}}))$, one can reconstruct the polynomial $f$ and learn the secret $f(0)$. 
We emphasize here that the elements of the set $I = \{I_1, I_2, \ldots, I_{2k-1} \}$ appear in an arbitrary order, and thus in the reconstruction process, no information about the identity of the players is revealed. 
We note that given any set $I\subset [n]$ of size $k-1$, the tuple $(f(\alpha_{I_1}), \ldots, f(\alpha_{I_{k-1}}))$ reveals nothing about the secret. Moreover, since these are $k-1$ evaluations of a random polynomial $f$ of degree $\leq k-1$, this tuple is uniformly distributed over $\Fq^{k-1}$, giving the desired anonymity property. 

The scheme presented in this paper is the \emph{first} gap-threshold scheme that satisfies the strongest notion of anonymity, namely, (1) the shares of unauthorized sets are \emph{perfectly} uniform and independent, and (2) reconstruction is \emph{perfect} and does not require the identities of the reconstructing parties.

We note that in \cite{yuval-anonymous}, the authors explicitly asked if there are threshold (and also general access structure) secret-sharing schemes that are fully anonymous and have perfect reconstruction. Our results provide an affirmative answer for the gap-threshold case, while the exact threshold case remains open.  

\subsection{Setup and Preliminaries}

For an integer $k$, we denote $[k]=\{1,2,\ldots,k\}$. 
	Throughout this paper, $\log(x)$ refers to the base-$2$ logarithm. For a prime power $q$, let $\F_q$ denote the field of size $q$.

	A linear code over a field $\F$ is a linear subspace $\cC\subseteq \F^n$. The rate of a linear code $\cC$ of block length $n$ is $\cR=\dim(\cC)/n$. Every linear code of dimension $k$ can be described as the image of a linear map, which, by abuse of notation, we also denote with $\cC$, i.e., $\cC : \F^k \rightarrow \F^n$.
    When $\cC\subseteq \F_q^n$ has dimension $k$ we say that it is an $[n,k]_q$ code (or an $[n,k]$ code defined over $\Fq$). The minimal distance of a code $\cC$ with respect to a metric $d(\cdot,\cdot)$ is defined as $\min_{\bfv, \bfu \in \cC,\bfu \neq \bfv}{d(\bfv,\bfu)}$. 
    We denote by $\Fq[X_1, \ldots, X_n]$ the ring of polynomials in $X_1,\dots,X_n$ over $\Fq$ and by $\Fq(X_1, \ldots, X_n)$ the field of fractions of that ring. We write $\Fq^{<k}[X]$ to refer to all univariate polynomials of degree at most $k-1$ and coefficients in $\Fq$. 

    \begin{defi}
        We call a sequence $I\in [n]^{\ell}$ a \emph{distinct-element sequence} if all of its elements are pairwise distinct; that is, $I_i \neq I_j$ for all $i\neq j\in [\ell]$.
    \end{defi}
    
    We use capital letters (e.g., $I, J$) to denote sequences or sets; the intended meaning will be clear from context.
    \begin{defi}
        We say that two distinct-element sequences are \emph{equal as sets} if they contain the same elements.
    \end{defi}

     \begin{example} \label{ex:ordered}
        The following two distinct-element sequences $I=(1,2,4)$ and $J=(4,1,2)$ are equal as sets.
    \end{example}
    Observe that, from a \emph{set} $I$, we can construct $|I|!$ distinct-element sequences that are pairwise equal as sets.                    
    We next define Reed--Solomon codes (RS codes). 
	
\begin{defi}[Reed--Solomon code]
    Let $\alpha_1, \alpha_2, \ldots, \alpha_n$ be distinct elements in the finite field $\mathbb{F}_q$ of order $q$. For $k<n$, the $[n,k]_q$ \emph{Reed--Solomon} (RS) code of dimension $k$ and block length $n$ associated with the evaluation vector $(\alpha_1, \ldots, \alpha_n) \in \F_q^n$ is defined as the set of codewords 
    \[
    \text{RS}_{n,k}(\alpha_1, \ldots, \alpha_n) := \left \lbrace \left( f(\alpha_1), \ldots, f(\alpha_n) \right) \mid f\in \mathbb{F}_q[x],\text{ }\deg f < k \right \rbrace.
    \]	
\end{defi}
	
	That is, a codeword of an $[n,k]_q$ RS code is the evaluation vector of a polynomial of degree less than $k$ at $n$ predetermined, distinct points. 
	It is well known (and easy to verify) that the rate of an $[n,k]_q$ RS code is $k/n$, and the minimal distance, with respect to the Hamming metric, is $n - k + 1$.

    Throughout, we use the shorthand notations $\bX$ and $\ba$ to refer to $(X_1, \ldots, X_n)$ and $(\alpha_1, \ldots, \alpha_n)$, respectively.

\paragraph{Correcting Insertions, Deletions, and Permutations}
	
    Let $s$ be a string over the alphabet $\Sigma$. The operation of removing a symbol from $s$ is called a \emph{deletion}, and the operation of inserting a new symbol from $\Sigma$ either between two consecutive symbols in $s$, at the beginning, or at the end of $s$ is called an \emph{insertion}.

     We recall the notions of a subsequence and a longest common subsequence. 
    \begin{defi}
            A \emph{subsequence} of a string $s$ is a string obtained by removing some (possibly none) of the symbols in $s$. 
    \end{defi}
    \begin{defi}
    	
        Let $s$ and $s'$ be strings over an alphabet $\Sigma$. 
        A \emph{longest common subsequence} between $s$ and $s'$, is a subsequence of both $s$ and $s'$ of maximal length. We denote the length of a longest common subsequence by $ \textup{LCS}(s, s')$.
    		
        The \emph{edit distance} between $s$ and $s'$, denoted by $\ed{s}{s'}$, is the minimal number of insertions and deletions needed to turn $s$ into $s'$. 
    \end{defi}
    A well-known fact (see, e.g., \cite[Lemma 12.1]{crochemore2003jewels}) states that $\ed{s}{s'} = |s| + |s'| - 2\textup{LCS}(s,s')$, and this implies that the insdel correction capability of a code is determined by the LCS of its codewords. Specifically,
    \begin{lemma} \label{lem:code-lcs}
        A code $C$ can correct $\delta n$ insdel errors if and only if $\textup{LCS}(c,c')\leq n - \delta n - 1$ for any distinct $c,c'\in C$.
    \end{lemma}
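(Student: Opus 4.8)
The plan is to reduce the lemma to two standard facts and then do a small integrality check: (i) a code corrects $t$ insertions and deletions exactly when the edit-distance balls of radius $t$ around distinct codewords are pairwise disjoint, which in turn holds exactly when $\ed{c}{c'}\ge 2t+1$ for all distinct $c,c'\in C$; and (ii) the identity $\ed{s}{s'}=|s|+|s'|-2\,\textup{LCS}(s,s')$ quoted above. Combining these with the fact that $\textup{LCS}$ takes integer values yields the claimed threshold.

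First I would fix the meaning of ``correct $\delta n$ insdel errors'': there is a decoder that recovers $c$ from every string $y$ with $\ed{c}{y}\le\delta n$. Such a decoder exists if and only if no string $y$ simultaneously satisfies $\ed{c}{y}\le\delta n$ and $\ed{c'}{y}\le\delta n$ for two distinct $c,c'\in C$, i.e.\ if and only if the radius-$\delta n$ edit-distance balls around distinct codewords are disjoint. For one direction: if $\ed{c}{c'}\ge 2\delta n+1$ for all distinct $c,c'$, then a common $y$ would violate the triangle inequality $\ed{c}{c'}\le\ed{c}{y}+\ed{y}{c'}\le 2\delta n$ (edit distance is a metric, being symmetric and subadditive along concatenated edit scripts), so the balls are disjoint. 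For the other direction: if some distinct $c,c'$ have $\ed{c}{c'}\le 2\delta n$, take a shortest edit script (of length $\ed{c}{c'}$) turning $c$ into $c'$ and let $y$ be the string reached after its first $\min(\delta n,\ed{c}{c'})$ operations; then $\ed{c}{y}\le\delta n$ and $\ed{y}{c'}\le\ed{c}{c'}-\min(\delta n,\ed{c}{c'})\le\delta n$, so the balls intersect and no decoder exists. Hence $C$ corrects $\delta n$ insdel errors if and only if $\ed{c}{c'}\ge 2\delta n+1$ for all distinct $c,c'\in C$.

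It then remains to translate this condition via (ii). Every codeword of $C$ has length $n$, so $\ed{c}{c'}=2n-2\,\textup{LCS}(c,c')$ is always even; therefore $\ed{c}{c'}\ge 2\delta n+1$ is equivalent to $\ed{c}{c'}\ge 2\delta n+2$, i.e.\ to $2n-2\,\textup{LCS}(c,c')\ge 2\delta n+2$, i.e.\ to $\textup{LCS}(c,c')\le n-\delta n-1$. Requiring this for every pair of distinct codewords gives exactly the statement of the lemma.

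I expect the only point that needs care to be the ``midpoint'' construction in the converse of the ball-disjointness equivalence --- exhibiting an intermediate word $y$ lying in both balls whenever $\ed{c}{c'}\le 2\delta n$; everything else is mechanical given the metric property of $\ed{\cdot}{\cdot}$ and the parity observation. Alternatively, one could simply invoke the standard insdel analogue of the Hamming ``$d\ge 2t+1$'' decoding criterion and proceed directly to the parity step.
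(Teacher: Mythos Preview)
Your proof is correct. Note, however, that the paper does not actually supply its own proof of this lemma: it merely states the identity $\ed{s}{s'}=|s|+|s'|-2\,\textup{LCS}(s,s')$ with a reference and then records the lemma as an immediate consequence. Your argument fills in exactly the details one would expect---the ball-disjointness criterion for unique decodability and the parity observation that $\ed{c}{c'}$ is even for equal-length words, turning the threshold $2\delta n+1$ into $2\delta n+2$---so it is fully in line with what the paper implicitly has in mind. (You are tacitly assuming $\delta n\in\mathbb{Z}$, which is how the lemma is used throughout the paper, e.g.\ with $\delta n=n-2k+1$.)
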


    In this paper, we study the performance of RS codes against an adversary that first applies an arbitrary permutation to a codeword and then performs insdel errors. Formally,
    \begin{defi}
        The \emph{$t$-permutation-insdel} adversary is defined as an adversary that, given a codeword $c\in \cC \subseteq \Sigma ^n$, chooses an arbitrary permutation on $n$ elements $\pi$, and then
        \begin{enumerate}
            \item Applies $\pi$ on $c$, i.e., $\pi(c) := (c_{\pi(1)}, \ldots, c_{\pi(n)})$.
            \item Performs $t$ insdel errors to $\pi(c)$ and outputs the result.
        \end{enumerate}
    \end{defi}
    \begin{defi}
        A code is said to be \emph{robust against the $t$-permutation-insdel adversary} if the adversary cannot produce the same output on two distinct codewords.
    \end{defi}
    \begin{remark}
        Kova\v{c}evi'c and Tan \cite{kovavcevic2018codes} comprehensively studied codes capable of correcting permutations, insertions, deletions, and substitutions. They provide constructions and bounds on the coding parameters, focusing on regimes in which $q$, the field size, is either constant or linear in $n$. 
        The adversarial model considered in this work clearly falls within their framework. 
        The metric they define for this model is that of distance between ``histograms.'' More specifically, since any permutation can be applied to a codeword, the order of its symbols does not matter. 
        Thus, a codeword is viewed as a histogram of its values over $\Fq$ and then, the $L_1$ distance between histograms of two codewords is the minimal number of insertions and deletions needed to transform one codeword into a permuted version of the other.

        For convenience and to be consistent with the previous works that studied RS codes against insdel errors, we shall work with the permutation-insdel adversary. This allows us to formulate algebraic conditions similar to those in \cite{con2023reed,con2024random}.
        However, arguably the more ``natural'' way to describe such codes is using the notations and general framework presented in \cite{kovavcevic2018codes}. 
    \end{remark}

        We begin with the following equivalent characterization of a code that is robust against the $t$-permutation-insdel adversary.
       \begin{claim} \label{clm:correcting-cond}
        A code $\cC$ is robust against the $t$-permutation-insdel adversary if and only if, for all distinct $c,c'\in \cC$ and any two distinct-element sequences $I, J\in [n]^{n-t}$, we have $c_I \neq c'_J$.
    \end{claim}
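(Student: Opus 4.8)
The plan is to prove the contrapositive form of the equivalence: that $\cC$ is \emph{not} robust against the $t$-permutation-insdels adversary if and only if there exist distinct codewords $c,c'\in\cC$ and ordered sets $I,J\subseteq[n]$, each of size $n-t$, with $c_I=c'_J$.

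For the ``if'' direction I would argue directly. Given such $c,c',I,J$, put $w:=c_I=c'_J$, a string of length $n-t$. Pick a permutation $\pi$ of $[n]$ sending $1,2,\dots,n-t$ to $I_1,I_2,\dots,I_{n-t}$ in this order and sending $n-t+1,\dots,n$ to the remaining $t$ coordinates (in any order); then the length-$(n-t)$ prefix of $\pi(c)$ is exactly $w$, so deleting the last $t$ symbols of $\pi(c)$ (that is, $t$ deletions) produces $w$. Doing the symmetric thing to $c'$ with a permutation $\pi'$ built from $J$ also produces $w$. Thus the adversary outputs the same string on the two distinct codewords $c\ne c'$, so $\cC$ is not robust.

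For the ``only if'' direction, suppose $\cC$ is not robust, so there are distinct $c,c'\in\cC$, permutations $\pi,\pi'$, and one string $w$ obtainable from $\pi(c)$ by at most $t$ insdel operations and from $\pi'(c')$ by at most $t$ insdel operations; equivalently $\ed{\pi(c)}{w}\le t$ and $\ed{\pi'(c')}{w}\le t$. Since $|\pi(c)|=|\pi'(c')|=n$, the identity $\ed{s}{s'}=|s|+|s'|-2\,\textup{LCS}(s,s')$ gives $\textup{LCS}(\pi(c),w)\ge(n+|w|-t)/2$ and likewise for $\pi'(c')$; in particular $|w|\ge n-t$. Fix optimal alignments: the common subsequence witnessing $\textup{LCS}(\pi(c),w)$ picks out a set $P$ of positions of $w$ with $|P|\ge(n+|w|-t)/2$, and similarly a set $P'$ for $\pi'(c')$. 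Restricting $w$ to $P\cap P'$ yields a string $v$ that is simultaneously a subsequence of $\pi(c)$ and of $\pi'(c')$, and $|v|=|P\cap P'|\ge |P|+|P'|-|w|\ge n-t$. Finally, a subsequence of $\pi(c)=(c_{\pi(1)},\dots,c_{\pi(n)})$ of length $m$ is precisely $c_I$ for the ordered set $I=(\pi(q_1),\dots,\pi(q_m))$, where $q_1<\dots<q_m$ are the chosen positions in $\pi(c)$; the same applies to $\pi'(c')$. Truncating both resulting ordered sets to their first $n-t$ elements gives ordered sets $I,J$ of size exactly $n-t$ with $c_I=c'_J$ and $c\ne c'$, as required.

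The one genuinely non-routine step is the passage, in the ``only if'' direction, from ``$w$ is within edit distance $t$ of a permutation of $c$ and of a permutation of $c'$'' to ``$c$ and $c'$ share a length-$(n-t)$ common sub-rearrangement'': a naive triangle-inequality argument only gives edit distance $2t$ between the two permuted codewords, which is too weak, and the fix is to work inside $w$ itself and intersect the two alignment position-sets. I would also double-check that the statement is insensitive to whether the adversary must perform exactly $t$ or at most $t$ insdel operations: the ``if'' direction uses exactly $t$ deletions, while the ``only if'' direction only invokes $\ed{\pi(c)}{w}\le t$, which holds in either convention.
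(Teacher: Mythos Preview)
Your proof is correct in both directions, and your ``if'' direction is essentially identical to the paper's. However, your ``only if'' direction takes a detour that is not needed, and your stated justification for it is mistaken.

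You write that ``a naive triangle-inequality argument only gives edit distance $2t$ between the two permuted codewords, which is too weak.'' In fact it is exactly strong enough, and this is precisely what the paper does: from $\ed{\pi(c)}{w}\le t$ and $\ed{\pi'(c')}{w}\le t$ one gets $\ed{\pi(c)}{\pi'(c')}\le 2t$, and since both strings have length $n$, the identity $\ed{s}{s'}=|s|+|s'|-2\,\textup{LCS}(s,s')$ gives $2n-2\,\textup{LCS}(\pi(c),\pi'(c'))\le 2t$, i.e.\ $\textup{LCS}(\pi(c),\pi'(c'))\ge n-t$. The equal lengths are what make the triangle inequality lossless here; your concern would be valid if the two strings had different lengths. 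From this common subsequence one reads off the ordered sets $I,J$ exactly as you do at the end.

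Your alternative argument---intersecting the two alignment position-sets $P,P'$ inside $w$ and using $|P\cap P'|\ge |P|+|P'|-|w|\ge n-t$---is a nice self-contained way to avoid invoking the triangle inequality for edit distance, and it would generalize better to settings where the two source strings have different lengths. But for this claim it is more machinery than required.
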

    \begin{proof}
        $(\Rightarrow)$. Suppose that there exist two distinct codewords $c,c'\in \cC$ and distinct-element sequences $I, J\in [n]^{n-t}$ such that 
        \[
        c_I = (c_{I_1}, \ldots, c_{I_{n-t}}) = (c'_{J_1}, \ldots, c'_{J_{n-t}}) = c'_J\;.
        \]
        Let $\pi\colon [n] \to [n]$ be a permutation such that $\pi(i) = I_i$ for all $i\in [n-t]$. Similarly, let $\pi' \colon [n]\to [n]$ be a permutation such that $\pi'(i) = J_i$, for all $i\in [n-t]$. Thus, if we apply $\pi$ to $c$, $\pi'$ to $c'$, and delete the last $t$ elements of each, we achieve the same output, in contradiction.
        
        \noindent$(\Leftarrow)$. Assume that the code is not robust against the $t$-permutation-insdel adversary. Then, there exist two distinct codewords $c$ and $c'$ and two permutations $\pi$ and $\pi'$ such that $\ed{\pi(c)}{\pi'(c')} \leq 2t$. 
        Thus, $\pi(c)$ and $\pi'(c)$ share an LCS of length at least $n-t$. Let $I,J \subset [n]$ be the two distinct-element sequences of length $n-t$ that represent the positions of the LCS in $\pi(c)$ and $\pi'(c')$, respectively.
        That is, $(\pi(c))_I = (\pi'(c'))_J$ where $I_1 < \cdots < I_{n-t}$ and $J_1 < \cdots < J_{n-t}$. 
        This implies that we can reorder the elements in $I$ and $J$ to obtain distinct-element sequences $I'$ and $J'$ ($I = I'$ and $J = J'$ as sets) such that $c_{I'} = c'_{J'}$, in contradiction.
    \end{proof}

\paragraph{Anonymous Shamir's secret-sharing}

We begin with a brief reminder of threshold secret-sharing schemes. We refer the interested reader to Beimel's comprehensive survey \cite{beimel2011secret}. The following notations and definitions are inspired by \cite{kurihara2008new}.

Let $\mathcal{P} = \{P_i \mid 1\leq i \leq n \}$ be a set of $n$ participants. Let $\mathcal{D} \notin \mathcal{P}$ be a dealer who selects a secret $s\in \F_q$ and distributes a share $w_i \in \mathcal{W}_i$ to every participant, where $\mathcal{W}_i$ denotes the set of possible shares that $P_i$ can have. 
Also, assume that $u = | \mathcal{W}_i|$ for all $i\in [n]$.
The \emph{access structure}, typically denoted as $\Gamma \subset 2^{\mathcal{P}}$, is the set of all \emph{authorized sets}, i.e., sets that can recover the secret. Any set that is not in $\Gamma$ is called an \emph{unauthorized set}. 
A secret-sharing scheme is called an \emph{$(r,n)$ threshold secret-sharing scheme} if $\Gamma = \{A \subset \mathcal{P} | |A| \geq r \}$. That is, every set of at least $r$ participants can recover the secret.
Further, we want that for each $A\notin \Gamma$, the respective parties cannot learn anything about the secret. We denote by $\mathrm{Share}$ a randomized algorithm that gets a secret $s$ and produces a vector $\mathrm{sh}\in \mathcal{W}_1 \times \ldots \times \mathcal{W}_n$ and we denote by $\mathrm{Recon}$ the reconstruction algorithm of the scheme.

More formally, let $S$ and $W_i$ be the random variables induced by $s$ (the secret) and $w_i$ (the $i$-th share), respectively; we want 
\begin{equation} \label{eq:perfect-ss}
H(S|\mathcal{V}_A) = \begin{cases}
    0, \qquad |A| \geq r \\
    H(S), \,|A| < r
\end{cases}\;,
\end{equation}
where $H(\cdot)$ denotes the entropy of a random variable and $\mathcal{V}_A = \{W_i \mid i\in A\}$. \footnote{In this paper (as in Shamir's secret-sharing scheme), the secret $s$ is chosen uniformly at random from $\Fq$ and then $H(S) = \log_2 q$.}
In words, \eqref{eq:perfect-ss} implies that any set of $r$ parties or more can reconstruct the secret (there is no entropy in $H(S|\mathcal{V}_A)$) but any set of less than $r$ parties cannot reveal any information about the secret.
Note that the standard Shamir secret-sharing scheme achieves \eqref{eq:perfect-ss}.
Indeed, in Shamir's secret-sharing scheme, the dealer who wants to distribute a secret $s\in \F_q$ chooses a random polynomial $f$ of degree $<r$ such that $f(0) = s$. Then, the party $P_i$ gets the share $(\alpha_i, f(\alpha_i))$ where $\alpha_1, \ldots, \alpha_{n}$ are distinct, predetermined nonzero elements of a finite field $\F_q$.
By Lagrange interpolation, any $r$ such pairs $(\alpha_i, f(\alpha_i))$, one can reconstruct $f$ and thus know $f(0)$. 
However, for $r-1$ pairs, the probability that a specific value $s$ was the secret is $1/q$ (assuming that $s$ was a uniformly random secret from $\Fq$), thus an adversary who sees $r-1$ pairs learns nothing about the secret.

The scheme we construct in this paper is a \emph{ramp} threshold secret-sharing scheme (also called a gap-threshold scheme). That is, there is a gap between the number of parties that learn nothing about the secret and the number of parties that can reconstruct the secret.
\begin{defi}
    Let $t<r<n$ be integers. Given the above definitions, a $(t,r,n)$ \emph{ramp threshold secret-sharing scheme} is a secret-sharing scheme such that \begin{enumerate}
        \item For any $A\subset \mathcal{P}$ where $|A| \geq r$, it holds that $H(S|\mathcal{V}_A) = 0$.
        \item For any $A\subset \mathcal{P}$ where $|A| \leq t$, it holds that $H(S|\mathcal{V}_A) = H(S)$.
    \end{enumerate}
\end{defi}

In the standard Shamir secret-sharing scheme, $r = t-1$. We note that when $A\subset \mathcal{P}$ is of size larger than $t$ but smaller than $r$, then we have no restriction on $H(S\mid \mathcal{V}_A)$, namely, it can be any value between $0$ and $H(S)$ (including both ends). 

We now define the anonymity of a ramp threshold secret-sharing scheme.
Our definition adopts the strongest notion of anonymity from \cite{yuval-anonymous} in which any set of \emph{unauthorized} shares is uniformly distributed over the domain of all possible shares. 
Note that we assume all shares come from the same domain; otherwise, a share can reveal information about the party who holds it (in particular, the standard Shamir secret-sharing scheme is not anonymous, as each share contains $\alpha_i$ which is uniquely associated with $P_i$). We denote the common share domain by $\Omega$ the domain of the shares, and we assume that the secret is a random element in $\Fq$.

\begin{defi} \label{defi:full-anonymity}
    Given the definition above, a \emph{fully anonymous $(t,r,n)$ ramp secret-sharing scheme} is a secret-sharing scheme such that the following two conditions are satisfied:
    \begin{enumerate}
        \item \label{item:full-anon-share}(Share anonymity) for every secret $s\in \F_q$, every integer $\ell \leq t$, every distinct-element sequences $I\in [n]^{\ell}$, and every $\bfv\in \Omega^{\ell}$, we have 
        \[
            \Pr_{\mathrm{sh} \leftarrow \mathrm{Share}(s)}[\mathrm{sh}_I = \bfv] = |\Omega|^{-\ell}\;.
        \] 
        \item \label{item:full-anon-recon}(Perfect anonymous reconstruction) There exists a function $\mathrm{Recon}$ (possibly non-efficient) such that for every secret $s\in \Fq$, every distinct-element sequence $A$ of size $r$, and every $\mathrm{sh}$ that can be generated by $\mathrm{Share}(s)$, it holds that $\mathrm{Recon}(\mathrm{sh}_A) = s$. 
        We emphasize that $\mathrm{Recon}$ always succeeds and does not require the identities of the participants who hold the shares.
        \footnote{We require information-theoretic reconstruction. In \cite{eldridge2024abuse}, the authors provide an efficient algorithm that is anonymous.
        In this work, the reconstruction algorithm is a brute-force one and we leave as an intriguing open question to develop an efficient reconstruction algorithm.}
\end{enumerate} 
\end{defi}

    \begin{remark}
        We note that \Cref{defi:full-anonymity} ensures the strongest form of share anonymity. Specifically, an adversary who has access to unauthorized sets of shares sees uniformly random vectors, and thus gains no information about the identity of the participants holding these shares (even if he sees multiple unauthorized sets of shares possibly corresponding to different secrets). 
    \end{remark}

    \begin{remark}
         It is often assumed that the identities of participants in Shamir's secret-sharing schemes are not a part of the shares. This implies, of course, that the shares of an unauthorized set do satisfy condition~\ref{item:full-anon-share} in \Cref{defi:full-anonymity}. 
         However, in this case, the standard interpolation-based reconstruction algorithm requires each participant to provide not only $f(\alpha_i)$ but also $\alpha_i$. 
         This violates condition~\ref{item:full-anon-recon} in \Cref{defi:full-anonymity} the reconstruction function $\mathrm{Recon}$ must operate solely on the shares, without any auxiliary information about the participants.
    \end{remark}

\subsection{Previous work} \label{sec:prev-results}
\paragraph{Reed--Solomon codes against insdel errors.}
    We begin by citing the (non-asymptotic version of the) rate--error-correction tradeoff for \emph{linear} codes correcting insdel errors.
    \begin{thm}[Half-Singleton bound] \cite[Corollary 5.2]{cheng2020efficient}
        An $[n,k]_q$ linear code can correct at most $n - 2k + 1$ insdel errors.
    \end{thm}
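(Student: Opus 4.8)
The plan is to reduce, via \Cref{lem:code-lcs}, to a purely linear-algebraic statement about generator matrices, and then prove that statement by a dimension count. Concretely, it suffices to show: every $[n,k]_q$ linear code $\cC$ (we may assume $n\ge 2k-1$, since for smaller $n$ the quantity $n-2k+1$ is negative) contains two distinct codewords $c\neq c'$ with $\textup{LCS}(c,c')\ge 2k-2$. Granting this, if $\cC$ corrects $t$ insdel errors then \Cref{lem:code-lcs} forces $2k-2\le \textup{LCS}(c,c')\le n-t-1$, i.e.\ $t\le n-2k+1$, which is exactly the claimed bound.

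To produce the pair, fix a $k\times n$ generator matrix $G$ of $\cC$ with columns $G_1,\dots,G_n\in\F_q^k$, so that the codeword of a message $x\in\F_q^k$ is $xG=(x\cdot G_1,\dots,x\cdot G_n)$, and $xG=yG$ iff $x=y$ since $\mathrm{rank}(G)=k$. Consider the homogeneous linear system in the unknowns $(x,y)\in\F_q^{2k}$ consisting of the $2k-2$ equations $x\cdot G_{i+1}=y\cdot G_i$ for $i=1,\dots,2k-2$; this is where $n\ge 2k-1$ is used, as it involves only the columns $G_1,\dots,G_{2k-1}$. Its solution space $V$ has $\dim V\ge 2k-(2k-2)=2$. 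If some $(x,y)\in V$ has $x\neq y$ we are done: $xG$ and $yG$ are distinct codewords, and the equations say $(xG)_{i+1}=(yG)_i$ for $1\le i\le 2k-2$, so the length-$(2k-2)$ block of $xG$ on coordinates $2,\dots,2k-1$ coincides with the length-$(2k-2)$ block of $yG$ on coordinates $1,\dots,2k-2$, giving $\textup{LCS}(xG,yG)\ge 2k-2$.

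It remains to treat the degenerate case in which $V$ lies entirely inside the diagonal $\{(x,x):x\in\F_q^k\}$, so the dimension count only yields solutions encoding a single codeword. Then $V=\{(x,x):x\in U\}$ where $U=\{x\in\F_q^k: x\cdot G_1=\cdots=x\cdot G_{2k-1}\}$ and $\dim U=\dim V\ge 2$. The linear functional $x\mapsto x\cdot G_1$ on $U$ has image of dimension at most $1$, hence kernel of dimension at least $\dim U-1\ge 1$; choose a nonzero $x$ in this kernel. Then $x\cdot G_1=\cdots=x\cdot G_{2k-1}=0$, i.e.\ the nonzero codeword $xG$ agrees with the all-zero codeword on its first $2k-1$ coordinates, so $\textup{LCS}(xG,\mathbf{0})\ge 2k-1>2k-2$. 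In either case the required pair of distinct codewords exists, and the theorem follows as above.

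The one place that needs care — and the step I expect to be the real obstacle — is precisely this degenerate case: a crude dimension count always produces a nonzero solution of the shifted system, but a priori that solution could satisfy $x=y$ and hence correspond to the \emph{same} codeword rather than a confusable pair; eliminating this is what forces the short detour through the subspace $U$ and the zero codeword. Note that nothing in the argument uses $q$ being large — only that all the operations are linear over $\F_q$ — which matches the fact that the half-Singleton bound holds over every field, while its tightness (Reed--Solomon codes achieving it) is what requires $q$ to be large.
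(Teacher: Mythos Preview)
The paper does not give its own proof of this statement: it is quoted verbatim as \cite[Corollary~5.2]{cheng2020efficient} and used as a black box (e.g.\ in \Cref{rem:gap}). So there is no in-paper argument to compare against.

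That said, your proof is correct. The reduction via \Cref{lem:code-lcs} to exhibiting two distinct codewords with a common subsequence of length $2k-2$ is the right move, and the shifted linear system $x\cdot G_{i+1}=y\cdot G_i$, $1\le i\le 2k-2$, is exactly the standard device: $2k-2$ homogeneous equations in $2k$ unknowns force a solution space of dimension $\ge 2$, and a solution with $x\neq y$ immediately yields the length-$(2k-2)$ common substring $(xG)_{[2,2k-1]}=(yG)_{[1,2k-2]}$. Your handling of the degenerate case $V\subseteq\{(x,x)\}$ is also clean: passing to $U=\{x:x\cdot G_1=\cdots=x\cdot G_{2k-1}\}$ with $\dim U\ge 2$ and killing one more linear functional produces a nonzero codeword vanishing on the first $2k-1$ coordinates, hence sharing a length-$(2k-1)$ common subsequence with $\mathbf{0}$. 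This is essentially the argument behind the cited bound (and, going further back, \cite{abdel2007linear}); your write-up is a tidy, self-contained version of it.
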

    
    The performance of RS codes against insdel errors was first considered in \cite{safavi2002traitor}, in the context of traitor tracing. In~\cite{wang2004deletion}, the authors constructed a $[5,2]_q$ generalized RS code that corrects a single insdel error. In \cite{tonien2007construction}, an $[n,k]$ generalized RS code was constructed that is capable of correcting $\log_{k+1}n - 1$ insdel errors. 
    Several constructions of two-dimensional RS codes correcting $n-3$ insdel errors (the maximal number of insdel errors an $[n,2]_q$ code can correct) are given in~\cite{duc2019explicit,liu20212,con2023reed,con2023optimal}, where the best known field size, $O(n^3)$, is achieved in~\cite{con2023optimal}. Moreover, it has been shown that such codes must have $q = \Omega(n^3)$ \cite{con2023reed}, and thus we have a clear picture for optimal two-dimensional RS codes correcting insdel errors.

    For $k>2$, much less is known. It was shown in \cite{con2023reed} that over large enough fields, there exist RS codes that exactly achieve the half-Singleton bound. 
    Specifically, 
    \begin{thm}[{\cite[Theorem 16]{con2023reed}}] \label{thm:con-rs-res}
        Let $n$ and $k$ be positive integers such that $2k - 1 \leq n$. For $q = O\left(\binom{n}{2k-1}^2 \cdot k^2 + n^2\right)$, there exists an $[n,k]_q$ RS code that can decode from $n-2k+1$ insdel errors.
    \end{thm}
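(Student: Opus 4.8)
The plan is to establish the statement via the longest-common-subsequence characterization of \Cref{lem:code-lcs} together with a dimension-counting (Schwartz--Zippel) argument over a sufficiently large field. \textbf{Step 1: reduce to a subsequence condition on polynomials.} By \Cref{lem:code-lcs}, $\RS$ corrects $n-2k+1$ insdel errors if and only if no two distinct codewords share a common subsequence of length $2k-1$. Unpacking this in terms of evaluations, it suffices to rule out the existence of distinct $f,g\in\Fq^{<k}[X]$ and strictly increasing tuples $I=(i_1<\dots<i_{2k-1})$ and $J=(j_1<\dots<j_{2k-1})$ in $[n]$ with $f(\alpha_{i_\ell})=g(\alpha_{j_\ell})$ for all $\ell\in[2k-1]$. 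When $I=J$ as tuples this forces $f$ and $g$ to agree on $2k-1\ge k$ points, hence $f=g$; more generally, whenever $i_\ell=j_\ell$ for at least $k$ indices $\ell$ we again get $f=g$, so those patterns may be discarded from the outset.

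\textbf{Step 2: turn it into nonsingularity of explicit matrices.} After replacing $(f,g)$ by $(f-g(0),\,g-g(0))$ we may assume $g(0)=0$, and then the relations $f(\alpha_{i_\ell})=g(\alpha_{j_\ell})$, $\ell\in[2k-1]$, form a homogeneous linear system in the $2k-1$ unknowns $f_0,\dots,f_{k-1},g_1,\dots,g_{k-1}$, whose coefficient matrix $M_{I,J}(\ba)\in\Fq^{(2k-1)\times(2k-1)}$ has $\ell$-th row $\bigl(1,\alpha_{i_\ell},\dots,\alpha_{i_\ell}^{k-1},\alpha_{j_\ell},\dots,\alpha_{j_\ell}^{k-1}\bigr)$. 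A violation (with $f\ne g$) produces a nontrivial kernel vector of $M_{I,J}(\ba)$, so it is enough to choose the $\alpha_i$ pairwise distinct so that $\det M_{I,J}(\ba)\ne 0$ for every ``relevant'' pattern $(I,J)$, i.e., every pair of increasing $(2k-1)$-tuples not discarded in Step 1. There are at most $\binom{n}{2k-1}^2$ of these, and each $\det M_{I,J}$, viewed as a polynomial in $\alpha_1,\dots,\alpha_n$, is homogeneous of degree $k(k-1)$ (every permutation in the Leibniz expansion uses each column exactly once, and the columns carry exponents $0,1,\dots,k-1$ and $1,\dots,k-1$).

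\textbf{Step 3: nonvanishing of the determinant — the main obstacle.} The crux is to show $\det M_{I,J}\not\equiv 0$ for the relevant patterns. This is delicate: for sufficiently degenerate patterns (those with many coincidences $i_\ell=j_\ell$) the matrix is singular for \emph{every} $\ba$, which is precisely why the pruning in Step 1 is needed; and even for the surviving patterns the nonvanishing is not automatic and must be argued. The natural approach is to exhibit one convenient specialization, e.g. $\alpha_i\mapsto y^{i}$, turning $M_{I,J}$ into a matrix of monomials in $y$ (a generalized/block Vandermonde), and then to show that some term of its Leibniz expansion attains a strictly extremal $y$-degree, so that no cancellation can occur — a Lindström--Gessel--Viennot / Schur-polynomial type computation — or alternatively to induct on $k$. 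Making this work uniformly over all relevant patterns, together with a clean combinatorial description of exactly which patterns survive the pruning, is where the real effort lies.

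\textbf{Step 4: conclude via Schwartz--Zippel.} Granting Step 3, put $P(X_1,\dots,X_n)=\bigl(\prod_{(I,J)}\det M_{I,J}(X)\bigr)\cdot\prod_{1\le i<j\le n}(X_i-X_j)$, where the first product ranges over the relevant patterns. Then $P$ is a nonzero polynomial of total degree $D\le k(k-1)\binom{n}{2k-1}^2+\binom{n}{2}=O\!\left(k^2\binom{n}{2k-1}^2+n^2\right)$. By the Schwartz--Zippel lemma, if $q>D$ then $P$ does not vanish identically on $\Fq^n$, so there is a point $\ba=(\alpha_1,\dots,\alpha_n)\in\Fq^n$ with $P(\ba)\ne 0$: the Vandermonde factor forces the $\alpha_i$ to be distinct, while the remaining factors force every $\det M_{I,J}(\ba)\ne 0$. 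Hence $\RS$ corrects $n-2k+1$ insdel errors, matching the half-Singleton bound, with field size $q=O\!\left(k^2\binom{n}{2k-1}^2+n^2\right)$ as claimed.
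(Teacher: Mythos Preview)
Your outline is essentially the approach summarized in the paper (which recaps \cite{con2023reed}): your matrix $M_{I,J}$ is the paper's $V_{I,J}$, your pruning in Step~1 is exactly \Cref{bad} (keep only pairs agreeing on at most $k-1$ coordinates), and Step~4 is the Schwartz--Zippel wrap-up the paper sketches after \Cref{nonzero}, with the same degree count.

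The only genuine gap is Step~3, which you correctly flag as the crux but do not carry out. The paper's method (\Cref{nonzero}) is precisely the induction on $k$ you mention, not the specialization/LGV route. The key observation that makes the induction go is this: since $I,J$ are length $2k-1$ and agree on at most $k-1$ coordinates, there exist distinct indices $i,j$ with $I_i\notin J$ and $J_j\notin I$. Then in the Leibniz expansion of $\det V_{I,J}(\bX)$, the only way to produce a monomial divisible by $X_{I_i}^{k-1}X_{J_j}^{k-1}$ is to select the $(i,k)$ and $(j,2k-1)$ entries, so the sum of all such monomials equals $\pm X_{I_i}^{k-1}X_{J_j}^{k-1}\det V_{I',J'}(\bX)$ with $I'=I_{[2k-1]\setminus\{i,j\}}$, $J'=J_{[2k-1]\setminus\{i,j\}}$. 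One checks that $(I',J')$ again agree on at most $k-2$ coordinates, so the inductive hypothesis yields a monomial in $\det V_{I',J'}$ arising from a \emph{unique} permutation; multiplying by $X_{I_i}^{k-1}X_{J_j}^{k-1}$ gives a monomial in $\det V_{I,J}$ with coefficient $\pm 1$, hence nonvanishing over any field (so no characteristic issues arise in Step~4 either). Your LGV/specialization idea may well work too, but the inductive ``peel off the highest-degree columns'' argument is short and exactly what the paper records.
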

    Later,~\cite{con2024random} showed that there exist RS codes over linear-sized alphabets that almost attain the half-Singleton bound. More specifically, for $q\ge n+ 2^{\mathrm{poly}(1/\epsilon)}k$, with high probability, a random $[n,k]_q$ RS code corrects at least $ (1-\epsilon)n-2k+1$ insdel errors.
    
    Both results establish the existence of RS codes capable of correcting insdel errors. Explicit constructions for general $k$ achieving the half-Singleton bound are given in~\cite{con2023reed,liu2024optimal} over much larger ($\approx n^{k^k}$) fields.
    An interesting open question is whether one can provide explicit constructions (with identified evaluation points) of RS codes over these fields that achieve (or get close to) the half-Singleton bound.

\paragraph{Anonymous secret-sharing}
The notion of anonymous secret-sharing studied in several prior works \cite{stinson1988combinatorial,phillips1992strongly,blundo1997anonymous,kishimoto2002bound} focuses solely on anonymous reconstruction. That is, to reconstruct the secret, the algorithm requires only the shares, not the identities of the participants.

In a recent work, Eldridge et al.~\cite{eldridge2024abuse} also considered achieving anonymity, which they term as ``unlinkability'', in the following sense. Given multiple unauthorized sets of shares -- corresponding to several shared secrets -- an adversary cannot ``link'' any of the shares to one another. In particular, the adversary cannot determine whether two shares from different unauthorized sets belong to the same participant. 
They motivated this definition by showing how such secret-sharing schemes can be used to construct protocols to defend against attackers who use location-tracking devices such as AirTags to stalk a victim. 
Their protocols balance between the privacy requirements of such devices (ensuring that only the owner can access the location of their belongings) and the need to protect individuals from such tracking attacks (detecting whether a device has been secretly attached to a person). For further discussion of applications, we refer the reader to the excellent introduction in \cite{eldridge2024abuse}. 

To motivate our work, we briefly describe \cite[Construction $1$]{eldridge2024abuse}. 
In \cite{eldridge2024abuse}, each time a dealer shares a secret, they choose randomly $\gamma_1,\ldots,\gamma_n$ and sets the $i$-th share to be $(\gamma_i, f(\gamma_i))$.
This construction satisfies the unlinkability property that \cite{eldridge2024abuse} aimed for. However, this scheme has two disadvantages. 

First, it does not satisfy condition~\ref{item:full-anon-share} in \Cref{defi:full-anonymity}. 
For example, consider the case of reconstruction threshold 3, where in this case $\Omega = \Fq \times \Fq$. Consider $\bfv = \{(\gamma, \delta), (\gamma, \delta ')\}$, where $\delta \neq \delta'$. Clearly, the probability of generating two such shares for a given secret is $0$.

Second, this scheme does not give perfect reconstruction. Note that two shares may receive the same $\gamma$, which occurs with probability $1/q$.   
Since the reconstruction algorithm is simply polynomial interpolation that requires distinct pairs, to make the failure probability negligible, $q$ must be super-polynomial in $n$, which implies that the share size is $\omega(\log(n))$ bits rather than $\Theta(\log(n))$.
In particular, regardless of the field size, their reconstruction is not a perfect reconstruction, as there is always a positive probability of failure.

The study of anonymous secret-sharing for general access structures, where anonymity is also considered for unauthorized sets (and not only when reconstructing), was first studied in \cite{guillermo2003providing,paskin2020cryptographic}. 
Then, in~\cite{yuval-anonymous}, the authors systematically studied several notions of anonymous secret-sharing schemes for general access structures (extending and generalizing the works \cite{paskin2020cryptographic,eldridge2024abuse}) in the information-theoretic and computational regimes. 
They provide explicit constructions as well as exponential lower bounds on the share size. 

In general, the anonymous secret-sharing schemes constructed in~\cite{eldridge2024abuse,yuval-anonymous} are not fully anonymous—i.e., they do not satisfy \Cref{defi:full-anonymity}. They satisfy one of the following relaxations:
\begin{enumerate}
    \item Allowing imperfect uniformity of the shares (e.g., shares of an unauthorized set are only statistically close to uniform) or imperfect reconstruction (namely, reconstruction fails with small probability).
    \item Replacing the uniformity requirement of the scheme (condition~\ref{item:full-anon-share} in \Cref{defi:full-anonymity}) with a weaker requirement on the shares. 
    We refer the reader to the definitions of Single-dealer Share Anonymity (S-anonymity) and Multi-dealer Share Anonymity (M-anonymity) in \cite{yuval-anonymous}.
\end{enumerate}
Our goal in this paper is to overcome these two relaxations and construct a scheme in which the shares of unauthorized sets are perfectly uniform and independent (U-anonymity in \cite{yuval-anonymous}), and also the scheme achieves perfect anonymous reconstruction. 

\subsection{Our results}
    We prove the existence of an RS code over a sufficiently large field that is robust against the permutation-insdel adversary. Specifically,
    \begin{restatable}{thm}{mainres}\label{thm:main}
        Let $k$ and $n$ be integers such that $2k - 1 \leq n$. Let $q$ be a prime such that $q \geq \OurfieldSize$. Then, there exists an $\RS$ code that is robust against the $(n-2k + 1)$-permutation-insdel adversary.
    \end{restatable}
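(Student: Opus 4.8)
The plan is to translate robustness against the $(n-2k+1)$-permutation-insdels adversary into an algebraic non-vanishing condition on the evaluation points, and then use a counting/union-bound argument (à la \cite{con2023reed}) to show that a random choice of $n$ distinct evaluation points over a field of the stated size satisfies all the required conditions simultaneously. By \Cref{clm:correcting-cond}, the code $\RS$ is robust against the $(n-2k+1)$-permutation-insdels adversary if and only if for every two distinct polynomials $f,g\in\Fq^{<k}[X]$ and every two ordered subsets $I,J\subseteq[n]$ of size $n-(n-2k+1)=2k-1$, we have $(f(\alpha_{I_1}),\dots,f(\alpha_{I_{2k-1}}))\neq (g(\alpha_{J_1}),\dots,g(\alpha_{J_{2k-1}}))$. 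Equivalently, for the polynomial $h=f-g$ (a nonzero polynomial of degree $<k$), it is \emph{not} the case that $h$ restricted to the ordered tuple $(\alpha_{I_1},\dots,\alpha_{I_{2k-1}})$ ``matches'' $g$ along $(\alpha_{J_1},\dots,\alpha_{J_{2k-1}})$. I would instead phrase the bad event directly in terms of two degree-$<k$ polynomials agreeing on two reorderings of a common $2k-1$ evaluation points.

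First I would set up the linear-algebra condition. Fix ordered sets $I,J$ of size $2k-1$ and consider the system of $2k-1$ equations $f(\alpha_{I_\ell})=g(\alpha_{J_\ell})$, $\ell\in[2k-1]$, in the $2k$ unknowns given by the coefficients of $f$ and $g$ (each has $k$ coefficients). This is a homogeneous linear system whose coefficient matrix $M_{I,J}(\alpha_1,\dots,\alpha_n)$ is a $(2k-1)\times 2k$ matrix built from powers of the $\alpha_{I_\ell}$ and $\alpha_{J_\ell}$ (a ``two-sided'' Vandermonde-type matrix). A nonzero solution always exists by dimension count; the point is to rule out solutions in which $f\neq g$. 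As in \cite{con2023reed}, one shows that if the appropriate maximal minors of $M_{I,J}$ — viewed as polynomials in the formal variables $X_1,\dots,X_n$ substituted for $\alpha_1,\dots,\alpha_n$ — do not all vanish, then any solution forces $f=g$. The key structural lemma to prove (or invoke from \cite{con2023reed}) is that these minors are not identically zero as polynomials in $\F_q(X_1,\dots,X_n)$; this is where one exploits the fact that $I$ and $J$ are genuinely distinct as ordered sets, so the matrix is not degenerate for formal reasons. I would prove this by exhibiting, for each $(I,J)$, one explicit assignment of the variables making the relevant minor nonzero, using the classical nonsingularity of Vandermonde matrices on distinct points.

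Next I would do the union bound. The number of pairs of ordered sets $(I,J)$ of size $2k-1$ is at most $\left(\binom{n}{2k-1}(2k-1)!\right)^2 \le \left(\binom{n}{2k-1}(2k)!\right)^2$. For each such pair, the obstruction polynomial (a product of the non-identically-zero minors, or the single relevant minor) has total degree polynomially bounded in $n$ and $k$ — roughly $O(k^2)$, since each matrix entry is a power $X_i^j$ with $j<k$ and the determinant sums products of $2k-1$ such entries. By the Schwartz–Zippel lemma (or a DeMillo–Lipton–Schwartz–Zippel argument over $\F_q$), the number of evaluation tuples $(\alpha_1,\dots,\alpha_n)$ killing a fixed such polynomial is at most $(\deg)\cdot q^{n-1}$. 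Also I must exclude tuples with a repeated coordinate, of which there are at most $\binom{n}{2}q^{n-1}$. Choosing $q$ larger than (number of bad pairs)$\times$(degree bound) $+\binom{n}{2}$, which is $O\!\left(\OurfieldSize\right)$ after absorbing the $\poly(k)$ degree factor into the $(2k)!$, leaves a strictly positive count of good tuples. Picking $q$ to be a prime in this range (Bertrand's postulate) finishes the argument.

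The main obstacle is the structural claim that the relevant minors of $M_{I,J}$ are not identically zero for \emph{every} pair of distinct ordered sets $I,J$ — in particular handling the case where $I$ and $J$ are equal as sets but differ as ordered sets (the genuinely ``permutation'' part of the adversary, which has no analogue in the pure-insdel setting of \cite{con2023reed}). Here one cannot simply substitute $X_i=\alpha$ naively, and one must argue that the reordering still yields a matrix of full rank $2k-1$ over the rational function field; I expect this to require a careful choice of which $2k-1$ columns to keep and an explicit witness substitution, and it is the step I would spend the most care on. Everything else is a routine degree bound plus Schwartz–Zippel plus Bertrand.
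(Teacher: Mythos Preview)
Your overall strategy—reduce robustness to an algebraic non-vanishing condition on the $(2k-1)\times(2k-1)$ matrix $V_{I,J}$, then union-bound via Schwartz--Zippel over all pairs $(I,J)$—is exactly the paper's plan, and the parameter count you sketch is correct. However, the structural lemma you identify as ``the main obstacle'' is not merely hard: it is \emph{false}. There exist pairs of ordered sets $(I,J)$ of size $2k-1$ with $I\neq J$ as ordered sets for which $\det(V_{I,J}(\bX))$ vanishes identically in $\F_q[X_1,\dots,X_n]$. The paper gives the explicit example $k=4$, $I=(1,2,3,4,5,6,7)$, $J=(2,1,4,3,5,6,7)$; here $I$ and $J$ agree as sets, and elementary row/column operations show $\det V_{I,J}(\bX)\equiv 0$. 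Since your $(2k-1)\times 2k$ matrix $M_{I,J}$ has two proportional constant columns, every maximal minor of $M_{I,J}$ either contains both (and vanishes by linear dependence) or contains exactly one (and equals $\pm\det V_{I,J}$, which also vanishes). So no choice of minor and no witness substitution can rescue the argument for such $(I,J)$: the matrix genuinely has rank $<2k-1$ over the function field, and your plan to ``exhibit an explicit assignment making the relevant minor nonzero'' cannot succeed.

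The paper's fix is to abandon nonsingularity of $V_{I,J}(\ba)$ as the target and instead work directly with \Cref{prop:algebraic-condition}: it suffices that the kernel of $V_{I,J}(\ba)$ contain only vectors of the ``$f=g$'' shape $(0,f_1,\dots,f_{k-1},-f_1,\dots,-f_{k-1})$. To make this tractable the paper introduces a combinatorial decomposition of $(I,J)$ into \emph{minimal equal sub-pairs}—index sets $A\subseteq[2k-1]$ with $I_A=J_A$ as sets but no proper subset of $A$ having this property—and proves two separate facts. First (\Cref{prop:formal-det-1}), if $(I,J)$ has at most $k$ such sub-pairs then $\det V_{I,J}(\bX)\not\equiv 0$, and your Schwartz--Zippel argument goes through for these pairs. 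Second, when $(I,J)$ has $k+1$ or more sub-pairs (the regime of the counterexample above), the paper imposes an additional genericity condition on $\ba$: nonvanishing of the determinants of certain ``summed-Vandermonde'' $k\times k$ matrices $A_{I^1,\dots,I^k}(\ba)$ (\Cref{def:A-matrix}, \Cref{prop:formal-det-2}). \Cref{prop:two-conditions} then shows that this second condition forces the kernel of $V_{I,J}(\ba)$ to have the harmless form even though the determinant is zero. The final field-size bound comes from union-bounding over both families of polynomial conditions simultaneously (\Cref{prop:existance-first-cond} and \Cref{prop:existance-second-cond}). What your proposal is missing is precisely this second family of conditions and the combinatorial decomposition that separates the two regimes.
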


    As a corollary, we obtain the existence of the following fully anonymous ramp secret-sharing scheme. In the terminology of~\cite{yuval-anonymous}, this constitutes a perfect U-FASS scheme.
    \begin{cor} \label{cor:ano-sec-ramp}
        Let $k$ and $n$ be integers such that $2k - 1 \leq n$. There exists a fully anonymous $(k-1, 2k - 1, n)$ ramp secret-sharing scheme where the secret and the shares are elements of $\Fq$ for any prime $q$ such that $q \geq \OurfieldSize$.
    \end{cor}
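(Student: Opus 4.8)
The plan is to instantiate the scheme directly from the Reed--Solomon code produced by \Cref{thm:main}. Fix $q$ as in that theorem and let $\cC=\RS$ be an $[n,k]_q$ RS code that is robust against the $(n-2k+1)$-permutation-insdels adversary; I additionally insist that all evaluation points $\alpha_i$ are nonzero, which costs nothing in the field-size bound since the argument behind \Cref{thm:main} only needs $n$ distinct points and we may require them to avoid $0$. Now define $\textup{Share}(s)$ to sample a uniformly random $f\in\Fq^{<k}[X]$ with $f(0)=s$ and to give party $P_i$ the share $w_i=f(\alpha_i)\in\Fq$; thus $\mathcal{W}_i=\Fq$ and $u=q$. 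For reconstruction, given an ordered tuple $\bfv\in\Fq^{2k-1}$ of shares, $\textup{Recon}(\bfv)$ searches (by brute force) for a codeword $c\in\cC$ and an ordered set $J\subseteq[n]$ of size $2k-1$ with $c_J=\bfv$, recovers the unique polynomial $f$ of degree $<k$ with $c=(f(\alpha_1),\dots,f(\alpha_n))$, and outputs $f(0)$.

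I would then verify perfect anonymous reconstruction. Let $A$ be the ordered set of the $2k-1$ reconstructing parties, so the input is $\bfv=\textup{sh}_A=c_A$, where $c$ is the codeword of the sampled $f$. If some other codeword $c'\neq c$ and ordered set $J$ of size $2k-1$ also satisfied $c'_J=\bfv=c_A$, this would contradict \Cref{clm:correcting-cond} for the robustness of $\cC$ against the $(n-2k+1)$-permutation-insdels adversary. Hence $c$ is the unique codeword possessing an ordered length-$(2k-1)$ subsequence equal to $\bfv$, so $\textup{Recon}$ recovers $c$, then $f$ (RS encoding is injective), then $s=f(0)$, all without using the identities in $A$. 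In particular $H(S\mid\mathcal{V}_A)=0$ whenever $|A|\ge 2k-1$.

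For condition~\ref{item:full-anon-share} of \Cref{defi:full-anonymity} and the privacy clause of the ramp property, fix a secret $s$ and ordered sets $A,B\subseteq[n]$ with $|A|=|B|=\ell\le k-1$. Since $A$ is a set and the $\alpha_i$ are distinct and nonzero, the elements $0,\alpha_{A_1},\dots,\alpha_{A_\ell}$ are $\ell+1\le k$ distinct field elements, so the linear evaluation map $f\mapsto\bigl(f(0),f(\alpha_{A_1}),\dots,f(\alpha_{A_\ell})\bigr)$ from $\Fq^{<k}[X]$ to $\Fq^{\ell+1}$ is surjective. Therefore, for a uniform $f$ conditioned on $f(0)=s$, the tuple $\bigl(f(\alpha_{A_1}),\dots,f(\alpha_{A_\ell})\bigr)$ is uniform over $\Fq^{\ell}$ and independent of $s$. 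Consequently $\Pr_{\textup{sh}\leftarrow\textup{Share}(s)}[\textup{sh}_A=\bfv_B]=q^{-\ell}=u^{-|A|}$ for every $\bfv\in\mathcal{W}_1\times\cdots\times\mathcal{W}_n$, which is exactly condition~\ref{item:full-anon-share}; and the independence of $\mathcal{V}_A$ from $S$ gives $H(S\mid\mathcal{V}_A)=H(S)$ for all $|A|\le k-1$. Combined with the previous paragraph, this establishes every requirement of a fully anonymous $(k-1,2k-1,n)$ ramp scheme over $\Fq$ (the boundary case $2k-1=n$, where the reconstructing set is all of $[n]$, is handled by the same argument).

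The only genuine subtleties — and neither is deep — are (i) arranging the evaluation points to be nonzero, so that no single share equals the secret and the value $0$ is still free to encode it, and (ii) observing that well-definedness of $\textup{Recon}$ rests entirely on the equivalent characterization of \Cref{clm:correcting-cond}. All the real difficulty lives in \Cref{thm:main}; the corollary amounts to matching the robustness guarantee to the two clauses of \Cref{defi:full-anonymity}.
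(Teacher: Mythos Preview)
Your proposal is correct and follows essentially the same route as the paper: instantiate the scheme from the RS code of \Cref{thm:main}, use \Cref{clm:correcting-cond} to get perfect anonymous reconstruction, and use the interpolation/surjectivity of the evaluation map at $\le k$ distinct points to get share anonymity. If anything, you are slightly more careful than the paper in two places: you explicitly treat all sizes $\ell\le k-1$ (the paper only spells out $\ell=k-1$), and you flag the need for the evaluation points to be nonzero so that $0$ remains available to encode the secret.
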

    
    \begin{proof}
        Let $\alpha_1, \ldots, \alpha_n$ be nonzero elements such that $\RS$ is robust against the $(n - 2k +1)$-permutation-insdel adversary.
        Consider the following scheme. Let $s\in \Fq$ be a secret, and let $f\in \Fq[X]$ be a random polynomial of degree $<k$ such that $f(0) = s$. 
        Then the $i$-th participant gets the share $f(\alpha_i)$.
        We now verify that the conditions of \Cref{defi:full-anonymity} for a fully anonymous ramp scheme are satisfied. 
        We begin with Condition~\ref{item:full-anon-recon}. Since $\RS$ is robust against the $(n - 2k + 1)$-permutation-insdel adversary, by \Cref{clm:correcting-cond}, for every distinct-element sequence $I \in [n]^{2k-1}$, the tuple $(f(\alpha_{I_1}), \ldots, f(\alpha_{I_{2k-1}}))$ suffices to recover $f$ and thus also the secret $f(0)$. 
        
        We now show that the share anonymity condition (Condition~\ref{item:full-anon-share} in \Cref{defi:full-anonymity}) holds. Note that the domain of the shares is $\F_q$.
        Fix $\bfv\in \Fq ^{\ell}$ for $\ell\leq k-1$ and distinct-element sequence $I \in [n]^{\ell}$.
        It holds that 
        \[\Pr_{\mathrm{sh}\leftarrow \mathrm{Share}(s)}[\text{sh}_I = \bfv] = \Pr_{\substack{f\leftarrow \Fq [x]\\ \deg(f)< k, f(0) = s}}[(f(\alpha_{I_1}),\ldots f(\alpha_{I_{\ell}})) = \bfv] = \frac{q^{k-\ell-1}}{q^{k-1}}=q^{-\ell}\;.
        \]
    \end{proof}

    \begin{remark} \label{rem:gap}
        We note that the gap between $k-1$ and $2k-1$ in \Cref{cor:ano-sec-ramp} is necessary and cannot be improved. Indeed, the half-Singleton bound \cite{abdel2007linear,cheng2020efficient} states that any $[n, k]$ linear code can correct at most $n - 2k + 1$ deletions (in particular, any linear code correcting even one insdel error must have rate at most $1/2$). Thus, one cannot reduce the reconstruction threshold of our construction below $2k-1$ as it would immediately imply that the respective RS code can correct more than $n-2k +1$ insdel errors. 
    \end{remark}

    \paragraph{Explicit schemes.}
    \Cref{thm:main} and \Cref{cor:ano-sec-ramp} are both existential results. Namely, we do not specify the evaluation points for our RS codes or secret-sharing schemes.
    We show that the constructions presented in \cite{con2023optimal} and \cite{con2023reed} for RS codes correcting insdel codes are, in fact, RS codes that are robust against the permutation-insdel adversary. 

    The first scheme is borrowed from \cite{con2023optimal}. With slight modifications in the proof of \cite{con2023optimal}, we show that it is in fact the two-dimensional RS code in \cite{con2023optimal} is robust against the $(n-3)$-permutation-insdel adversary,  thereby yielding a fully anonymous ramp secret-sharing scheme.
    \begin{cor}[see \Cref{thm:cnst-k-2}] \label{cor:cnst-k-2}
        Let $n \geq 3$. There is an explicit fully anonymous $(1,3,n)$ ramp secret-sharing scheme where the secret and the shares are elements of $\Fq$ for $q = O(n^3)$.
    \end{cor}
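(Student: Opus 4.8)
The plan is to invoke \Cref{cor:ano-sec-ramp} with $k=2$ once we exhibit an explicit $[n,2]$ RS code that is robust against the $(n-3)$-permutation-insdels adversary, and then track the field size. The claimed threshold gap $(1,3,n)$ is exactly the $(k-1,2k-1,n) = (1,3,n)$ instance, and by \Cref{clm:correcting-cond} robustness against the $(n-3)$-permutation-insdels adversary is equivalent to the statement that for all distinct codewords $c,c'$ and all ordered sets $I,J \subseteq [n]$ of size $3$ we have $c_I \neq c_J'$. So everything reduces to verifying this length-$3$ matching condition for the specific evaluation points used in \cite{con2023optimal}.

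First I would recall the $O(n^3)$-field construction of \cite{con2023optimal} for two-dimensional RS codes correcting $n-3$ (ordinary) insdel errors, which amounts to a choice of distinct nonzero evaluation points $\alpha_1,\dots,\alpha_n$ satisfying a set of algebraic non-degeneracy constraints on triples (and pairs) of indices. For ordinary insdel errors, \Cref{lem:code-lcs} tells us the relevant condition is on \emph{increasing} index triples $I_1<I_2<I_3$, $J_1<J_2<J_3$; here, because the adversary may first permute, we must additionally handle \emph{all} orderings of the chosen $3$-subsets, as isolated in the proof of \Cref{clm:correcting-cond}. Writing out $c_I = (f(\alpha_{I_1}),f(\alpha_{I_2}),f(\alpha_{I_3}))$ for $f(x) = a + bx$ with $(a,b)\neq(a',b')$, the equality $c_I = c_J'$ becomes a linear system in $(a,b,a',b')$; I would argue that $c_I = c_J'$ forces, after eliminating, a polynomial identity of bounded degree in the $\alpha$'s indexed by the (unordered) $6$-tuple of indices involved, and that the construction of \cite{con2023optimal} already guarantees these identities fail — or, where a new ordering introduces a genuinely new constraint not covered there, that the same Schwartz–Zippel / polynomial-method counting argument used in \cite{con2023optimal} still goes through over a field of size $O(n^3)$, since permitting all $3! = 6$ orderings multiplies the number of forbidden algebraic events only by a constant.

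Concretely, the key steps in order: (i) state the explicit evaluation points / defining conditions from \cite{con2023optimal} and the fact that they can be met with $q = O(n^3)$; (ii) fix distinct $c = (f(\alpha_i))_i$, $c' = (g(\alpha_i))_i$ with $\deg f,\deg g < 2$ and ordered triples $I,J$, and reduce $c_I = c_J'$ to a linear/polynomial condition on $\alpha_{I_1},\alpha_{I_2},\alpha_{I_3},\alpha_{J_1},\alpha_{J_2},\alpha_{J_3}$; (iii) check that every such condition — over all $6$ orderings of each triple — is among (or implied by) the non-degeneracy conditions already enforced in \cite{con2023optimal}, handling the degenerate sub-cases where $I$ and $J$ overlap or where $f$ or $g$ is constant separately; (iv) conclude via \Cref{clm:correcting-cond} that the code is robust against the $(n-3)$-permutation-insdels adversary, and then apply \Cref{cor:ano-sec-ramp} with $k=2$ to get the explicit fully anonymous $(1,3,n)$ ramp scheme with secret and shares in $\F_q$, $q = O(n^3)$. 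I would present the detailed verification as a theorem (\Cref{thm:cnst-k-2}) and derive \Cref{cor:cnst-k-2} as an immediate consequence.

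The main obstacle I anticipate is step (iii): the analysis in \cite{con2023optimal} was carried out for the \emph{non-permuted} insdel channel, where only the case $I_1 < I_2 < I_3$ and $J_1 < J_2 < J_3$ matters, so I must make sure that allowing arbitrary reorderings of the two triples does not create an equality $c_I = c_J'$ that escapes their forbidden set — in particular the ``crossing'' orderings and the cases where $I = J$ as sets but with different orders (which cannot happen for the non-permuted channel but is exactly what anonymity stresses). I expect this to be resolvable because for $k=2$ the map $x \mapsto (f(x))$ at three points is governed by $2\times 2$ Vandermonde-type minors, so every reordering still yields a low-degree algebraic constraint on a $6$-element subset of the $\alpha_i$'s, and the union bound in \cite{con2023optimal} absorbs the extra constant factor; still, this case check is where the real work of the proof lies, and it is why I would route the argument through \Cref{clm:correcting-cond} rather than re-deriving it from scratch.
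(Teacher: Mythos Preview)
Your overall strategy is correct and matches the paper's: establish that the explicit $[n,2]$ RS code of \cite{con2023optimal} is robust against the $(n-3)$-permutation-insdels adversary (this is \Cref{thm:cnst-k-2}), and then feed that into the proof of \Cref{cor:ano-sec-ramp} with $k=2$ to obtain the fully anonymous $(1,3,n)$ scheme.

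Where you diverge from the paper is in \emph{how} you verify robustness. You plan a direct codeword-level case analysis via \Cref{clm:correcting-cond}, checking that every ordering of every pair of $3$-subsets yields an algebraic constraint already ruled out by (or absorbable into) the analysis of \cite{con2023optimal}. The paper instead routes through the structural machinery it has built: by \Cref{prop:two-conditions}, it suffices to verify Condition~\ref{item:1} (all relevant $V_{I,J}(\ba)$ are nonsingular) and Condition~\ref{item:2} (the $A$-matrix has full rank). For $k=2$ this is very clean: Condition~\ref{item:2} is automatic, since three pairwise disjoint nonempty sets summing to at most $3$ must all be singletons and the resulting $A$-matrix is Vandermonde; for Condition~\ref{item:1}, the paper observes that the proof in \cite{con2023optimal} already makes no use of the ordering of $I,J$, and then dispatches the one residual case (when $I,J$ agree on exactly two coordinates) by a one-line determinant computation. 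Your approach would arrive at the same place but with more ad hoc casework; the $V$/$A$-matrix decomposition is what lets the paper avoid enumerating orderings.

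One caution about your stated fallback. You write that if a new ordering introduced a constraint not covered by \cite{con2023optimal}, ``the same Schwartz--Zippel / polynomial-method counting argument \ldots\ still goes through over a field of size $O(n^3)$, since permitting all $3!$ orderings multiplies the number of forbidden algebraic events only by a constant.'' This is not a safe escape hatch here: the $O(n^3)$ construction of \cite{con2023optimal} is explicit and field-size optimal (matching the $\Omega(n^3)$ lower bound), so there is no slack to absorb extra events probabilistically. The argument must --- and does --- go through your first option: the explicit points of \cite{con2023optimal} already satisfy the needed nonvanishing for \emph{all} ordered pairs $(I,J)$, with the single extra ``two agreements'' case handled directly.
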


    The second scheme uses the evaluation points given in \cite[Construction 27]{con2023reed}. We prove the following result.
    \begin{cor}[see \Cref{thm:cnst-k-gen}] \label{cor:cnst-k-gen}
        Let $k$ and $n$ be integers such that $2k - 1 \leq n$. 
        There exists a deterministic construction of a fully anonymous $(k-1, 2k-1, n)$ ramp secret-sharing scheme where the secret and the shares are elements of $\Fq$ for $q=n^{O(k^2 ((2k)!))^2}$.
    \end{cor}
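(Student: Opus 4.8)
The plan is to reduce \Cref{cor:cnst-k-gen} to a pure statement about Reed--Solomon codes and then import, with the ``slight modifications'' alluded to in the paragraph preceding the statement, the analysis behind \cite[Construction 27]{con2023reed}. Concretely: by the construction carried out inside the proof of \Cref{cor:ano-sec-ramp}, it suffices to exhibit \emph{explicit} distinct, nonzero evaluation points $\alpha_1,\dots,\alpha_n$ over a field of size $q=n^{O(k^2((2k)!))^2}$ such that $\RS$ is robust against the $(n-2k+1)$-permutation-insdels adversary; one then feeds exactly these points into the $\textup{Share}/\textup{Recon}$ pair of \Cref{cor:ano-sec-ramp} (pick $f$ of degree $<k$ uniformly at random subject to $f(0)=s$, and give party $i$ the share $f(\alpha_i)$). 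Thus the whole task is to show that the evaluation points of \cite[Construction 27]{con2023reed}, after discarding $0$ and reindexing if necessary, already yield such a robust code.

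Next I would turn robustness into an algebraic condition. By \Cref{clm:correcting-cond}, $\RS$ is robust against the $(n-2k+1)$-permutation-insdels adversary if and only if for every pair of distinct $f,g\in\Fq^{<k}[X]$ and every pair of ordered sets $I,J\subseteq[n]$ with $|I|=|J|=2k-1$ we have $(f(\alpha_{I_1}),\dots,f(\alpha_{I_{2k-1}}))\neq(g(\alpha_{J_1}),\dots,g(\alpha_{J_{2k-1}}))$. Reading the collision $c_I=c'_J$ as a homogeneous linear system in the $2k$ coefficients of $(f,g)$, with coefficient matrix $\bigl[\,V_I \mid -V_J\,\bigr]$ of size $(2k-1)\times 2k$ (the $\ell$-th row of $V_I$ being $(1,\alpha_{I_\ell},\dots,\alpha_{I_\ell}^{k-1})$ in the order prescribed by $I$), robustness says precisely that this kernel contains no vector with $f\neq g$. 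When $I=J$ \emph{as ordered sets} this is automatic, since the system then forces $f-g$, a polynomial of degree $<k\le 2k-1$, to vanish at the $2k-1$ distinct points $\alpha_{I_\ell}$. For every other pair $(I,J)$ it becomes a genuine nonvanishing requirement on a fixed polynomial $D_{I,J}(\alpha_1,\dots,\alpha_n)$ assembled from the maximal minors of $\bigl[\,V_I\mid -V_J\,\bigr]$; one checks that $D_{I,J}$ is not identically zero and has degree $O(k^2)$.

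Finally I would upgrade \cite[Construction 27]{con2023reed}. Its insdel analysis follows exactly this route but restricted to \emph{increasing} tuples $I,J$ (equivalently, common subsequences): it selects the $\alpha_i$ one at a time, each chosen to avoid the roots of the finitely many nonzero degree-$O(k^2)$ polynomials $D_{I,J}$ that involve only the already-chosen coordinates. The sole difference in the permutation-insdels setting is that $I,J$ now range over \emph{all} ordered sets, i.e., over permuted increasing tuples: applying one common row permutation to $\bigl[\,V_I\mid -V_J\,\bigr]$ merely rescales $D_{I,J}$ by $\pm1$, while the genuinely new ``desynchronized'' pairs (say $I$ increasing but $J$ not, or $I,J$ differing as sets) contribute honestly new polynomials of the same shape and the same $O(k^2)$ degree. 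Their number is at most $\binom{n}{2k-1}^2\cdot((2k-1)!)^2\le\OurfieldSize$, so re-running the greedy selection of \cite[Construction 27]{con2023reed} over a field of size $n^{O(k^2((2k)!))^2}$ never gets stuck and outputs explicit evaluation points satisfying \emph{all} the required nonvanishings; excluding $0$ keeps them nonzero. Plugging these points into \Cref{cor:ano-sec-ramp} produces the claimed explicit fully anonymous $(k-1,2k-1,n)$ ramp scheme.

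I expect the main obstacle to lie in two bookkeeping points concealed in the last paragraph: (a) verifying that each \emph{new} polynomial $D_{I,J}$ coming from a desynchronized or different-as-sets pair is not identically zero, so that it can genuinely be avoided --- the only pair for which $f=g$ is forced by algebra alone is $I=J$ as ordered sets, so one must argue the remaining cases impose a real constraint; and (b) redoing the degree- and cardinality-estimates of \cite[Construction 27]{con2023reed} with the enlarged family to confirm that $q=n^{O(k^2((2k)!))^2}$ is genuinely sufficient for the iterative choice. Neither step is conceptually new, but both require re-entering the internals of that construction.
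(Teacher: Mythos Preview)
Your reduction to the RS statement via \Cref{cor:ano-sec-ramp} is correct, and so is the reformulation through \Cref{clm:correcting-cond}. The genuine gap is your obstacle (a), which you flag as bookkeeping but which is in fact the heart of the matter and is \emph{false} as you have stated it. The paper gives an explicit counterexample: for $k=4$, $I=(1,2,3,4,5,6,7)$ and $J=(2,1,4,3,5,6,7)$ (distinct as ordered sets, agreeing on only $k-1=3$ coordinates), the determinant $\det V_{I,J}(\bX)$ vanishes \emph{identically} as a multivariate polynomial. So it is simply not true that every desynchronized pair imposes a nonzero polynomial constraint that can be avoided; for many pairs there is no such polynomial, and one must instead argue directly that the kernel of $V_{I,J}(\ba)$ contains only the harmless vectors $(0,f_1,\dots,f_{k-1},-f_1,\dots,-f_{k-1})$. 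The paper handles this by introducing a decomposition of $(I,J)$ into \emph{minimal equal sub-pairs}: when there are at most $k$ of them, $\det V_{I,J}(\bX)$ is provably a nonzero polynomial (a genuine extension of \Cref{nonzero}); when there are $k+1$, the determinant may vanish identically, but a \emph{second} family of full-rank conditions on auxiliary matrices $A_{I^1,\dots,I^{k+1}}(\ba)$ (built from power sums over the sub-pairs) forces the kernel to be harmless. Both families of conditions are then verified for the explicit points $\alpha_i=(\gamma-i)^\ell$ via the Mason--Stothers argument of \cite{con2023reed}. None of this structure is visible in your plan.

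A smaller but related issue: \cite[Construction 27]{con2023reed} is not a greedy root-avoidance procedure. It fixes $\alpha_i=(\gamma-i)^{\ell}$ with $\ell=((2k)!)^2$ inside a degree-$k^2\ell$ extension of $\Fp$ and uses the polynomial abc theorem to certify the nonvanishings. The field size $q=n^{O(k^2((2k)!))^2}$ reflects that extension degree, not a count of constraints in a Schwartz--Zippel union bound; the latter is what gives the much smaller field in the existential \Cref{thm:main}, which is not deterministic.
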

    We remark here that only for $k = O(\log n/\log \log n)$, the number of bits required to represent a share (or the secret) is polynomial in $n$. For higher values of $k$, it is super-polynomial.
    
    \paragraph{Comparison with \cite{eldridge2024abuse}}
    We highlight several key differences between the scheme presented in this paper and that of~\cite{eldridge2024abuse}.
    \begin{enumerate}
        \item \emph{Anonymity condition.} This paper adopts the strongest notion of anonymity, wherein the shares of any unauthorized set are uniformly random and independent (condition~\ref{item:full-anon-share} in \Cref{defi:full-anonymity}). Our scheme achieves this property, while, as discussed above, the scheme of \cite{eldridge2024abuse} achieves a weaker notion of anonymity.
        \item \emph{Perfect vs. non-perfect reconstruction}. In~\cite{eldridge2024abuse}, reconstruction may fail with positive probability, whereas in our scheme, reconstruction always succeeds (information theoretically).
        \item \emph{Threshold vs. ramp scheme}. In \cite{eldridge2024abuse}, the scheme is 
        a threshold scheme. That is, any set of $k$ participants can recover the secret, while any set of $k - 1$ participants learns nothing about the secret.
        In our case, there is a factor-of-two gap between the number of participants who learn nothing about the secret ($k - 1$) and those who can reconstruct it ($2k - 1$).
        \item \emph{Size of a share.} In \cite{eldridge2024abuse}, to achieve a negligible reconstruction error, the size of a share is $\omega(\log(n))$, regardless of the reconstruction threshold. 
        In our case, the size of a share is $O(k\cdot \log(n))$ and so for constant $k$, our share size is asymptotically optimal up to constants~\cite{bogdanov2020threshold}.
        \item \emph{Explicitness}. Our main result, \Cref{cor:ano-sec-ramp}, is an existence result, as we do not specify the evaluation points that define the scheme. 
        However, in \Cref{cor:cnst-k-2} we provide an explicit fully anonymous $(1,3,n)$ ramp scheme over a field of size $O(n^3)$ and in \Cref{cor:cnst-k-gen} we provide a deterministic construction of a fully anonymous $(k-1, 2k-1, n)$ ramp scheme over a field of size $\approx n^{k^k}$.
        In \cite{eldridge2024abuse}, the authors provide a randomized scheme in the sense that the evaluation points are chosen uniformly at random from the field. 
        \item \emph{Reconstruction algorithm}. We do not provide an efficient reconstruction algorithm beyond brute-force search.
        In~\cite{eldridge2024abuse}, the scheme is both explicit and equipped with an efficient reconstruction algorithm.
    \end{enumerate}
    
    \subsection{Summary and future directions}
    The notion of fully anonymous secret-sharing schemes was recently studied in~\cite{eldridge2024abuse,yuval-anonymous}. One open question arising from these works—explicitly stated in~\cite{yuval-anonymous}—is the construction of threshold schemes that are fully anonymous and offer perfect reconstruction. 
    In this paper, we show how RS codes that are robust against an adversary that first permutes the codeword and then deletes symbols give rise to a fully anonymous ramp secret-sharing scheme with perfect reconstruction.
    
    Many open questions stem from the comparison made earlier between our work and that of~\cite{eldridge2024abuse}.
    First, can the gap between the reconstruction and anonymity thresholds be reduced—or even closed—while still ensuring perfect reconstruction?
    According to \Cref{rem:gap}, the construction presented in this paper is not a candidate for resolving this question.
    Second, a clear drawback of our scheme is the absence of an efficient reconstruction algorithm beyond brute-force search.
    Providing such an algorithm is a tantalizing open question. 
    Third, the field size in our existential results implies a share size of $O(k \cdot \log n)$ bits.
    This is optimal only when $k$ is constant relative to $n$, and reducing the alphabet size of the underlying scheme remains an open challenge. 
    
    Another interesting direction is to study, more generally, which codes that are robust against deletions and permutations—such as those constructed in~\cite{kovavcevic2018codes}—can be transformed into fully anonymous secret-sharing schemes.

\section{An algebraic condition}
This section is devoted to formulating an algebraic condition on the evaluation points that will ensure that the respective RS code is robust against $(n-2k+1)$-permutation-insdel adversary.
We start with recalling the main ideas and notations from \cite{con2023reed}.

\subsection{The framework of \cite{con2023reed}}	
	In this section, we recall the algebraic condition introduced in~\cite{con2023reed}.

    \begin{defi}[$V$-matrix]\label{matrix} 
        Let $X_1,\ldots,X_n$ be formal variables. For positive integers $k$ and two distinct-element sequences $I,J \in[n]^{2k-1}$, define the $(2k-1)\times(2k-1)$ matrix
        \begin{equation} \label{eq:V-matrix}
            V_{I,J}(X_1,X_2,\ldots,X_n):=\left(\begin{array}{ccccccc}1 & X_{I_1} & \cdots & X_{I_1}^{k-1} & X_{J_1} & \cdots & X_{J_1}^{k-1} \\1 & X_{I_2} & \cdots & X_{I_2}^{k-1} & X_{J_2} & \cdots & X_{J_2}^{k-1} \\ \vdots& \vdots & \ddots& \vdots & \vdots & \ddots & \vdots \\1 & X_{I_{2k-1}} & \cdots & X_{I_{2k-1}}^{k-1} & X_{J_{2k-1}} & \cdots & X_{J_{2k-1}}^{k-1}\end{array}\right),
        \end{equation}
    \end{defi}

    We say that a distinct-element sequence $I$ is an \emph{increasing distinct-element sequence} if $I_1 < \cdots < I_{|I|}$. We say that two sequences $I,J$ agree on a coordinate $\ell$, if $I_{\ell} = J_{\ell}$.
     The following algebraic condition was given in \cite{con2023reed}.
    
    \begin{prop}\cite[Proposition 14]{con2023reed}\label{bad}
	Consider the code $\RS$. If, for every    pair of increasing distinct-element sequences $I, J\in [n]^{2k-1}$ that agree on at most $k-1$ coordinates, it holds that $\det(V_{I, J}(\ba)) \neq 0$, then the code can correct any $n-2k+1$ insdel errors.
            
		Moreover, if the code can correct  any $n-2k+1$ insdel errors, then the only vectors in the right kernel of  $V_{I, J}(\ba)$ are of the form $(0,f_1,\ldots,f_{k-1},-f_1,\ldots,-f_{k-1})$.
\end{prop}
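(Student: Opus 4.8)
The plan is to reduce both claims to the LCS characterization of insdel-correcting codes, \Cref{lem:code-lcs}: since here $\delta n = n-2k+1$ and hence $n-\delta n-1 = 2k-2$, the code $\RS$ corrects $n-2k+1$ insdel errors if and only if every two distinct codewords have a longest common subsequence of length at most $2k-2$. I will pass back and forth between a common subsequence of length $2k-1$ and a nonzero vector in the kernel of a $V$-matrix.

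For the first claim I would argue the contrapositive. Assume $\RS$ does not correct $n-2k+1$ insdel errors; then there are distinct $f,g\in\Fq^{<k}[X]$ whose evaluation vectors share a common subsequence of length $2k-1$, i.e.\ there are increasing ordered sets $I=\{I_1<\cdots<I_{2k-1}\}$ and $J=\{J_1<\cdots<J_{2k-1}\}$ in $[n]$ with $f(\alpha_{I_\ell})=g(\alpha_{J_\ell})$ for all $\ell\in[2k-1]$. Writing $f=\sum_{i=0}^{k-1}f_iX^i$ and $g=\sum_{i=0}^{k-1}g_iX^i$ and collecting the constant terms, the $\ell$-th equality becomes $(f_0-g_0)+\sum_{i=1}^{k-1}f_i\alpha_{I_\ell}^i+\sum_{i=1}^{k-1}(-g_i)\alpha_{J_\ell}^i=0$; this says exactly that $(f_0-g_0,\,f_1,\dots,f_{k-1},\,-g_1,\dots,-g_{k-1})\in\ker V_{I,J}(\ba)$. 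The vector is nonzero, since if it vanished then $f$ and $g$ would both be the constant $f_0=g_0$, contradicting $f\neq g$; hence $\det V_{I,J}(\ba)=0$. It remains to observe that $I$ and $J$ agree on at most $k-1$ coordinates: every $\ell$ with $I_\ell=J_\ell$ makes $\alpha_{I_\ell}$ a root of the nonzero polynomial $f-g$ of degree $<k$, and distinct such $\ell$ yield distinct roots because $\ell\mapsto I_\ell$ is injective and the $\alpha_i$ are distinct. This contradicts the standing hypothesis that $\det V_{I,J}(\ba)\neq0$ for all such pairs, which proves the first claim.

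For the ``moreover'' part I would run the same computation backwards. Assume $\RS$ corrects $n-2k+1$ insdel errors, fix increasing ordered sets $I,J$ of size $2k-1$, and let $v=(v_0,\,f_1,\dots,f_{k-1},\,-g_1,\dots,-g_{k-1})\in\ker V_{I,J}(\ba)$, after relabelling the last $k-1$ entries with a sign. Setting $F(X):=v_0+\sum_{i=1}^{k-1}f_iX^i$ and $G(X):=\sum_{i=1}^{k-1}g_iX^i$, both of degree $<k$, the kernel relations read precisely $F(\alpha_{I_\ell})=G(\alpha_{J_\ell})$ for all $\ell\in[2k-1]$, so the codewords of $F$ and $G$ share a common subsequence of length $2k-1>2k-2$. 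By \Cref{lem:code-lcs} these codewords must coincide, and since $n\geq 2k-1\geq k$ while $\deg F,\deg G<k$, this forces $F=G$ as polynomials; comparing constant terms gives $v_0=0$ and comparing the remaining coefficients gives $f_i=g_i$ for all $i$, so $v=(0,f_1,\dots,f_{k-1},-f_1,\dots,-f_{k-1})$.

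The bulk of the work---matching the evaluation identities with membership in $\ker V_{I,J}(\ba)$ once the column order of \Cref{matrix} is fixed---is routine. I expect the one genuinely delicate point to be the counting step in the first claim: recognizing that a coordinate on which the two \emph{ordered} sets $I$ and $J$ agree forces a common root of $f-g$, which is exactly what bounds the number of such coordinates by $k-1$ and lets the hypothesis be invoked. Keeping the distinction between ``$I$ and $J$ agree on coordinate $\ell$'' (i.e.\ $I_\ell=J_\ell$) and ``$I$ and $J$ share a value at different positions'' is where one must be careful.
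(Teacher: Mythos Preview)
Your proof is correct and follows the same strategy the paper uses for the closely related \Cref{prop:algebraic-condition}: pass between a length-$(2k-1)$ common subsequence of two codewords and a nonzero kernel vector of the corresponding $V$-matrix, arguing each direction by contrapositive. Note that this proposition is only cited in the present paper (the proof lives in \cite{con2023reed}), but your argument matches that template; the one extra ingredient specific to the insdel (non-permuted) setting---bounding the number of coordinates on which $I$ and $J$ agree by $k-1$ via roots of $f-g$---is handled correctly.
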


    Recall that $\bX$ and $\ba$ refer to $(X_1, \ldots, X_n)$ and $(\alpha_1, \ldots, \alpha_n)$, respectively.
    A key technical lemma concerning the $V$-matrices, proved in~\cite{con2023reed}, is as follows.
\begin{prop}\cite[Proposition 18]{con2023reed}\label{nonzero}
    Let $I, J \in[n]^{2k-1}$ be two increasing distinct-element sequences that agree on at most $k-1$ coordinates.  
    Then, we have 
    \[
    \det\left(V_{I,J}(\bX)\right)\neq 0
    \] 
    as a multivariate polynomial in $\F_q[X_1,X_2,\ldots,X_n]$.
\end{prop}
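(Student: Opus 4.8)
The plan is to prove that $\det(V_{I,J}(\bX))$ is not the zero polynomial by exhibiting a single monomial that occurs in it with coefficient $\pm 1$ (hence nonzero in $\F_q$, whatever the characteristic). The starting point is a generalized Laplace expansion of the $(2k-1)\times(2k-1)$ determinant along the two natural blocks of columns of the $V$-matrix: block $A$, consisting of the $k$ columns carrying the powers $X_{I_\bullet}^{0},X_{I_\bullet}^{1},\dots,X_{I_\bullet}^{k-1}$, and block $B$, consisting of the $k-1$ columns $X_{J_\bullet}^{1},\dots,X_{J_\bullet}^{k-1}$. Each maximal minor is then a Vandermonde-type determinant, and the expansion reads
\[
\det(V_{I,J}(\bX))=\sum_{\substack{R\subseteq[2k-1]\\ |R|=k}}\varepsilon_R\cdot\Big(\prod_{\substack{\ell<\ell'\\ \ell,\ell'\in R}}(X_{I_{\ell'}}-X_{I_\ell})\Big)\cdot\Big(\prod_{\ell\notin R}X_{J_\ell}\Big)\cdot\Big(\prod_{\substack{\ell<\ell'\\ \ell,\ell'\notin R}}(X_{J_{\ell'}}-X_{J_\ell})\Big),
\]
with $\varepsilon_R\in\{\pm1\}$ the Laplace signs (the $B$-minor contributes an extra factor $\prod_{\ell\notin R}X_{J_\ell}$ because its columns start at the first power). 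Since the entries of $I$ are distinct, and likewise for $J$, every individual summand is a nonzero polynomial, so the only danger is cancellation between distinct sets $R$.

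To control this cancellation I would equip $\F_q[X_1,\dots,X_n]$ with the monomial order induced by the weights $w(X_m)=N^m$ for a sufficiently large integer $N$. Because $I$ and $J$ are increasing, for $\ell<\ell'$ one has $I_\ell<I_{\ell'}$, so the heaviest monomial of $X_{I_{\ell'}}-X_{I_\ell}$ is $X_{I_{\ell'}}$ with coefficient $+1$, and similarly in block $B$; hence the heaviest monomial of the $R$-th summand is a \emph{single} monomial, occurring with coefficient $\varepsilon_R$, and of weight
\[
f(R)=\sum_{\ell\in R}\big(\rho_R(\ell)-1\big)\,N^{I_\ell}\;+\;\sum_{\ell\notin R}\rho_{[2k-1]\setminus R}(\ell)\,N^{J_\ell},
\]
where $\rho_S(\ell)=|\{\,s\in S:\ s\le\ell\,\}|$ denotes the rank of $\ell$ inside $S$. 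It therefore suffices to show that $f$ attains its maximum over the $k$-subsets $R$ of $[2k-1]$ at a \emph{unique} set $R^\star$: in that case the weight-$f(R^\star)$ monomial of the $R^\star$-th summand is strictly heavier than every monomial appearing in any other summand, so it cannot be cancelled and survives in $\det(V_{I,J}(\bX))$ with coefficient $\varepsilon_{R^\star}=\pm1$.

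The step I expect to be the real obstacle is precisely this uniqueness claim, and it is here that the hypothesis ``$I,J$ agree on at most $k-1$ coordinates'' must enter: when $I=J$ the matrix has two identical column blocks and $\det(V_{I,J}(\bX))\equiv 0$, and in that case the maximum of $f$ is attained at several sets whose heaviest monomials cancel in pairs, so no argument that ignores the hypothesis can work. To prove uniqueness I would run a greedy/exchange argument on the indices from largest to smallest — the index $m$ contributes to $f(R)$ with multiplicity $\rho_R(\ell)-1$ when $m=I_\ell$ and $\ell\in R$, and with multiplicity $\rho_{[2k-1]\setminus R}(\ell)$ when $m=J_\ell$ and $\ell\notin R$ — and show that any $R$ not obtained by repeatedly assigning the largest available index the largest available multiplicity admits a swap that strictly increases $f$. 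A genuine tie between two greedy optima then forces equality of the full multiplicity profiles, and a short counting step, using that $I$ and $J$ are increasing (so their value sets interleave in a constrained way), shows that this can happen only if $I$ and $J$ coincide in at least $k$ positions, contradicting the hypothesis.

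An alternative to the whole approach is induction on $k$: delete one coordinate at which $I,J$ agree and one at which they disagree, and expand $\det(V_{I,J})$ by a $2\times2$ Laplace step along the two top-degree columns $X_{I_\bullet}^{k-1},X_{J_\bullet}^{k-1}$. The reduced pair $I',J'$ of length $2k-3$ is then admissible in dimension $k-1$ (it still agrees in at most $k-2$ places), so by induction each surviving $(k-1)$-dimensional minor is nonzero; the non-cancellation of the resulting sum of terms $\pm(X_{I_{\ell_1}}^{k-1}X_{J_{\ell_2}}^{k-1}-X_{I_{\ell_2}}^{k-1}X_{J_{\ell_1}}^{k-1})\cdot\det(V_{I',J'})$ again comes down to the same weighted leading-monomial bookkeeping, so this route does not avoid the combinatorial core — it only repackages it.
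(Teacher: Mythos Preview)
The paper's proof (which is that of \cite{con2023reed}) is precisely the inductive route you mention as an ``alternative,'' but it executes it differently and thereby avoids the combinatorial core you believe is unavoidable. The idea you are missing is this: one does \emph{not} expand along the two top-degree columns and then try to control the resulting sum over all row-pairs. Instead, one first picks a single pair of row indices $i\neq j$ such that the variable $X_{I_i}$ occurs only in row~$i$ and the variable $X_{J_j}$ occurs only in row~$j$ and only in the $J$-block. Concretely: if some coordinate agrees, take $i$ with $I_i=J_i$ (then $X_{I_i}=X_{J_i}$ sits only in row $i$) and any $j$ with $J_j\notin I$ (such $j$ exists since increasing $I,J$ with at most $k-1$ agreements cannot be equal as sets); if no coordinate agrees, take $i\neq j$ with $I_i\notin J$ and $J_j\notin I$. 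With this choice, any Leibniz monomial divisible by $X_{I_i}^{k-1}X_{J_j}^{k-1}$ is \emph{forced} to use the entries $(i,k)$ and $(j,2k-1)$---there is no other way to produce those powers. Factoring them out therefore leaves exactly one complementary minor $\det(V_{I',J'})$, with $I',J'$ obtained by deleting rows $i,j$ and columns $k,2k-1$; there is no sum, hence no cancellation to analyze. The uniquely-obtained monomial supplied by induction for $V_{I',J'}$, multiplied by $X_{I_i}^{k-1}X_{J_j}^{k-1}$, is then uniquely obtained in $\det(V_{I,J})$. Checking that $(I',J')$ agree on at most $k-2$ coordinates is immediate from the choice of $i,j$.

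As for your primary approach, the block Laplace expansion and the weight $w(X_m)=N^m$ are set up correctly, and the reduction to uniqueness of the maximizer $R^\star$ of $f$ is sound. But the argument you offer for that uniqueness---``a greedy/exchange argument'' followed by ``a short counting step''---is only a sketch; it is exactly the place where the hypothesis on $I,J$ must do work, and you have merely asserted that it does. Perhaps it can be completed, but as it stands this is a genuine gap, and the inductive argument above makes the whole weight-function machinery unnecessary.
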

    Informally speaking, using this proposition and considering all relevant $V$-matrices, the authors of~\cite{con2023reed} applied the Schwarz–Zippel–DeMillo–Lipton lemma~\cite{schwartz1980fast,zippel1979probabilistic,demillo1978probabilistic} that there exists an assignment $\ba=(\alpha_1, \ldots, \alpha_n)$ over a sufficiently large field for which $\det (V_{I, J}(\ba))$ is nonzero for all pairs of increasing distinct-element sequences $I,J$ of length $2k-1$ that agree on at most $k-1$ coordinates.

    \subsection{A sufficient and necessary condition}
    One might ask whether the proof in~\cite{con2023reed} for the existence of $[n,k]$ RS codes that can correct $n - 2k +1$ insdel also implies the existence of $[n,k]$ RS codes that can correct an arbitrary permutation followed by $n - 2k + 1$ insdel errors, i.e., that are robust against the $(n - 2k + 1)$-permutation-insdel adversary. As shown in the following remark, the proof in~\cite{con2023reed} fails when permutations are also considered.

    \begin{remark}
         We note that \Cref{nonzero} is not true if we remove the requirement that $I,J$ are increasing distinct-element sequences. Indeed, let $k = 4$, $\ell = 7$, and define $I = (1,2,3,4,5,6,7)$ and $J = (2,1,4,3,5,6,7)$. Note that they agree on the three last coordinates (thus, they agree on $\leq k-1$ coordinates). In this case, $V_{I,J}$ is
         \[
         \left(\begin{array}{ccccccc}1 & X_{1} &  X_{1}^2 & X_{1}^{3} & X_{2} & X_{2}^2 & X_{2}^{3} \\
         1 & X_{2} &  X_{2}^2 & X_{2}^{3} & X_{1} & X_{1}^2 & X_{1}^{3} \\
         1 & X_{3} &  X_{3}^2 & X_{3}^{3} & X_{4} & X_{4}^2 & X_{4}^{3} \\
         1 & X_{4} &  X_{4}^2 & X_{4}^{3} & X_{3} & X_{3}^2 & X_{3}^{3} \\
         1 & X_{5} &  X_{5}^2 & X_{5}^{3} & X_{5} & X_{5}^2 & X_{5}^{3} \\
         1 & X_{6} &  X_{6}^2 & X_{6}^{3} & X_{6} & X_{6}^2 & X_{6}^{3} \\
         1 & X_{7} &  X_{7}^2 & X_{7}^{3} & X_{7} & X_{7}^2 & X_{7}^{3} \\
         \end{array}\right),
        \]
        and one can verify by basic column and row operations that the determinant is zero. Thus, in order to show robustness against the $(n-2k+1)$-permutation-insdel adversary, we also need to study singular $V_{I,J}(\bX)$ since we have to support pairs of distinct-element sequences for which $V_{I,J}(\bX)$ is singular. 
    \end{remark}
    
    We will have to make sure that our evaluation points $\ba$ are such that the right kernel of $V_{I,J}(\ba)$ is trivial or contains only elements of the form $(0,f_1,\ldots,f_{k-1},-f_1,\ldots,-f_{k-1})$.
    Formally, we have the following necessary and sufficient condition.

    \begin{prop} \label{prop:algebraic-condition}
    Let $\alpha_1, \ldots, \alpha_n\in \Fq$ be distinct points. The following are equivalent:
    \begin{enumerate}
        \item For every two distinct-element sequences $I,J\subset [n]$ of size $2k-1$ it holds that either $\det(V_{I,J} (\ba)) \neq 0$ or any vector in the kernel of $V_{I,J} (\ba)$ is of the form
        \[(0, f_1, \ldots, f_{k-1}, -f_1, \ldots, -f_{k-1}) \;.\] 
        \item The code $\RS$ is robust against the $(n-2k+1)$-permutation-insdel adversary. 
    \end{enumerate}
\end{prop}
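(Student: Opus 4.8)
The plan is to mimic the proof of \Cref{bad} from \cite{con2023reed} but now tracking the permutation structure carefully, using \Cref{clm:correcting-cond} as the bridge between the combinatorial ``robustness'' property and the linear-algebraic condition on the $V$-matrices. Recall that by \Cref{clm:correcting-cond}, the code $\RS$ is \emph{not} robust against the $n-2k+1$-permutation-insdels adversary if and only if there exist distinct polynomials $f,g\in\Fq^{<k}[X]$ and two ordered sets $I,J\subseteq[n]$ of size $2k-1$ with $c_I=c'_J$, where $c,c'$ are the codewords of $f,g$. So I would first show that such a collision $c_I=c'_J$ is equivalent to the existence of a nonzero vector in the kernel of $V_{I,J}(\ba)$ that is \emph{not} of the special antisymmetric form $(0,f_1,\dots,f_{k-1},-f_1,\dots,-f_{k-1})$.

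For the direction $(2)\Rightarrow(1)$ (contrapositive): suppose condition (1) fails, so some pair of ordered sets $I,J$ has $\det(V_{I,J}(\ba))=0$ with a kernel vector $v=(a_0,a_1,\dots,a_{k-1},b_1,\dots,b_{k-1})$ not of the special form. Writing $f(X)=\sum_{i=0}^{k-1}a_iX^i$ and $g(X)=-\left(a_0+\sum_{i=1}^{k-1}b_iX^i\right)$... wait, I need to set this up so that the kernel relation $V_{I,J}(\ba)v=0$ reads exactly as $f(\alpha_{I_m})+g(\alpha_{J_m})\cdot(\text{something})$; more precisely, the $m$-th row of $V_{I,J}(\ba)v=0$ says $\sum_{i=0}^{k-1}a_i\alpha_{I_m}^i + \sum_{i=1}^{k-1}b_i\alpha_{J_m}^i=0$, i.e. $f(\alpha_{I_m}) = -\left(a_0+\sum_{i=1}^{k-1}b_i\alpha_{J_m}^i\right) + a_0 = $ hmm, let me just define $f(X)=a_0+\sum_{i\ge1}a_iX^i$ and $h(X)=a_0 - \left(a_0+\sum_{i\ge1}b_iX^i\right)$ — the cleanest version: the relation gives $f(\alpha_{I_m}) = a_0 + \sum_{i\ge1}a_i\alpha_{I_m}^i$ and from the row equation $\sum_{i\ge1}a_i\alpha_{I_m}^i = -\sum_{i\ge1}b_i\alpha_{J_m}^i - a_0 + a_0$... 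I will just carefully choose $g$ so that $g(\alpha_{J_m}) = a_0 + \sum_{i\ge 1}b_i\alpha_{J_m}^i$ and the row equation becomes $f(\alpha_{I_m}) + g(\alpha_{J_m}) = 2a_0 + (\text{row sum}) $; the point is that after the substitution the row equation becomes $f(\alpha_{I_m}) = g(\alpha_{J_m})$ for all $m$, hence $c_I = c'_J$, and $f\ne g$ precisely because $v$ is not of the special antisymmetric form (the special form is exactly the locus where the induced $f$ and $g$ coincide). Then \Cref{clm:correcting-cond} gives that the code is not robust, i.e.\ $(2)$ fails. The direction $(1)\Rightarrow(2)$ is the same computation run backwards: if the code is not robust, \Cref{clm:correcting-cond} produces $I,J$ and $f\ne g$ with $c_I=c'_J$; reading off the coefficients of $f$ and $g$ produces a kernel vector of $V_{I,J}(\ba)$, and $f\ne g$ forces it not to be of the special form, contradicting $(1)$. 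One must also handle the book-keeping for ordered sets $I,J$ that are not in increasing order and that may share more than $k-1$ coordinates — but \Cref{clm:correcting-cond} is stated for arbitrary ordered sets, so no extra case analysis is needed there; if $I,J$ share $\ge k$ positions as \emph{sets} one gets $f=g$ automatically and this is consistent.

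The main obstacle I anticipate is getting the correspondence between kernel vectors and pairs $(f,g)$ exactly right, in particular pinning down why ``$v$ is of the special form $(0,f_1,\dots,f_{k-1},-f_1,\dots,-f_{k-1})$'' is \emph{precisely} the degenerate case $f=g$ (and, relatedly, why the leading coordinate must be $0$ in that case), and checking that the equivalence is not spoiled when $\det(V_{I,J}(\ba))\ne 0$ — in that case the kernel is trivial, no collision arises, and condition (1) is vacuously satisfied for that pair, which is the ``good'' situation. A secondary point of care is that the vector $v$ lives in $\Fq^{2k-1}$ and the map to $(f,g)\in(\Fq^{<k}[X])^2$ is essentially $v\mapsto$ (first $k$ coords give $f$; the shared constant plus the last $k-1$ coords give $g$), which is a linear isomorphism onto the subspace $\{(f,g): f(0)=g(0)\}$; I should state this cleanly so that the ``special form'' subspace maps exactly to $\{(f,g): f=g\}$. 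Once that dictionary is set up, both implications are immediate from \Cref{clm:correcting-cond}, so the proof is short modulo this linear-algebra translation.
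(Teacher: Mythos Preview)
Your approach is essentially identical to the paper's: both directions are proved by contrapositive via \Cref{clm:correcting-cond}, translating a collision $f(\alpha_{I_s})=g(\alpha_{J_s})$ into a kernel vector of $V_{I,J}(\ba)$ and back. The clean dictionary the paper uses (which resolves your stated obstacle) is: from $f\neq g$ build $\vec v=(f_0-g_0,f_1,\dots,f_{k-1},-g_1,\dots,-g_{k-1})\in\ker V_{I,J}(\ba)$, and conversely from a kernel vector $(v_0,v_1,\dots,v_{k-1},-w_1,\dots,-w_{k-1})$ set $f(X)=\sum_{i=0}^{k-1}v_iX^i$ and $g(X)=\sum_{i=1}^{k-1}w_iX^i$ (zero constant term), so that ``$\vec v$ has the special form'' is exactly $f=g$.
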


\begin{proof}
    $(1) \rightarrow (2)$. We prove the contrapositive, i.e., $\neg(2) \rightarrow \neg (1)$. Assume that the code $\mathsf{RS}_{n,k}(\alpha_1, \ldots, \alpha_n)\subseteq\F_q^n$ is not robust  against the $(n-2k+1)$-permutation-insdel adversary. Then, by \Cref{clm:correcting-cond}, there exist two distinct codewords $c$ and $c'$ that correspond to two distinct polynomials $f=\sum_if_ix^i$ and $g=\sum_ig_ix^i$, respectively, where $\deg(f),\deg(g)<k$, and two distinct-element sequences $I, J\in [n]^{2k-1}$ such that 
    \begin{equation}\label{f1f2}
        f(\alpha_{I_s})=g(\alpha_{J_s}) \quad \text{for } s=1,2,\dots,2k-1 \;,
    \end{equation}

    Consider the vector 
    \[
    \bfv:=\left(f_0-g_0 ,f_1,f_2,\dots,f_{k-1},-g_1,-g_2,\dots,-g_{k-1}\right)\in\F_q^{2k-1} \;,
    \]
    which is non-zero and not of the form $(0, f_1, \ldots, f_{k-1}, -f_1, \ldots, -f_{k-1})$ but is inside the kernel of $V_{I,J}(\alpha)$. 

    $(2) \rightarrow (1)$. Again, we prove it in the contrapositive way. Assume there exist two distinct-element sequences $I, J\in [n]^{2k-1}$ such that the right kernel of $V_{I,J}(\ba)$ contains 
    \[
    \bfv = (f_0, f_1, \ldots, f_{k-1}, -g_1, \ldots, -g_{k-1})\;,
    \]
    where either $f_0 \neq 0$ or there exists $i\in [k-1]$ such that $f_i \neq g_i$. In particular, this implies that the polynomials $f = \sum_{i=0}^{k-1} f_i x^i$ and $g = \sum_{i=1}^{k-1} g_i x^i$ are distinct and that $f(\alpha_{I_s}) = g(\alpha_{J_s})$ for all $s\in [2k-1]$. Since both $f$ and $g$ are polynomials of degree $<k$, the corresponding two codewords $c$ and $c'$ are such that $c_I = c'_J$, which implies by \Cref{clm:correcting-cond} that $\RS$ is not robust against the $(n-2k+1)$-permutation-insdel adversary.
\end{proof}

\section{Algebraic structure and refined conditions on the evaluation points}

In this section, we state two conditions for the evaluation points of the RS code. 
Then, in the next section, we will show that if these two conditions hold, then the resulting RS code is robust against the $(n-2k+1)$-permutation-insdel adversary.
We begin by citing the well-known Schwarz-Zippel-Demillo-Lipton lemma. 
\begin{lemma} \cite{schwartz1980fast,zippel1979probabilistic,demillo1978probabilistic}
    Let $P(X_1, \ldots, X_n) \in \Fq[X_1, \ldots, X_n]$ be a nonzero polynomial of total degree $d\geq 0$. Let $S \subseteq \Fq$, then $\Pr_{\alpha_1, \ldots, \alpha_n \sim \Fq} [P(\alpha_1, \ldots, \alpha_n)]\leq d/|S|$. \label{lem:sz-lemma}
\end{lemma}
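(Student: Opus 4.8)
The plan is to prove \Cref{lem:sz-lemma} — that $\Pr_{\alpha_1,\ldots,\alpha_n \sim S}[P(\alpha_1,\ldots,\alpha_n)=0] \le d/|S|$ for every nonzero $P \in \Fq[X_1,\ldots,X_n]$ of total degree $d$ and every nonempty $S \subseteq \Fq$ — by induction on the number of variables $n$. For the base case $n=1$, a nonzero univariate polynomial of degree $d$ over a field has at most $d$ roots; in particular at most $d$ of the $|S|$ choices of $\alpha_1 \in S$ are roots (and when $d=0$ the polynomial is a nonzero constant, so there are no roots at all). Hence the probability in question is at most $d/|S|$.

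For the inductive step, I would assume the statement for all nonzero polynomials in $n-1$ variables and take $P$ nonzero of total degree $d$ in $X_1,\ldots,X_n$. Collecting powers of $X_n$, write $P = \sum_{i=0}^{k} X_n^{\,i}\, P_i(X_1,\ldots,X_{n-1})$, where $k$ is chosen maximal subject to $P_k \not\equiv 0$. Since every monomial of $P_k$ is multiplied by $X_n^{k}$ inside $P$, and $P$ has total degree $d$, we get $\deg P_k \le d-k$. Now draw $\alpha_1,\ldots,\alpha_{n-1},\alpha_n$ uniformly and independently from $S$, and condition on the partial assignment $\alpha_1,\ldots,\alpha_{n-1}$, splitting into two events. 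Let $E$ be the event $P_k(\alpha_1,\ldots,\alpha_{n-1}) = 0$; by the induction hypothesis applied to the nonzero polynomial $P_k$ (in $n-1$ variables, degree $\le d-k$), we have $\Pr[E] \le (d-k)/|S|$. On the complement $\neg E$, the leading coefficient $P_k(\alpha_1,\ldots,\alpha_{n-1})$ is nonzero, so $P(\alpha_1,\ldots,\alpha_{n-1},X_n)$ is a nonzero univariate polynomial in $X_n$ of degree exactly $k$, and by the base case the conditional probability over $\alpha_n$ that it vanishes is at most $k/|S|$.

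Combining the two cases by the law of total probability — bounding the probability of $\{P=0\}$ by $1$ on $E$ and by $k/|S|$ on $\neg E$, and using $\Pr[\neg E] \le 1$ — gives
\[
\Pr\bigl[P(\alpha_1,\ldots,\alpha_n)=0\bigr] \;\le\; \Pr[E] + \Pr[\neg E]\cdot\frac{k}{|S|} \;\le\; \frac{d-k}{|S|} + \frac{k}{|S|} \;=\; \frac{d}{|S|},
\]
which closes the induction. The argument is essentially routine; the only points needing a little care are the degree bookkeeping (checking $\deg P_k \le d-k$, so the induction applies with exactly the right bound) and the conditioning step (verifying that on $\neg E$ the univariate restriction in $X_n$ really is a nonzero polynomial of degree $k$, so that the base case can be invoked). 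I do not anticipate any genuine obstacle.
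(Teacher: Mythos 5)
Your proof is correct; it is the standard inductive argument for the Schwartz--Zippel--DeMillo--Lipton lemma. Note, though, that the paper does not prove this lemma at all — it simply cites \cite{schwartz1980fast,zippel1979probabilistic,demillo1978probabilistic} and uses it as a black box — so there is no paper proof to compare against. Your argument is the textbook one (induction on the number of variables, isolating the leading coefficient in $X_n$, union-bounding over the event that the leading coefficient vanishes and, on its complement, applying the univariate base case), and the degree bookkeeping ($\deg P_k \le d-k$, restriction has degree exactly $k$ when $P_k(\alpha_1,\ldots,\alpha_{n-1})\neq 0$) is handled correctly. One minor remark on the paper's statement rather than your proof: the displayed probability should read $\Pr_{\alpha_1,\ldots,\alpha_n\sim S}[P(\alpha_1,\ldots,\alpha_n)=0]$, i.e.\ the samples are drawn from $S$ and the event is that $P$ vanishes; you have implicitly corrected both of these in your write-up.
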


\subsection{First condition}
\label{sec:first-cond}
We take a closer look at $\det(V_{I,J}(\bX))$ and identify the conditions under which it is nonzero. Then, we show that with high probability, if $\alpha_1, \ldots, \alpha_n$ are chosen uniformly at random from a sufficiently large field, it holds that $\det(V_{I,J}(\ba)) \neq 0$ whenever the conditions on $(I,J)$ are satisfied.

Note that we always write the elements of a set in increasing order. Thus, a set can be regarded as an increasing distinct-element sequence (and vice versa). Therefore, for brevity and to enable the use of standard set operations (such as inclusion and set difference), we will refer to these sequences simply as sets.

\begin{defi}
    Let $I\in[n]^s$ be a distinct-element sequence and let $A\subseteq [s]$ be a set. Then, $I_A$ denotes the distinct-element sequence $(I_{A_1}, I_{A_2}, \ldots, I_{A_{|A|}})$.    
\end{defi}

We now define the notion of a minimal equal sub-pair between two distinct-element sequences.
    \begin{defi}
        Let $(I,J)$ be a pair of distinct-element sequences of the same size, $s$. Let $A \subseteq [s]$ be a non-empty set of indices. We call the pair $(I_A, J_A)$ a \emph{minimal equal sub-pair} if
        \begin{itemize}
            \item $I_A = J_A$ as sets.
            \item For all $A' \subset A$, $I_{A'} \neq J_{A'}$ as sets.
        \end{itemize}
    \end{defi}
    \begin{example}
        Let $I = (1,2,3,7,8)$ and $J = (2,3,1,8,7)$. Then for $A_1 = \{1,2,3\}$ we see that $(I_{A_1}, J_{A_1})$ is a minimal equal sub-pair. In addition, $A_2 = \{4,5\}$ induces a minimal sub-pair since as sets $I_{A_2} = J_{A_2} = \{7,8\}$.

        For $I = (6,7,8,9,5)$ and $J = (7,6,9,8,1)$, we note that $A = \{1,2,3,4\}$ is not a minimal equal sub-pair since $A' = \{1,2\} \subset A$ is such that 
        $I_{A'} = J_{A'}$.
    \end{example}
    \begin{defi}
        A pair of distinct-element sequences $(I,J)$ of length $s$, is said to be \emph{free of equalities} if for every set $A \subseteq [s]$, $I_A \neq J_A$ as sets.  
    \end{defi}

    We now have the following simple claim.
    \begin{claim} \label{clm:no-intersection-eq-sub-pair}
        Let $(I,J)$ be a pair of distinct-element sequences. Let $A_1, A_2$ be two sets such that $(I_{A_1}, J_{A_1})$ and $(I_{A_2}, J_{A_2})$ are minimal equal sub-pairs. Then, $A_1 \cap A_2 = \emptyset$.
    \end{claim}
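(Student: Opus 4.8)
The plan is to prove \Cref{clm:no-intersection-eq-sub-pair} by contradiction: assume $A_1 \cap A_2 \neq \emptyset$ and derive a violation of minimality for one of the two sub-pairs. The guiding intuition is that ``$I_A = J_A$ as sets'' means $J_A$ is a permutation of $I_A$, i.e., there is a permutation $\sigma_A$ of the index set $A$ with $J_a = I_{\sigma_A(a)}$ for every $a \in A$; and minimality says $\sigma_A$ has no proper invariant subset (other than the empty set) — equivalently, the associated permutation acts without a proper nonempty sub-block, which is essentially the statement that $\sigma_A$ is a single cycle on $A$ (after identifying $A$ with the set of positions it occupies, via the increasing bijection). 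So the claim really says: two single cycles coming from the same pair $(I,J)$ must be on disjoint position-sets.

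First I would set up notation carefully. Since the elements of $I$ (and of $J$) are distinct, for any increasing ordered set $A$ with $I_A = J_A$ as sets, each value $v \in I_A$ appears exactly once among $\{I_a : a \in A\}$ and exactly once among $\{J_a : a \in A\}$; define $\sigma_A : A \to A$ by the rule that $J_a = I_{\sigma_A(a)}$. I would then reformulate minimality: $(I_A, J_A)$ is a minimal equal sub-pair iff $\sigma_A$ has no proper nonempty invariant subset $A' \subsetneq A$, because $I_{A'} = J_{A'}$ as sets is exactly the statement that $\{I_a : a \in A'\} = \{J_a : a \in A'\} = \{I_{\sigma_A(a)} : a\in A'\}$, i.e. $\sigma_A(A') = A'$. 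Hence minimality $\iff$ $\sigma_A$ acts transitively on $A$ (is a single cycle).

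Next, the key observation: $\sigma_{A_1}$ and $\sigma_{A_2}$ are both restrictions of the same global data. Concretely, if $a \in A_1 \cap A_2$, then $J_a = I_{\sigma_{A_1}(a)}$ and also $J_a = I_{\sigma_{A_2}(a)}$, so $\sigma_{A_1}(a) = \sigma_{A_2}(a)$ (using distinctness of the $I$-values). So on the overlap the two permutations agree. Now take $A := A_1 \cap A_2 \neq \emptyset$ and consider what happens inside $A_1$: since $\sigma_{A_1}$ is a single cycle on $A_1$ and $\sigma_{A_1}$ restricted to $A$ coincides with $\sigma_{A_2}|_A$ which maps $A$ into $A$ (as $\sigma_{A_2}$ fixes $A_2 \supseteq A$ and $\sigma_{A_1}$ fixes $A_1 \supseteq A$, the image of $A$ under $\sigma_{A_1}$ lies in $A_1 \cap A_2 = A$). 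Thus $A$ is a nonempty $\sigma_{A_1}$-invariant subset of $A_1$; minimality of $(I_{A_1}, J_{A_1})$ forces $A = A_1$, i.e. $A_1 \subseteq A_2$. Symmetrically $A_2 \subseteq A_1$, so $A_1 = A_2$ — but then they are not two sub-pairs in the relevant sense, or rather, if we only wanted disjointness and they coincide, we still contradict disjointness unless $A_1 = A_2$; I would state the claim as ``$A_1 = A_2$ or $A_1 \cap A_2 = \emptyset$,'' which is what the proof yields and what the subsequent uses (decomposing $(I,J)$ into a disjoint union of minimal equal sub-pairs plus a free-of-equalities remainder) actually need. If the paper genuinely wants strict disjointness for \emph{distinct} sub-pairs, the conclusion $A_1 = A_2$ already delivers the contrapositive.

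The main obstacle I anticipate is purely bookkeeping: making the reformulation ``minimal $\iff$ $\sigma_A$ is a single cycle'' airtight, in particular checking that $\sigma_A$ is genuinely well-defined as a map $A \to A$ (not just $A \to [n]$) — this is where one uses $I_A = J_A$ \emph{as sets} rather than merely $I_A, J_A$ being sequences with equal entry-sets — and handling the degenerate reading where the two sub-pairs are allowed to be equal. I would also double-check the direction ``$\sigma_{A_1}(A) \subseteq A$'': it needs both that $\sigma_{A_1}$ maps $A_1$ to $A_1$ and that on $A$ it agrees with $\sigma_{A_2}$ which maps $A_2$ to $A_2$; intersecting the two containments gives $\sigma_{A_1}(A) \subseteq A_1 \cap A_2 = A$. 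Everything else is a short contradiction argument, so the write-up should be a page or less.
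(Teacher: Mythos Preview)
Your proof is correct but follows a genuinely different route from the paper's. You recast the problem in terms of the permutation $\sigma_A$ on $A$ defined by $J_a = I_{\sigma_A(a)}$, observe that minimality of $(I_A,J_A)$ is equivalent to $\sigma_A$ being a single cycle (no proper nonempty invariant subset), and then use that $\sigma_{A_1}$ and $\sigma_{A_2}$ agree on $A_1\cap A_2$ to conclude this intersection is $\sigma_{A_1}$-invariant, forcing $A_1\cap A_2 = A_1$ (and symmetrically $=A_2$). The paper instead argues by direct element-chasing: assuming a nonempty proper intersection $A'=A_1\cap A_2$, it uses minimality to find $a\in A'$ with $I_a\notin J_{A'}$, then notes that $I_a$ must lie in both $J_{A_1\setminus A'}$ and $J_{A_2\setminus A'}$, which are disjoint (as $J$ has distinct entries and the index sets are disjoint) --- a contradiction. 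Your permutation viewpoint is more structural, cleanly yields the dichotomy ``$A_1=A_2$ or $A_1\cap A_2=\emptyset$'' (the paper tacitly assumes $A_1\neq A_2$), and the cycle interpretation could be reused elsewhere; the paper's argument is shorter and avoids introducing the auxiliary bijections.
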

    \begin{proof}
        Assume for contradiction that $A_1\cap A_2= \emptyset$. Since both $(I_{A_1}, J_{A_1})$ and $(I_{A_2}, J_{A_2})$ are minimal equal sub-pairs, it must be that for $A' = A_1 \cap A_2$ we have $I_{A'} \neq J_{A'}$ as sets; otherwise, it contradicts the minimality of $A_1$ and $A_2$. Thus, there is $a\in A'$ such that $I_a \notin J_{A'}$. Moreover, $I_a\in I_{A_1} = J_{A_1}$ and $I_a\in I_{A_2} = J_{A_2}$ as sets. Therefore, $I_a \in J_{A_1 \setminus A'}$ and $I_a \in J_{A_2 \setminus A'}$. But $J_{A_1 \setminus A'}$ and $J_{A_2 \setminus A'}$ do not share any common elements. We arrive at a contradiction.
    \end{proof}

    We next define a full decomposition of $(I,J)$ into minimal equal sub-pairs.
    \begin{defi} \label{def:I-J-decomposition}
        Let $(I,J)$ be a pair of distinct-element sequences of size $s$. We call a family of sets $A_1, \ldots, A_t, B \subseteq [s]$ a \emph{decomposition} of $(I,J)$ if:
        \begin{enumerate}
            \item For any two distinct $j,j' \in [t]$, we have that $A_j \cap A_{j'} = \emptyset$.
            \item $(I_{A_j}, J_{A_j})$ is a minimal equal sub-pair for every $j\in [t]$.
            \item $B = [s] \setminus \left( \sqcup_{j\in [t]} A_j\right)$ is free of equalities.\footnote{$\sqcup$ refers to disjoint union}
        \end{enumerate}
    \end{defi}
    
    As a corollary, we get the following claim.
    \begin{claim}
        Let $(I,J)$ be a pair of distinct-element sequences of size $s$. Then, up to reordering of the sets, there exists a unique decomposition of $(I,J)$.
    \end{claim}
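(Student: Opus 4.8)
The plan is to prove both existence and uniqueness, treating them as essentially the same argument once we understand how minimal equal sub-pairs interact. By \Cref{clm:no-intersection-eq-sub-pair}, any two minimal equal sub-pairs are supported on disjoint index sets, so the collection of all minimal equal sub-pairs of $(I,J)$ forms a family of pairwise-disjoint subsets of $[s]$. The natural candidate decomposition is simply: let $A_1, \ldots, A_t$ be \emph{all} the minimal equal sub-pairs of $(I,J)$ (in some canonical order, say sorted by their smallest element), and set $B = [s] \setminus \bigsqcup_{j} A_j$. Conditions~1 and~2 of \Cref{def:I-J-decomposition} then hold by construction and by \Cref{clm:no-intersection-eq-sub-pair}. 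The content of the proof is therefore (a) that this $B$ is free of equalities, and (b) that \emph{any} decomposition must consist of exactly this family of $A_j$'s, which gives uniqueness up to reordering.

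For (a), I would argue by contradiction: suppose $B$ is not free of equalities, so there is an increasing ordered $A' \subseteq [s]$ with $A' \subseteq B$ and $I_{A'} = J_{A'}$ as sets. Then $A'$ contains a minimal such subset $A''$ (take any inclusion-minimal nonempty subset of $A'$ on which $I$ and $J$ agree as sets), and $(I_{A''}, J_{A''})$ is by definition a minimal equal sub-pair. But a minimal equal sub-pair is one of the $A_j$'s, contradicting $A'' \subseteq A' \subseteq B = [s] \setminus \bigsqcup_j A_j$. Hence $B$ is free of equalities, and the candidate family is a valid decomposition.

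For (b), suppose $A_1, \ldots, A_t, B$ is any decomposition. Each $A_j$ is, by condition~2, a minimal equal sub-pair, so it belongs to our canonical family; it remains to see that no minimal equal sub-pair is missing, i.e. that every minimal equal sub-pair $A^\ast$ equals some $A_j$. If $A^\ast$ were not among the $A_j$'s, then since the $A_j$'s are disjoint and $A^\ast$ is disjoint from each of them (by \Cref{clm:no-intersection-eq-sub-pair}, as $A^\ast$ and $A_j$ are both minimal equal sub-pairs), we would get $A^\ast \subseteq B$, contradicting that $B$ is free of equalities. Therefore the $A_j$'s are exactly the minimal equal sub-pairs of $(I,J)$, and $B$ is forced to be their complement; the decomposition is unique up to the order in which we list the $A_j$'s.

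The only mild subtlety — and the place to be careful — is the step in (b) asserting that a minimal equal sub-pair $A^\ast$ not equal to any $A_j$ is disjoint from all the $A_j$'s: this is exactly \Cref{clm:no-intersection-eq-sub-pair} applied to the pair of minimal equal sub-pairs $(A^\ast, A_j)$, whose conclusion $A^\ast \cap A_j = \emptyset$ combined with $A^\ast \neq A_j$ and the fact that distinct disjoint sets cannot be nested forces $A^\ast$ into $B$. Everything else is bookkeeping, so I do not expect any real obstacle; the proof is a short corollary of \Cref{clm:no-intersection-eq-sub-pair} as the text already anticipates.
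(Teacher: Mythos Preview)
Your proof is correct and follows essentially the same approach as the paper: both arguments rest on \Cref{clm:no-intersection-eq-sub-pair} to conclude that distinct minimal equal sub-pairs are disjoint, and then observe that the decomposition must consist precisely of \emph{all} minimal equal sub-pairs together with their complement. The paper presents uniqueness by directly comparing two decompositions (noting that each $A_i$ from one decomposition either equals or is disjoint from each $A'_j$ of the other, so any leftover $A'_j$ would land in $B$, contradicting that $B$ is free of equalities), whereas you first build the canonical decomposition and then show every decomposition coincides with it; this is a minor difference in packaging, and your version is in fact a bit more explicit in verifying that the complement $B$ is free of equalities.
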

    \begin{proof}
        The claims about the existence and the uniqueness follow from the same observation: For each $i\in [s]$, there are only two options. 
        Either $i$ belongs to a \emph{single} $A_j$ for which $(I_{A_j}, J_{A_j})$ is a minimal equal sub-pair, or $i$ does not belong to such a set, and thus $i\in B$.
        Suppose we have two decompositions $A_1, \ldots, A_t, B$ and $A'_1, \ldots, A'_{t'}, B'$ of $(I,J)$. Then, by \Cref{clm:no-intersection-eq-sub-pair}, for each $i\in [t], j\in [t']$ we have either $A_i \cap A'_j = \emptyset$, or $A_i = A'_j$. Without loss of generality, assume that $t' \geq t$.
        If these are distinct decompositions, there exists $j\in [t']$ such that $A'_j \cap A_i = \emptyset$ for all $i\in [t]$, which implies that $A'_j\subset B$, in contradiction to $B$ being free from equalities.
    \end{proof}
    We shall need the following trivial claim.
    \begin{claim}
    \label{clm:two-dis-ind-in-free}
        Let $s\geq 2$ be an integer and $I,J\in [n]^s$ be two distinct-element sequences that are free of equalities.
        Then there exist two distinct indices $i\neq j\in [s]$ such that $I_i\notin J$ and $J_j\notin I$. 
    \end{claim}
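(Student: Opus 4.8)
The plan is to argue by contradiction, extracting the desired distinct pair from the failure of the equality $I_{[s]} = J_{[s]}$ as sets. First I would apply the definition of \emph{free of equalities} to the (increasing, non-empty) index set $A = [s]$: this gives $I \neq J$ as sets. Since $|I| = |J| = s$ and the entries of each ordered set are distinct, $I \neq J$ as sets forces both $I \not\subseteq J$ and $J \not\subseteq I$; hence the sets $P := \{\, i \in [s] : I_i \notin J \,\}$ and $Q := \{\, j \in [s] : J_j \notin I \,\}$ are both non-empty. The claim is precisely that one can choose $i \in P$ and $j \in Q$ with $i \neq j$.

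Suppose no such distinct pair exists, i.e.\ every $i \in P$ equals every $j \in Q$. If $|P| \geq 2$, picking $i_1 \neq i_2$ in $P$ and any $j \in Q$ would force $j = i_1$ and $j = i_2$, a contradiction; so $|P| = 1$, and symmetrically $|Q| = 1$, and the two singletons coincide. Write $P = Q = \{m\}$ for a single index $m \in [s]$; thus $I_m \notin J$, $J_m \notin I$, and for every $\ell \neq m$ we have both $I_\ell \in J$ and $J_\ell \in I$. Now consider the increasing ordered set $A := [s] \setminus \{m\}$, which is non-empty because $s \geq 2$. Since the entries of $I$ (resp.\ $J$) are distinct, as sets $I_A = I \setminus \{I_m\}$ and $J_A = J \setminus \{J_m\}$, each of size $s-1$. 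From $I_\ell \in J$ for all $\ell \neq m$ we get $I_A \subseteq J$, and since $|I_A| = s - 1 = |J| - 1$, there is a unique element $x \in J$ with $I_A = J \setminus \{x\}$. But $I_A \subseteq I$ and $J_m \notin I$, so $J_m \notin I_A$; hence $x = J_m$, and therefore $I_A = J \setminus \{J_m\} = J_A$ as sets, contradicting that $(I,J)$ is free of equalities.

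The argument is short, and the only real bookkeeping is the cardinality step that pins down the unique element of $J$ missing from $I_A$ as $J_m$. The two points to state carefully are the use of $s \geq 2$ (to keep $A = [s]\setminus\{m\}$ non-empty — matching the fact that the statement genuinely fails for $s = 1$) and the use of distinctness of the entries (so that $I_A$ is literally $I$ with one element removed, and similarly for $J_A$); neither presents a genuine obstacle.
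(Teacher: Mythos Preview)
Your proof is correct and rests on the same key step as the paper's: applying the free-of-equalities hypothesis to the index set $[s]\setminus\{m\}$ (the paper uses $[s]\setminus\{i\}$) to derive a contradiction from $I_A = J_A$ as sets. The only difference is packaging: the paper argues directly---pick $i$ with $I_i\notin J$, then use $I_{[s]\setminus\{i\}}\neq J_{[s]\setminus\{i\}}$ to extract $j\neq i$ with $J_j\notin I$---whereas you wrap the same mechanism inside a contradiction argument via the singleton sets $P=Q=\{m\}$.
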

    \begin{proof}
        Clearly, $I\neq J$ as sets, so there exists $i\in [s]$ for which $I_i \notin J$. Now, since $(I,J)$ are free from equalities, $I_{[s]\setminus \{i\}} \neq J_{[s]\setminus \{i\}}$ as sets. Thus, there must exist $j\in [s]\setminus \{i\}$ such that $J_j \notin I$.
    \end{proof}

    In the following proposition, we show that if $I,J$ are distinct-element sequences with at most $k$ minimal equal sub-pairs, then the determinant of the corresponding $V$-matrix is not zero. Formally, 

    \begin{prop} \label{prop:formal-det-1}
        Let $I,J \in [n]^{2k-1}$ be two distinct-element sequences.
        Assume that $A_1, \ldots, A_t, B \subseteq [2k-1]$ is a decomposition of $(I,J)$ according to \Cref{def:I-J-decomposition}.
        If $t\leq k$, then in the expansion of $\det(V_{I,J}(\bX))$ as a sum over permutations, there is a monomial that is obtained at exactly one of the $(2k-1)!$ different permutations. 
        In particular, its coefficient is $\pm 1$, depending on the sign of the corresponding permutation. Consequently, $\det(V_{I,J}(\bX))\neq 0$. 
    \end{prop}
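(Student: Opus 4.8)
The plan is to exhibit a single permutation $\sigma^\star\in S_{2k-1}$ whose associated monomial $M^\star:=\prod_{s=1}^{2k-1}\bigl(V_{I,J}(\bX)\bigr)_{s,\sigma^\star(s)}$ is produced by no other permutation; since a monomial produced by a unique permutation has coefficient $\mathrm{sgn}(\sigma^\star)=\pm1$ in the Leibniz expansion, this immediately gives $\det(V_{I,J}(\bX))\neq0$. Recall the shape of $V_{I,J}$: column $1$ is the all-ones column (an entry equal to $1$ in every row, contributing nothing to a monomial); columns $2,\dots,k$ are ``$I$-columns'', whose entry in row $s$ for column $c$ is $X_{I_s}^{\,c-1}$ (so they carry the degrees $1,\dots,k-1$ in the variables $X_{I_s}$); and columns $k+1,\dots,2k-1$ are ``$J$-columns'', carrying degrees $1,\dots,k-1$ in the variables $X_{J_s}$. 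Thus choosing $\sigma^\star$ amounts to choosing one distinguished row for the all-ones column, $k-1$ rows for the $I$-columns (with the degrees $1,\dots,k-1$ used once each), and the remaining $k-1$ rows for the $J$-columns (again degrees $1,\dots,k-1$ used once each).

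The first simplification I would make is to reduce everything to the individual blocks of the given decomposition $A_1,\dots,A_t,B$ of $(I,J)$. Write $S_j$ for the common underlying set of the minimal equal sub-pair $(I_{A_j},J_{A_j})$. Because $I$ and $J$ each have distinct entries and the $A_j$ are disjoint, the only rows whose $I$- or $J$-value lies in $S_j$ are those indexed by $A_j$, and (using \Cref{clm:no-intersection-eq-sub-pair}) the sets $S_1,\dots,S_t,\;I_B\cup J_B$ are pairwise disjoint. Consequently every permutation-monomial factors uniquely as $\bigl(\prod_j M_j\bigr)\cdot M_B$, where $M_j$ is supported on $\{X_\ell:\ell\in S_j\}$ and produced by the rows $A_j$ alone, and $M_B$ is supported on $\{X_\ell:\ell\in I_B\cup J_B\}$ and produced by the rows $B$ alone. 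Hence it suffices to choose, for each block, an assignment of its rows to columns so that (i) the resulting local monomial is realizable by that block's rows in only one way, and (ii) the column multiset used across all blocks is exactly $\{\text{all-ones}\}\cup\{I\text{-cols of degrees }1,\dots,k-1\}\cup\{J\text{-cols of degrees }1,\dots,k-1\}$.

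For the block $B$ I would use that $(I_B,J_B)$ is free of equalities: iterating the argument behind \Cref{clm:two-dis-ind-in-free} (a sub-pair of a free-of-equalities pair is again free of equalities) produces an ordering of the rows of $B$ along which each row's $I$-value is absent from the $J$-values of the later rows, and symmetrically an ordering for $J$; assigning the first batch the largest unused $I$-columns and the second batch the largest unused $J$-columns forces a ``double staircase'' of exponents that pins the matching of $B$'s rows. For a block $A_j$ I would use that a minimal equal sub-pair is precisely a single cycle $v_0\to v_1\to\dots\to v_{m_j-1}\to v_0$ on $S_j$ (the bijection $I_a\mapsto J_a$ has no proper invariant subset, else minimality fails), whose $m_j$ rows are the edges: an edge $v_r\to v_{r+1}$ placed in an $I$-column of degree $c$ contributes $X_{v_r}^{\,c}$, and in a $J$-column of degree $c$ contributes $X_{v_{r+1}}^{\,c}$. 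Cutting the cycle into a Hamiltonian path and routing its edges through a suitable mixture of $I$- and $J$-columns (with the cut edge, and for the unique block that gets it the all-ones column, absorbing the two path endpoints) gives a local monomial whose exponent pattern on $\{X_\ell:\ell\in S_j\}$ is again strictly staircase-shaped, hence uniquely realizable.

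The heart of the proof — and the only place the hypothesis $t\le k$ is needed — is making the block constructions globally consistent. A careless mixture of $I$- and $J$-columns inside a cyclic block can collapse two of its variables into one (routing the two edges of a transposition as one $I$-column of degree $c$ and one $J$-column of degree $c'$ yields $X_v^{\,c}X_v^{\,c'}=X_v^{\,c+c'}$), which destroys the staircase shape and reintroduces ambiguity; avoiding this while staying within the budget of only $k-1$ column degrees of each type is exactly what forces $t\le k$. Concretely, one shows that when $t\le k$ the degrees $1,\dots,k-1$ can be distributed among the blocks (and the single all-ones column placed) so that every block receives a ``clean'' staircase of degrees and no degree is used in a way that permits an intra-block swap; when $t>k$ — for instance $I=J$, which has $2k-1$ singleton blocks — no such distribution exists, and indeed the determinant can vanish there. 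Once $\sigma^\star$ is assembled with properties (i) and (ii), the support-disjointness factorization shows that any $\sigma'$ with the same monomial must match $\sigma^\star$ block by block, so $\sigma'=\sigma^\star$; therefore $M^\star$ has coefficient $\pm1$ and $\det(V_{I,J}(\bX))\neq0$.
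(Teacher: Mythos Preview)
Your approach differs substantially from the paper's. The paper argues by induction on $k$: at each step it selects two rows $i,j$ (from a block $A_s$ of size $\ge 2$, or one from a singleton block and one from $B$, or two from $B$), collects all monomials divisible by $X_{I_i}^{k-1}X_{J_j}^{k-1}$, observes that these are forced to use entries $(i,k)$ and $(j,2k-1)$, and reduces to a $(2k-3)\times(2k-3)$ matrix $V_{I',J'}$ with parameter $k-1$ and at most $k-1$ minimal equal sub-pairs. Your block-factorization framework is structurally sound --- the variable supports $S_1,\dots,S_t,\,I_B\cup J_B$ are indeed pairwise disjoint, and a minimal equal sub-pair is indeed a single cycle of the bijection $I_a\mapsto J_a$ --- so any permutation-monomial does split as a product of block-local pieces. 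The gap is in the step you flag as ``the heart of the proof'': you assert that a suitable staircase-shaped block monomial is uniquely realizable by that block's rows, but you never construct it, and a staircase shape alone does not force uniqueness for cycle blocks.

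The obstruction is already visible for a $2$-cycle, say $I_1=a,\ I_2=b,\ J_1=b,\ J_2=a$. Routing rows $1,2$ through $I$-columns of degrees $d_1,d_2$ yields the staircase monomial $X_a^{d_1}X_b^{d_2}$; but routing row $1$ through the $J$-column of degree $d_2$ and row $2$ through the $J$-column of degree $d_1$ yields the \emph{same} monomial. Likewise, if this block receives the all-ones column and one other column of degree $d\le k-1$, the resulting monomial $X_a^{d}$ (or $X_b^{d}$) is produced both by $(\text{row }1\to I\text{-deg-}d,\ \text{row }2\to\text{col }1)$ and by $(\text{row }1\to\text{col }1,\ \text{row }2\to J\text{-deg-}d)$. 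Hence the block monomial does not pin down the columns, and your concluding step (``any $\sigma'$ with the same monomial must match $\sigma^\star$ block by block'') fails: one can swap the $I$- and $J$-columns of matching degrees between two such blocks and reproduce the global monomial. Uniqueness \emph{can} be forced for a $2$-cycle --- the monomial $X_a^{2k-2}$, obtained from row $1\to$ column $k$ and row $2\to$ column $2k-1$, has only this realization --- but that consumes both top-degree columns, which is exactly the paper's inductive move. Turning this into a one-shot global assignment that works for all blocks simultaneously would require a far more delicate degree-allocation argument than ``staircase-shaped, hence uniquely realizable''; the paper's induction sidesteps the difficulty by spending the top degrees one block at a time.
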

    \begin{proof}
        We prove the proposition by induction on $k$. For $k=1$, we have $V_{I,J}(\bX)=1$, and the result follows immediately. For the induction step, assume it holds for $k-1$, and we prove it for $k\geq 2$. 
		Consider two coordinates $i,j$, determined according to the following three cases.
        \begin{enumerate}
            \item $t > 0$ and there exists $s\in [t]$ such that $|A_s| \geq 2$. 
            In this case, pick $i,j\in A_s$ to be such that $i\neq j$ and $I_i = J_j$. \label{prop:case1}
            \item $t > 0$ and for all $s\in [t]$, $|A_s| = 1$. In this case, let $i$ to be an element of one of the sets $A_s$. 
            Now, let $j$ be such that $j\in B$ and $J_j\notin I$. Clearly there exists such since ($I_{B}, J_{B}$) is free from equalities. \label{prop:case2}
            \item $t = 0$. In this $(I,J)$ is free from equalities, so choose $i\neq j$ according to \Cref{clm:two-dis-ind-in-free}.\label{prop-case3}
        \end{enumerate}
        We shall analyze case~\ref{prop:case1} first. In the determinant expansion of $V_{I,J}$ as a sum of $(2k-1)!$ monomials, collect all the monomials that are divisible by $X_{I_i}^{k-1}X_{J_j}^{k-1} = X_{I_i}^{2k-2}$, and group them as
	\[
            X_{I_i}^{2k-2}f(\bX)\;,
        \]
		for some polynomial $f$ in the variables $(X_\ell : \ell \in (I\cup J) \setminus\{I_i\}))$. 
        Any monomial in  the determinant expansion of $V_{I,J}$ that is divisible by $X_{I_i}^{2k-2}$ must be obtained by picking the $(i,k)$ and the $(j,2k-1)$ entries in the matrix $V_{I,J}(\bX)$ (see \eqref{eq:V-matrix}).

        Now, define $I' = I_{[2k-1] \setminus \{i,j\}}$ and $J' = J_{[2k-1] \setminus \{i,j\}}$ and observe that the $(2k-3)\times(2k-3)$ matrix $V_{I', J'}$ is obtained from $V_{I,J}$ by removing the rows $i,j$ and columns $k, 2k-1$ and therefore $f(\bX) = \det(V_{I',J'}(\bX))$. 
        We now claim that $I'$ and $J'$ comply with the conditions stated in the proposition.
        First note that $A' = A_s \setminus \{i,j\}$ is a set for which $(I_{A'}, I_{A'})$ is no longer a minimally equal sub-pair (in fact, it is free of equalities). 
        We also need to show that the elements of $A'$ together with (some) elements $B$ do not contain a new set that gives rise to an equally minimal sub-pair. 
        Indeed, if there is a $A'' \subset A'\sqcup B$ for which $(I_{A''}, J_{A''})$ is a minimal equal sub-pair, then it must be that there are distinct $b_1,b_2 \in A''$ such that $b_1 \in A'$ and $b_2\in B$. But note that in this case, $A''$ and $A_s$ have a nonempty intersection and both correspond to minimal equal sub-pair of $(I,J)$, which contradicts \Cref{clm:no-intersection-eq-sub-pair}.
        Thus, the number of sets that correspond to minimal equal sub-pairs in $(I', J')$ is $t-1 \leq k-1$. 
        Hence, by the induction hypothesis, $\det(V_{I',J'})$ contains a nonzero monomial $m$ (with a $\pm 1$ coefficient) that is uniquely obtained among the $(2k-3)!$ different monomials. Hence, $X_{I_i}^{2k-2}m$ is a monomial of $X_{I_i}^{2k-2}f(\bX)$ with a $\pm 1$ coefficient. 
        Since this monomial cannot arise from any other term in the determinant expansion of $V_{I,J}(\bX)$, this monomial is uniquely obtained in $\det(V_{I,J}(\bX))$, and the result follows. 

        We now analyze cases~\ref{prop:case2} and~\ref{prop-case3}. The argument here will be similar to the one of case~\ref{prop:case1} and actually almost identical to \cite[Proposition 18]{con2023reed}. We include it here for completeness.
        
        In the determinant expansion of $V_{I,J}$ as a sum of $(2k-1)!$ monomials, collect all the monomials that are divisible by $X_{I_i}^{k-1}X_{J_j}^{k-1}$ (here, observe that $I_i \neq J_j$), and write them together as
        \[
        X_{I_i}^{k-1}X_{J_j}^{k-1}f(\bX)\;,
        \] 
        for some polynomial $f$ in the variables $(X_\ell : \ell \in (I\setminus\{I_i\}) \cup (J\setminus\{J_j\}))$. 
        By the choice of $i$ and $j$ in these two cases, any monomial in the determinant expansion of $V_{I,J}$ that is divisible by $X_{I_i}^{k-1}X_{J_j}^{k-1}$ must be obtained by picking the $(i,k)$ and the $(j,2k-1)$ entries in the matrix $V_{I,J}(\bX)$ (see \eqref{eq:V-matrix}). 
        Define $I' = I_{[2k-1] \setminus \{i,j\}}$ and $J' = J_{[2k-1] \setminus \{i,j\}}$ and observe that the $(2k-3)\times(2k-3)$ matrix $V_{I', J'}$ is obtained from $V_{I,J}$ by removing the rows $i,j$ and columns $k, 2k-1$ and as a result $f(\bX) = V_{I',J'}$.
        We now claim that $I', J'$ satisfy the conditions of the proposition. Indeed, we have reduced the $t$ by $1$ if $t>0$ and thus the number of sets that give rise to minimally equal sub-pairs is $\leq k-1$. Also, observe that if a pair of distinct-element sequences $(I,J)$ is free from equalities, then $(I_{[2k-1]\setminus\{i\}}, J_{[2k-1]\setminus\{i\}})$ is also free from equalities. This concludes that $I',J'$ comply with the conditions for $k-1$. Thus, by the induction hypothesis, $\det(V_{I',J'})$ contains a nonzero monomial $m$ (with a $\pm 1$ coefficient) that is uniquely obtained among the $(2k-3)!$ different monomials. Hence, $X_{I_i}^{2k-2}m$ is a monomial of $X_{I_i}^{k-1}X_{J_j}^{k-1}f$ with a $\pm 1$ coefficient. Since there is no other way to obtain  this monomial in the determinant expansion of $V_{I,J}(\bX)$, this monomial is uniquely obtained in $\det(V_{I,J}(\bX))$, and the result follows for these cases as well. 
    \end{proof}

    Our next proposition shows that over a large enough field, a random selection of $n$ points $\alpha_1,\ldots, \alpha_n$ from that field will satisfy the following two conditions: (i) The points are distinct and (ii) For every distinct-element sequences $I,J\in [n]^{2k-1}$ containing at most $k$ minimal equal sub-pairs, it holds that $\det(V_{I,J}(\ba))\neq 0$. Specifically,
    \begin{prop}
        \label{prop:existance-first-cond}
        Let $n$ and $k$ be positive integers and let $q$ be a prime such that
        \[
        q\geq 200 \cdot \binom{n}{2k - 1} ^2 \cdot (2k-1)! \cdot k(k-1)\;.
        \] 
        Let $\ba = (\alpha_1, \ldots, \alpha_n) \in \Fq$ be a uniformly random vector. 
        Then, with probability at least $0.99$, 
        for every pair of distinct-element sequences of size $2k-1$ containing at most $k$ minimal equal sub-pairs, it holds that all the points in $\ba$ are pairwise distinct and $\det(V_{I,J}(\ba)) \neq 0$.
    \end{prop}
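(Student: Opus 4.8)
The plan is to apply the Schwarz–Zippel–DeMillo–Lipton lemma (\Cref{lem:sz-lemma}) to a single carefully-chosen polynomial, namely the product of all the relevant determinants together with the Vandermonde-type factor that enforces distinctness. First I would define
\[
P(X_1, \ldots, X_n) := \left( \prod_{1\le i < j \le n} (X_i - X_j) \right) \cdot \prod_{(I,J)} \det\bigl(V_{I,J}(\bX)\bigr),
\]
where the second product ranges over all (unordered choices of) pairs of ordered sets $I,J \subseteq [n]$ of size $2k-1$ that admit a decomposition $A_1,\ldots,A_t,B$ with $t\le k$. By \Cref{prop:formal-det-1}, each such $\det(V_{I,J}(\bX))$ is a nonzero polynomial in $\F_q[X_1,\ldots,X_n]$; the Vandermonde factor is obviously nonzero; hence $P$ is a nonzero polynomial (here I use that $\F_q$ is an integral domain). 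An assignment $\ba$ with $P(\ba)\neq 0$ simultaneously makes all $\alpha_i$ distinct and all the determinants $\det(V_{I,J}(\ba))$ nonzero, which is exactly the conclusion.

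Next I would bound the total degree of $P$. The Vandermonde factor has degree $\binom{n}{2}$. Each matrix $V_{I,J}(\bX)$ has entries of degree at most $k-1$, so each determinant has degree at most $(2k-1)(k-1)$. For the number of factors in the second product: an ordered set of size $2k-1$ drawn from $[n]$ can be specified by choosing the underlying $(2k-1)$-subset and then an ordering, giving at most $\binom{n}{2k-1}(2k-1)!$ ordered sets, hence at most $\left(\binom{n}{2k-1}(2k-1)!\right)^2$ pairs $(I,J)$. Multiplying, $\deg P \le \binom{n}{2} + (2k-1)(k-1)\cdot\left(\binom{n}{2k-1}(2k-1)!\right)^2$, which is comfortably at most, say, $\tfrac{1}{2}q/100$ once $q \ge 200\cdot\left(\binom{n}{2k-1}(2k)!\right)^2$, since $(2k)! = (2k)(2k-1)!$ already absorbs the $(2k-1)(k-1)$ factor up to constants and dominates the additive $\binom{n}{2}$ term. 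Then \Cref{lem:sz-lemma} with $S=\F_q$ gives $\Pr_{\ba\sim\F_q^n}[P(\ba)=0] \le \deg(P)/q \le 0.01$, so with probability at least $0.99$ the assignment is good. (If one wants to be careful about whether $\F_q$ is large enough that a prime exists in the stated range, Bertrand's postulate handles that, but the statement already fixes $q$ as a prime of the right size.)

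The one genuinely delicate point — and the step I expect to need the most care — is the arithmetic showing $\deg P \le q/100$ cleanly from the hypothesis $q \ge 200\cdot\bigl(\binom{n}{2k-1}(2k)!\bigr)^2$. The factor count $\left(\binom{n}{2k-1}(2k-1)!\right)^2$ and the per-determinant degree $(2k-1)(k-1)$ multiply to something of order $k^2\cdot\left(\binom{n}{2k-1}(2k-1)!\right)^2$, and one must check that $k^2\cdot((2k-1)!)^2 \le \text{const}\cdot((2k)!)^2$ with a constant small enough to leave room for the probability bound and for the additive $\binom{n}{2}$; this is true since $((2k)!)^2 = (2k)^2((2k-1)!)^2 \ge 4k^2((2k-1)!)^2$, but the bookkeeping with the $200$ and the $0.99$ must be done explicitly. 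Everything else is a direct invocation of \Cref{prop:formal-det-1} (nonvanishing of the individual determinants) and \Cref{lem:sz-lemma} (union-bound-free, via the product polynomial).
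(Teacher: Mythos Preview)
Your approach is correct and follows the same high-level strategy as the paper (form the product polynomial, bound its degree, apply Schwarz--Zippel), but there is one genuine and useful difference worth noting.

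The paper first regards $F$ as a polynomial in $\mathbb{Z}[\bX]$, bounds the absolute values of its integer coefficients, and then invokes the growth rate of the primorial function to find a prime $q$ in the interval $[200C(n,k),400C(n,k)]$ that does not kill all coefficients, thereby concluding $F\not\equiv 0$ over $\F_q$. You bypass this entirely: since \Cref{prop:formal-det-1} already guarantees that each $\det(V_{I,J}(\bX))$ contains a monomial with coefficient $\pm 1$, each factor is nonzero in $\F_q[\bX]$ for \emph{every} prime $q$, and the product is nonzero because $\F_q[\bX]$ is an integral domain. This is cleaner, and it actually proves the proposition exactly as stated (for any prime $q$ above the threshold), whereas the paper's argument, taken literally, only produces some prime in a bounded window. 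The paper also introduces an equivalence $(I,J)\sim(\pi(I),\pi(J))$ to shrink the number of factors by $(2k-1)!$; you skip this and compensate with a cruder degree bound, which the hypothesis $q\ge 200\binom{n}{2k-1}^2((2k)!)^2$ still comfortably absorbs since $((2k)!)^2=4k^2((2k-1)!)^2$. Your remark that the final arithmetic needs to be done explicitly is accurate but routine; the inequality $(2k-1)(k-1)\cdot((2k-1)!)^2 \le 2((2k)!)^2$ and the domination of the additive $\binom{n}{2}$ term go through without trouble.
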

    \begin{proof}
        Let $I=(I_1, \ldots, I_{2k-1})$ and $J = (J_1, \ldots, J_{2k-1})$ be two distinct-element sequences, and let $\pi$ be a permutation on $[2k-1]$ and define $\pi (I) = (I_{\pi(1)}, \ldots, I_{\pi(2k-1)})$ and $\pi(J) = (J_{\pi(1)}, \ldots, J_{\pi(2k-1)})$. Then, $\det (V_{I,J}) = \pm \det(V_{\pi(I), \pi(J)})$. Thus, two pairs of distinct-element sequences $(I,J)$ and $(I', J')$ will be equivalent, $(I,J)\sim (I',J')$, if there exists $\pi$ such that $I = \pi(I')$ and $J = \pi(J')$.
        Define 
        \[
        F(\bX)=\prod_{i< j}(X_i-X_j)\prod_{(I,J)\in S/\sim}\det(V_{I,J}(\bX))\;,
        \]
        where $S$ is the set containing all the pairs of distinct-element sequences that contain at most $k$ minimally equal sub-pairs and so the second product runs over all equivalence classes under $\sim$. By \Cref{prop:formal-det-1},  $F(\bX)$ is  a nonzero polynomial in the ring $ \mathbb{Z}[\bX]$ and its degree can be upper bounded by 
        \begin{align*}
            \deg(F) &\leq \binom{n}{2} + \binom{n}{2k - 1} ^2 \cdot (2k-1)! \cdot k(k-1) \\
            &\leq 2\cdot \binom{n}{2k - 1} ^2 \cdot (2k-1)! \cdot k(k-1)
        \end{align*}
        Indeed, the number of pairs of distinct-element sequences of size $2k-1$ that contain at most $k$ minimally equal sub-pairs is at most $\binom{n}{2k-1}^2\cdot ((2k-1)!)^2$ but since we consider only equivalence classes, we divide by $(2k-1)!$.
        Furthermore, one can easily see that the total degree of each $\det(V_{I,J}(\bX))$ is $k(k-1)$. The $\binom{n}{2}$ term corresponds to the degree of $\prod_{i< j}(X_i-X_j)$.

        Next, we bound the absolute value of any nonzero coefficients of $F$. Each $\det(V_{I,J}(\bX))$ is a nonzero polynomial with nonzero coefficients bounded in absolute values by $(2k-1)!$. Thus, the absolute value of any nonzero coefficients of $F$ is at most 
        \[
            \left( (2k-1)! \right)^{\binom{n}{2k-1}^2 \cdot (2k-1)!} < e^{\binom{n}{2k - 1} ^2 \cdot (2k-1)! \cdot k(k-1)}
        \]
        where the inequality follows by taking logarithms on both sides and applying Stirling’s approximation. 
        
        Denote $C(n,k) := \binom{n}{2k - 1} ^2 \cdot (2k-1)! \cdot k(k-1)$.
        We claim that there is a prime $q$ in the interval $[200C(n,k),400C(n,k)]$ that does not divide at least one of the nonzero coefficients of the polynomial $F$. Indeed, consider a nonzero coefficient of $F$, and assume towards a contradiction that it is divisible by all such primes. Then, by the growth rate of the primorial function, the absolute value of the coefficient is  $e^{200 C(n,k)(1+o(1))}$, in contradiction.
        Let $q$ be the prime guaranteed by this argument. Then, $F(\bX)$ is a nonzero polynomial in $\F_q[X]$. Therefore, by \Cref{lem:sz-lemma}, 
        with probability at least $0.99$, a uniformly random $\ba = (\alpha_1, \ldots, \alpha_n)$ is such that $F(\ba) \neq 0$ and clearly, by the structure of $F(\bX)$, we have $\alpha_i \neq \alpha_j$ for all distinct $i,j\in [n]$.
        The claim follows by noting that for any $(I,J)\in S/\sim$, it holds that $\det( V_{I,J}(\ba) ) \neq 0$ and by the definition of the equivalence relation, $\det (V_{I,J}(\ba) )\neq 0$ for any $(I,J)\in S$.
    \end{proof}

    \subsection{Second condition}
    \label{sec:second-cond}
    In order to prove \Cref{thm:main}, we need our points $\ba$ to satisfy another algebraic condition. In this subsection, as in \Cref{sec:first-cond}, we show that over a sufficiently large field, a random choice of points $(\alpha_1,\ldots,\alpha_n)$ satisfies this condition.

    \begin{defi} \label{def:A-matrix}
        For pairwise disjoint, nonempty sets $I^1, \ldots, I^t \subset [n]$, we define the matrix
        \[
        A_{I^1, \ldots, I^t}(\bX) = \left(\begin{array}{ccccc}
        1 & \sum_{j\in I^1}X_{j} & \sum_{j\in I^1}X_{j}^2 & \cdots & \sum_{j\in I^1}X_{j}^{k-1} \\
        1 & \sum_{j\in I^2}X_{j} & \sum_{j\in I^2}X_{j}^2 & \cdots & \sum_{j\in I^2}X_{j}^{k-1} \\ 
        \vdots& \vdots & \vdots & \ddots& \vdots  \\
        1 & \sum_{j\in I^t}X_{j} & \sum_{j\in I^t}X_{j}^2 & \cdots & \sum_{j\in I^t}X_{j}^{k-1} 
        \end{array}\right),
        \]
    \end{defi}

    As in the previous section, we first prove that $\det(A_{I^1, \ldots, I^t}(\bX))$ is a nonzero polynomial in $\mathbb{Z}[\bX]$.
    \begin{prop} \label{prop:formal-det-2}
        Let $I^1, \ldots, I^k \subset [n]$ be pairwise disjoint, nonempty sets. Then, $\det(A_{I^1, \ldots, I^k}(\bX))\neq 0$. Moreover, in the expansion of $\det(A_{I^1, \ldots, I^k}(\bX))$ as a sum of monomials, there exists a monomial whose coefficient is $\pm 1$.
    \end{prop}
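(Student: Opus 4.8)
The plan is to prove the statement by induction on $k$, following the same strategy used in \Cref{prop:formal-det-1}: isolate a specific entry of the matrix, collect all monomials divisible by a suitable power of a variable, show that this collection factors through a smaller $A$-matrix, and then invoke the induction hypothesis to locate a uniquely-obtained $\pm 1$ monomial. The base case $k=1$ is immediate since $A_{I^1}(\bX)=(1)$.

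For the inductive step, fix $k\ge 2$ and pick any index $a \in I^k$ (recall $I^k$ is nonempty). The variable $X_a$ appears in the matrix $A_{I^1,\ldots,I^k}(\bX)$ only in the last row, in each of the entries $\sum_{j\in I^k} X_j^m$ for $m=1,\ldots,k-1$; in particular the highest power of $X_a$ attainable is $X_a^{k-1}$, coming from the $(k,k)$ entry $\sum_{j\in I^k}X_j^{k-1}$. In the Leibniz expansion of the determinant, collect all monomials divisible by $X_a^{k-1}$ and write them as $X_a^{k-1}\cdot g(\bX)$, where $g$ involves only variables $X_\ell$ with $\ell \in \left(\bigcup_{i=1}^{k-1} I^i\right) \cup (I^k\setminus\{a\})$. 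Since $X_a^{k-1}$ can only be produced by selecting the $(k,k)$ entry, $g(\bX)$ is (up to sign, the sign of the relevant cofactor) the determinant of the $(k-1)\times(k-1)$ minor obtained by deleting the last row and last column, i.e. $g(\bX) = \pm\det\bigl(A_{I^1,\ldots,I^{k-1}}(\bX)\bigr)$, where now we regard these as $t=k-1$ pairwise disjoint nonempty sets. By the induction hypothesis, $\det(A_{I^1,\ldots,I^{k-1}}(\bX))$ contains a monomial $m_0$ with coefficient $\pm 1$ that is obtained uniquely among the $(k-1)!$ terms of that smaller expansion. Then $X_a^{k-1}m_0$ is a monomial of $X_a^{k-1} g(\bX)$ with coefficient $\pm 1$; since no monomial outside the collection $X_a^{k-1} g(\bX)$ is divisible by $X_a^{k-1}$, this monomial is obtained uniquely in the full expansion of $\det(A_{I^1,\ldots,I^k}(\bX))$, so its coefficient there is $\pm 1$ and in particular nonzero, giving $\det(A_{I^1,\ldots,I^k}(\bX))\neq 0$.

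The main subtlety — and the only place one must be careful — is the claim that $X_a^{k-1}$ forces the selection of the $(k,k)$ entry and cannot be assembled from lower powers in other rows: this is exactly where disjointness of the $I^i$ is used, since $X_a$ does not occur in rows $1,\ldots,k-1$ at all, so the only row contributing any power of $X_a$ is row $k$, and within row $k$ the only entry of $X_a$-degree $k-1$ is the last one. One should also note that the argument only needs the uniqueness of \emph{some} $\pm 1$ monomial in the smaller determinant, not control over which monomial it is; the deleted-variable $X_a$ never appears in $g$, so multiplying by $X_a^{k-1}$ cannot collide with any other surviving monomial. This makes the induction go through cleanly.
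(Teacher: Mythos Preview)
Your proof is correct and follows essentially the same approach as the paper: induct on $k$, pick an element of $I^k$, observe that the top power of its variable forces selection of the $(k,k)$ entry (using disjointness), and reduce to the $(k-1)\times(k-1)$ minor $A_{I^1,\ldots,I^{k-1}}$. The paper's writeup is actually terser than yours---it stops after noting $f_{k-1}(\bX)=\det(A_{I^1,\ldots,I^{k-1}})$ and invoking the hypothesis for nonvanishing, leaving the ``moreover'' clause implicit---so your version is, if anything, more complete. One small note: you phrase the inductive hypothesis as yielding a monomial ``obtained uniquely among the $(k-1)!$ terms,'' which is stronger than what the proposition states (merely a $\pm 1$ coefficient); either weaken that phrasing or explicitly strengthen the induction claim to match.
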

    \begin{proof}
        We prove it by induction on $k$. For $k=1$, we have $A_{I^1}(\bX) = 1$. 
        For the induction step, assume the claim holds for $k-1$, and we prove it for $k\geq 2$. 
        Let $s\in I^k$ and express the determinant as a polynomial in $X_s$,
        \[
        \det(A_{I^1, \ldots, I^k}(\bX)) = f_{k-1}(\bX) X_s^{k-1} + \cdots + f_1(\bX)X_s + f_0 (\bX)  \;,
        \]
        where the polynomials $f_i(\bX)$ are over the variables $(X_{\ell} \mid \ell\in \cup_{i=1}^t I^i \setminus \{s\})$. Indeed, since $s\in I^k$ and $s\notin I^i$ for all $i\in [k-1]$, the maximum degree of $X_s$ can be $k-1$. 
        Thus, $\det(A_{I^1, \ldots, I^k}) = 0$ if and only if $f_i(\bX) = 0$ for all $i =\{0,1,\ldots, k-1\}$.
        
        By the structure of $A_{I^1, \ldots, I^k}$ and the fact that the sets are pairwise disjoint, if we write the determinant as a sum of permutations, then the only possibility to get a permutation that contains $X_s^{k-1}$ is to pick the $(k,k)$ entry in the matrix which is $\sum_{j\in I^k}X_j^{k-1}$.
        Thus, $f_{k-1}(\bX) = \det (A_{I^1, \ldots, I^{k-1}}(\bX))$. 
        By the induction hypothesis, $\det (A_{I^1, \ldots, I^{k-1}}(\bX)) \neq 0$ and thus the claim follows.
    \end{proof}

    For \Cref{thm:main} to hold, we require the evaluation vector $\ba$ to be such that for every $I^1, \ldots, I^k$ that are nonempty, pairwise disjoint, and $\sum_{i=1}^{k} |I^i|\leq 2k-1$ we have $\det(A_{I^1, \ldots, I^k} (\ba)) \neq 0$.
    In the following proposition, we show that over a sufficiently large field it is likely to have many such $\ba$.
    \begin{prop}
        \label{prop:existance-second-cond}
        Let $n$ and $k$ be positive integers and let $q$ be a prime such that
        \[
        q\geq 200 \cdot k^2 \cdot (en)^{2k - 1} \;.
        \]
        Let $\ba= (\alpha_1, \ldots, \alpha_n) \in \Fq$ be a uniformly random vector. 
        Then, with probability at least $0.99$, 
        for every collection of pairwise disjoint nonempty sets $I^1, \ldots, I^k \subset [n]$ such that $\sum_{i=1}^{k} |I^i|\leq 2k-1$,
        it holds that $\det(A_{I^1,\ldots, I^k} (\ba)) \neq 0$.
    \end{prop}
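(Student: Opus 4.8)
The plan is to apply the Schwartz--Zippel--DeMillo--Lipton lemma to a single product polynomial built from all the relevant determinants. The feature that makes this work for the \emph{arbitrary} prime power $q$ in the statement -- so that no prime-selection step as in \Cref{prop:existance-first-cond} is needed -- is the ``moreover'' part of \Cref{prop:formal-det-2}: every determinant $\det(A_{I^1,\dots,I^k}(\bX))$ has an integer coefficient equal to $\pm 1$, and $\pm 1$ remains nonzero modulo $\charac(\Fq)$, so each such determinant is automatically a nonzero polynomial over $\Fq$, whatever $q$ is.

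First I would fix a tuple $(I^1,\dots,I^k)$ of pairwise disjoint nonempty subsets of $[n]$ with $\sum_i|I^i|\le 2k-1$ (which forces $k\le\sum_i|I^i|\le 2k-1$; the case $k=1$ is trivial, since every such determinant is then the constant $1$). By the previous paragraph $\det(A_{I^1,\dots,I^k}(\bX))$ is a nonzero polynomial over $\Fq$. I would also record that it is homogeneous of degree exactly $\binom k2$: in the Leibniz expansion the term of a permutation $\sigma$ multiplies one entry from each of the columns $\sigma(1),\dots,\sigma(k)$, which have degrees $\sigma(1)-1,\dots,\sigma(k)-1$, summing to $\binom k2$ regardless of $\sigma$. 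Hence $\deg\det(A_{I^1,\dots,I^k})=\binom k2\le k^2/2$.

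Next I would set $G(\bX)=\prod_{(I^1,\dots,I^k)}\det(A_{I^1,\dots,I^k}(\bX))$, the product over all such tuples. By the above, $G$ is a nonzero polynomial over $\Fq$ with $\deg G\le N\cdot k^2/2$, where $N$ is the number of such tuples. To bound $N$, I would encode a tuple of total size $s$ by an ordered list of its $s$ distinct elements (those of $I^1$ first, then those of $I^2$, and so on) together with the $k-1$ internal block boundaries; this gives at most $n^s\binom{s-1}{k-1}$ tuples of total size $s$. Summing over $s=k,\dots,2k-1$, using $\binom{s-1}{k-1}\le 2^{2k-2}$ and $e>2$, one gets $N\le k\cdot 2^{2k-2}\,n^{2k-1}\le 4(en)^{2k-1}$, hence $\deg G\le 2k^2(en)^{2k-1}\le q/100$. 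Applying \Cref{lem:sz-lemma} with $S=\Fq$ then yields $\Pr_{\ba\sim\Fq^n}[G(\ba)=0]\le\deg G/q\le 1/100$, and whenever $G(\ba)\ne 0$ every factor is nonzero, i.e.\ $\det(A_{I^1,\dots,I^k}(\ba))\ne 0$ for all admissible $I^1,\dots,I^k$ -- which is exactly the claim.

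The only step that is not an immediate invocation of an already-established result is the counting bound on $N$: the constants must be tight enough that $N\cdot\binom k2$ stays below $q/100$. The ``list-plus-boundaries'' encoding is what supplies the slack, since it avoids the wasteful factor $k^s$ that would arise from counting surjections onto the $k$ blocks; beyond that the argument is just \Cref{prop:formal-det-2} followed by \Cref{lem:sz-lemma}.
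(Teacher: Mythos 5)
Your proof is correct and improves on the paper's argument in two substantive ways. First, the paper passes from $\mathbb{Z}[\bX]$ to $\Fq[\bX]$ via a primorial/prime-selection argument, claiming the existence of \emph{some} prime that does not annihilate all coefficients of $P$ --- which sits awkwardly against the statement, where $q$ is an arbitrary given prime power above the threshold. You instead invoke the ``moreover'' clause of \Cref{prop:formal-det-2}: each factor $\det(A_{I^1,\dots,I^k}(\bX))$ has a monomial with integer coefficient $\pm 1$ and so survives reduction modulo any characteristic; this dispenses with the prime-selection step and makes the proof valid for every prime power, exactly as stated. Second, your ordered-list-plus-boundaries encoding of the tuples, together with noting that each determinant is homogeneous of degree $\binom{k}{2}$, gives $\deg G \le 2k^2(en)^{2k-1}$, whereas the paper bounds the number of tuples of total size $\ell$ by $\binom{n}{\ell}k^\ell$ and obtains only $\deg P \le k^3(en)^{2k-1}$. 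That extra factor of roughly $k/2$ is not cosmetic: with $q \ge 200k^2(en)^{2k-1}$ as in the statement, the paper's degree bound yields failure probability at most $k/200$, which exceeds $1/100$ as soon as $k>2$, so your sharper count is what makes the stated field-size threshold actually suffice.
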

    \begin{proof}
        Define the following polynomial
        \[
        P(\bX) = \prod_{\substack{I^1, \ldots, I^k \\ |I^1| + \cdots + |I^k|\leq 2k-1\\|I^j|\geq 1\, \forall j\in [k]}}
        \det(A_{I^1, \ldots, I^k}(\bX)) \;.
        \]
        By \Cref{prop:formal-det-2}, each multiplied element in the product is nonzero and so $P(\bX)$ is nonzero over $\mathbb{Z}[\bX]$. 
        We next claim that 
        \[
        \deg(P(\bX)) \leq k^2 \cdot \left( \sum_{\ell=k}^{2k-1} \binom{n}{\ell} \sum_{\substack{i_1, \ldots, i_k\\ i_1 + \cdots + i_k = \ell \\ i_j \geq 1\,, \forall j\in [k]}} \binom{\ell}{i_1,i_2, \ldots, i_k} \right)
        \]
        Indeed, the degree of every determinant is $\leq k(k-1)/2 \leq k^2$ and the outer sum runs over all possible values $|I^1| + \cdots + |I^k|$ can take. Furthermore, the term $\binom{n}{\ell}$ is used to choose the $\ell$ indices from $n$ that will form the sets $I^1,\ldots, I^k$.
        For each such choice of $\ell$ elements, we count all the different ways to choose $k$ nonempty, pairwise disjoint sets out of these $\ell$ elements. This corresponds to the inner sum. 
        By the multinomial theorem, we get 
        \begin{align*}
            \deg(P(\bX)) &\leq k^2 \sum_{\ell = k}^{2k-1} \binom{n}{\ell} \cdot k^\ell \\
            &\leq k^2 \sum_{\ell = k}^{2k-1} \left( \frac{kne}{\ell} \right)^{\ell}\\
            &\leq k^3 \cdot (en)^{2k - 1} \;.
        \end{align*}

        As in \Cref{prop:existance-first-cond}, we now claim that there exists a prime $q$ in the interval $[200 C_{n,k}, 400 C_{n,k}]$ where $C_{n,k} = k^3 (en)^{2k-1}$ that does not divide at least one of the coefficient of $P$.
        Indeed, we note first that the number of monomials in $\det(A_{I^1,\ldots,I^k}(\bX))$ is at most $k! \cdot 2^{k-1}$. Indeed, we first write the determinant as a sum over $k!$ permutations, where each term of this sum is the multiplication of $k-1$ elements of the form $\sum_{j\in I_{\ell}} X_j^{s}$ for some $s \in [k-1]$ and $\ell\in [k]$. 
        Furthermore, since only $k-1$ out of the $k$ sets participate in this multiplication, the sum of elements that are multiplied is at most $\leq 2k-2$.
        
        Therefore, by the arithmetic mean geometric mean inequality, the expansion of each such product adds at most $\left(\frac{2k-2}{k-1}\right)^{k-1} = 2^{k-1}$ monomials to the total sum.
        Thus, $k!\cdot 2^{k-1}$ is an upper bound on the absolute value of a nonzero coefficient in $\det(A_{I^1, \ldots, I^k}(\bX))$. Consequently, the absolute value of a nonzero coefficient in $P(\bX)$ is at most 
        \[
        \left(k!\cdot 2^{k-1}\right)^{(en)^{2k-1}} < e^{k^3 \cdot (en)^{2k - 1}}
        \]
        where the inequality follows by Stirling's approximation. If a nonzero coefficient of $P$ is divisible by all primes in $[200C_{n,k}, 400C_{n,k}]$ then by the growth rate of the primorial function, the absolute value of the coefficient is $e^{200C_{n,k}(1 + o(1))}$, in contradiction.

        We conclude that $P$ is a nonzero polynomial in $\Fq[\bX]$. 
        Thus, by \Cref{lem:sz-lemma}, with probability at least $0.99$ a uniformly random vector $\ba \in \Fq^n$ where $q \geq 200C_{n,k}$, it holds that $P(\ba) \neq 0$ and the proposition follows.
    \end{proof}

    \section{Existence of $[n,k]$ RS codes robust to $(n-2k+1)$-permutation-insdel}
    We now prove the main theorem of this paper. In the proof, we shall assume that $\ba = (\alpha_1, \ldots, \alpha_n)$ is such that the conditions in both \Cref{prop:existance-first-cond} and \Cref{prop:existance-second-cond} hold. Then, for this choice of $\ba$, we will show that the corresponding $\RS$ is robust against the $(n-2k+1)$-permutation-insdel adversary by showing that the algebraic condition in \Cref{prop:algebraic-condition} holds.

    \begin{prop} \label{prop:two-conditions}
        Let $\ba = (\alpha_1, \ldots, \alpha_n) \in \Fq^n$ be such that the following two conditions hold:
        \begin{enumerate}
            \item \label{item:1} The points are pairwise distinct and for every pair $(I,J)$ of distinct-element sequences  
             of length $2k-1$ that contain at most $k$ minimal equal sub-pairs, it holds that $\det(V_{I,J}(\ba)) \neq 0$.
             \item \label{item:2} For every pairwise disjoint non-empty sets $I^1, \ldots, I^k, I^{k+1} \subset [n]$ where $\sum_{i=1}^{k+1}|I^i| \leq 2k-1$, we have that the rank of $A_{I^1, \ldots, I^k, I^{k+1}}(\ba)$ is $k$.
         \end{enumerate}
         Then, the respective $\RS$ code is robust against the $(n-2k+1)$-permutation-insdel adversary.
    \end{prop}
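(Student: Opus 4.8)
The plan is to verify, for the given evaluation vector $\ba$, the algebraic characterization of \Cref{prop:algebraic-condition}: that for every pair of ordered sets $I,J\subseteq[n]$ of size $2k-1$, either $\det(V_{I,J}(\ba))\neq 0$ or every vector in the kernel of $V_{I,J}(\ba)$ has the form $(0,f_1,\ldots,f_{k-1},-f_1,\ldots,-f_{k-1})$. Fixing such a pair, I would pass to its decomposition $A_1,\ldots,A_t,B$ in the sense of \Cref{def:I-J-decomposition}, where $t$ is the number of minimal equal sub-pairs, and split into two cases. If $t\le k$, then $(I,J)$ has at most $k$ minimal equal sub-pairs, so Condition~\ref{item:1} immediately gives $\det(V_{I,J}(\ba))\neq 0$ and the claim holds for this pair. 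The substance is the case $t\ge k+1$, where $\det(V_{I,J}(\ba))$ may well vanish.

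In that case I would take an arbitrary $\bfv=(f_0,f_1,\ldots,f_{k-1},-g_1,\ldots,-g_{k-1})$ in the kernel, read it as two polynomials $f=\sum_{i=0}^{k-1}f_ix^i$ and $g=\sum_{i=1}^{k-1}g_ix^i$ with $g(0)=0$ satisfying $f(\alpha_{I_s})=g(\alpha_{J_s})$ for all $s\in[2k-1]$, and aim to prove $f\equiv g$ — which, since $g(0)=0$, forces $f_0=0$, so that $\bfv$ has the desired form. Writing $h:=f-g$, a polynomial of degree $<k$, I would exploit the minimal equal sub-pairs: for each $j\in[t]$ the ordered sets $I_{A_j}$ and $J_{A_j}$ have the same underlying set $S_j$, with $|S_j|=|A_j|$, so summing the relations $f(\alpha_{I_s})=g(\alpha_{J_s})$ over $s\in A_j$ and re-indexing by $S_j$ gives $\sum_{m\in S_j}h(\alpha_m)=0$. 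By \Cref{clm:no-intersection-eq-sub-pair} (and because ordered sets have distinct entries) the sets $S_1,\ldots,S_t$ are pairwise disjoint with $\sum_j|S_j|\le 2k-1$, and there are at least $k+1$ of them. I would then view these identities as a homogeneous linear system in the $k$ coefficients of $h$ and apply Condition~\ref{item:2} to $S_1,\ldots,S_{k+1}$ — whose hypotheses (pairwise disjoint, nonempty, $\sum_{i=1}^{k}|S_i|\le 2k-1$) are satisfied — to conclude that the coefficient matrix of this system has full column rank $k$; hence every coefficient of $h$ vanishes, $f\equiv g$, and we are done. Combining the two cases and invoking \Cref{prop:algebraic-condition} yields the robustness of $\RS$.

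The step I expect to require the most care is combining Condition~\ref{item:2} with the sub-pair relations so as to kill all coefficients of $h$: the delicate point is the constant term of $h$, whose coefficient in the $j$-th relation $\sum_{m\in S_j}h(\alpha_m)=0$ is the multiplicity $|S_j|$ rather than $1$ (the value appearing in the first column of the matrices $A_{S_1,\ldots,S_{k+1}}$). Handling this correctly is where the large characteristic of $\Fq$ is used — it guarantees that each $|S_j|\le 2k-1$ is nonzero, so that Condition~\ref{item:2}'s full-rank conclusion can be transferred to the matrix that genuinely arises — and it is also where one must double-check the precise hypotheses of Condition~\ref{item:2} for the sets $S_1,\ldots,S_{k+1}$. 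Everything else — the reduction to \Cref{prop:algebraic-condition}, the dichotomy on $t$, and the summation identity within each sub-pair — is routine bookkeeping.
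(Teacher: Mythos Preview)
Your plan is the paper's own argument almost verbatim: reduce to \Cref{prop:algebraic-condition}, split on the number $t$ of minimal equal sub-pairs in the decomposition of $(I,J)$, dispatch $t\le k$ with Condition~\ref{item:1}, and for $t\ge k+1$ sum the relations $f(\alpha_{I_s})=g(\alpha_{J_s})$ over each $A_j$ (using that $I_{A_j}$ and $J_{A_j}$ coincide as sets) and invoke Condition~\ref{item:2} to force $f=g$. In the paper this last step is packaged as the linear map $G:\Fq^{<k}[x]\to\Fq^{k+1}$, $p\mapsto\bigl(\sum_{\ell\in A_j}p(\alpha_{I_\ell})\bigr)_{j}$, asserted to be injective because $A_{I_{A_1},\ldots,I_{A_{k+1}}}(\ba)$ has rank $k$.

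Where you go beyond the paper is in flagging that the matrix of $G$ in the monomial basis has first column $(|A_j|)_j$, not the all-ones column carried by the $A$-matrices of \Cref{def:A-matrix}, so that Condition~\ref{item:2} literally speaks about a different matrix. The paper's proof does not address this discrepancy at all. Your proposed repair, however, does not work either: even when every $|S_j|\le 2k-1$ is a unit in $\Fq$, full column rank of $A_{S_1,\ldots,S_{k+1}}(\ba)$ does \emph{not} transfer to the matrix with first column $(|S_j|)_j$, since no row or column scaling converts one into the other. A concrete obstruction already appears over $\mathbb{Q}$ with $k=4$: take three singleton sets with $\alpha$-values $1,2,3$ and two doubleton sets with $\alpha$-values $\{0,4\}$ and $\{-1,5\}$; then $h(x)=(x-1)(x-2)(x-3)$ lies in the kernel of the ``true'' matrix while the corresponding $A$-matrix still has rank $4$. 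Moreover, the proposition as stated carries no hypothesis on $\charac(\Fq)$. So the worry you raise is a genuine subtlety shared with the paper's own write-up; a correct repair would reformulate Condition~\ref{item:2} for the matrix whose first column records the sizes $|I^j|$, rather than appeal to the characteristic.
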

    \begin{proof}
             According to \Cref{prop:algebraic-condition}, in order for the $\RS$ to be robust against the $(n-2k+1)$-permutation-insdel adversary, we need to show that for every pair of distinct-element sequences $I,J$, of length $2k-1$, the only elements in the right kernel (if any exist) of $V_{I,J}(\ba)$ are of the form
         \begin{equation} \label{eq:kernel-des-form}
         (0, f_1, \ldots, f_{k-1}, -f_1, \ldots, -f_{k-1}) \;.    
         \end{equation}
         Since $\ba$ complies with Condition~\ref{item:1}, we focus on pairs $(I,J)$ of distinct-element sequences that contain $k+1$ minimal equal sub-pairs. We will prove that for such pairs, the only elements in the right kernel of $V_{I,J}(\ba)$ are of the form \eqref{eq:kernel-des-form}.
         Denote by $A_1, \ldots, A_{k+1} \subseteq [2k-1]$ the sets of indices such that for all $j\in [k+1]$, $(I_{A_j}, J_{A_j})$ is a minimal equal sub-pair. Let 
         \begin{equation} \label{eq:candidate-v}
             \bfv = (f_0, f_1, \ldots, f_{k-1}, -g_1, \ldots, -g_{k-1}) \in \ker(V_{I,J}(\ba))
         \end{equation}
         where $f_0, \ldots, f_{k-1}$ are the coefficients of $f(x) = \sum_{i=0}^{k-1}f_ix^i$ and $g_1, \ldots, g_{k-1}$ are the coefficients of $g(x) = \sum_{i=1}^{k-1}g_ix^i$ and it holds that $f(\alpha_{I_s}) = g(\alpha_{J_s})$ for all $s\in [2k-1]$. Note that $g(x)$ does not have a free coefficient.
         We restrict our point of view to $A_1, \ldots, A_{k+1}\subset[2k-1]$ and because $(I_{A_j},J_{A_j})$ is a minimal equal sub-pair for all $j\in [k+1]$, we get the following $k+1$ equalities
         \begin{equation} \label{eq:minimally-equalities}
             \sum_{\ell \in A_j} f(\alpha_{I_{\ell}}) = \sum_{ \ell \in A_j} g(\alpha_{J_{\ell}})\,, \qquad j\in [k+1]\;.
         \end{equation}
         Consider the following linear map
         \begin{align*}
             G:\F_q^{<k}[x] &\rightarrow \F_q^{k+1}\\
             p(x) &\rightarrow \left( \sum_{\ell \in A_1} p(\alpha_{I_{\ell}}), \ldots, \sum_{\ell \in A_{k+1}} p(\alpha_{I_{\ell}}) \right)\;.
         \end{align*}
         We claim that it is an injective map. Indeed, let $p(x), p'(x)\in \F_q^{<k}[x]$ such that $G(p(x)) = G(p'(x))$. This implies that the polynomial $p(x) - p'(x)\in \F_q^{<k}[x]$ satisfies
         \[
         \left( \sum_{\ell \in A_1} (p - p')(\alpha_{I_{\ell}}), \ldots,  \sum_{\ell \in A_{k+1}} (p-p')(\alpha_{I_{\ell}})\right) = {\bf 0} \;.
         \]
         According to Condition~\ref{item:2}, we have that the rank of $A_{I_{A_1},\ldots,I_{A_{k+1}}}(\ba)$ is $k$, i.e., full rank, and thus it must be that $p(x) = p'(x)$. 
         Thus, going back to our vector $\bfv$ from \eqref{eq:candidate-v}, the only possibility for all the equalities in \eqref{eq:minimally-equalities} to hold is if $f(x) = g(x)$.
         As $g(x)$ does not have a free coefficient, we get that $\bfv = (0,f_1, \ldots, f_{k-1}, -f_1, \ldots, -f_{k-1})$. We conclude that the condition in \Cref{prop:algebraic-condition} holds and thus, the corresponding $\RS$ code is robust against the $(n-2k+1)$-permutation-insdel adversary.
    \end{proof}

    \begin{remark} \label{rem:diff-in-conditions}
    We note that condition~\ref{item:2} in~\Cref{prop:two-conditions} is \emph{weaker} than the one in~\Cref{prop:existance-second-cond} and that it is implied by it. 
    Indeed, observe that if for any sequence of sets $I^1, \ldots, I^k$ that are pairwise disjoint, nonempty, and $\sum_{i=1}^k |I^i| \leq 2k-1$, it holds that $\det(A_{I^1, \ldots, I^{k}}(\ba)) \neq 0$ then clearly, if $I^1, \ldots, I^{k+1}$ are pairwise disjoint, nonempty, and $\sum_{i=1}^{k+1}|I^i| \leq 2k-1$, it holds that $\text{rank}(A_{I^1, \ldots, I^{k+1}}(\ba)) = k$.
    \end{remark}
    We are now ready to prove our main theorem, which is restated for convenience.
    
    \mainres*
    \begin{proof}
        Given \Cref{prop:two-conditions}, we need to show the existence of $\ba = (\alpha_1, \ldots, \alpha_n)$ such that conditions \ref{item:1} and \ref{item:2} hold.

        Set $q$ to be the smallest prime such that 
        \begin{align*}
        q &\geq \OurfieldSize\\
        &\geq 200 \cdot \max\left( k^2\binom{n}{2k-1}^2 (2k-1)!, k^3 (en)^{2k-1} \right) \;,
        \end{align*}
        where the inequality follows by applying the Stirling's approximation and for large enough $n$.
        By \Cref{prop:existance-first-cond}, and \Cref{prop:existance-second-cond}, a uniformly random chosen vector $\ba = (\alpha_1, \ldots, \alpha_n) \in \F_q^n$ satisfies Condition~\ref{item:1} with probability at least $0.99$ and satisfies Condition~\ref{item:2} with probability $0.99$ (see \Cref{rem:diff-in-conditions}). 
        Thus, with probability at least $0.98$, a randomly uniform vector in $\F_q^n$ satisfies both conditions. We conclude that such a vector exists.
    \end{proof}

\section{Explicit constructions}
In this section, we will show that the previous constructions of RS codes that can correct insdel errors \cite{con2023reed,con2023optimal} are in fact constructions of RS codes that are robust against the permutation-insdel adversary.
\subsection{The two-dimensional case}
In this section, we use exactly the constructions presented in \cite{con2023optimal}. In order not to repeat all the details, we will skip some details that are written in \cite{con2023optimal}.
\begin{thm} \label{thm:cnst-k-2}
    For any $n\geq 3$, there exists an explicit $\text{RS}_{n,2}(\alpha_1, \ldots, \alpha_n)$ code over a field $\Fq$ of order $q = O(n^3)$ that is robust against the $(n-3)$-permutation-insdel adversary.
\end{thm}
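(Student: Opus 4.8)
The plan is to invoke the structural characterization from \Cref{prop:algebraic-condition} and reduce the claim to a statement about the kernels of the $V$-matrices $V_{I,J}(\ba)$ for ordered pairs $(I,J)$ of size $2k-1 = 3$. Since $k=2$, a candidate kernel vector has the form $\bfv = (f_0, f_1, -g_1)$, and the ``good'' vectors we are allowed to see in the kernel are exactly those with $f_0 = 0$ and $g_1 = f_1$ (i.e., of the shape $(0, f_1, -f_1)$). So the task is: for the evaluation points $\alpha_1,\dots,\alpha_n$ of \cite{con2023optimal}, rule out the existence of a pair of ordered sets $I = \{I_1,I_2,I_3\}$, $J = \{J_1,J_2,J_3\}$ together with distinct affine-linear polynomials $f(x) = f_0 + f_1 x$ and $g(x) = g_1 x$ with $f(\alpha_{I_s}) = g(\alpha_{J_s})$ for $s = 1,2,3$ unless $f_0 = 0$ and $f_1 = g_1$. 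Equivalently, rule out $I,J$ and distinct degree-$<2$ polynomials $f,g$ with $f(\alpha_{I_s}) = g(\alpha_{J_s})$ for all three $s$ — which by \Cref{clm:correcting-cond} is precisely what robustness against the $n-3$-permutation-insdels adversary means.

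Next I would split into cases according to the decomposition of $(I,J)$ from \Cref{def:I-J-decomposition}. If $(I,J)$ has at most $k = 2$ minimal equal sub-pairs, then \Cref{prop:formal-det-1} gives $\det(V_{I,J}(\bX)) \neq 0$ as a formal polynomial, and we only need the evaluation points to avoid the zero set of the product of all these determinants (together with the Vandermonde factor ensuring distinctness). For $k=2$ this is a very low-degree condition, and I would check that the construction of \cite{con2023optimal} already satisfies it — indeed that construction was designed precisely so that $\det(V_{I,J}(\ba)) \neq 0$ whenever $I,J$ agree on at most $k-1 = 1$ coordinate, which covers the free-of-equalities and single-minimal-sub-pair cases. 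The remaining case is when $(I,J)$ has exactly $k+1 = 3$ minimal equal sub-pairs $A_1, A_2, A_3$; since $|I| = 3$, each $A_j$ is a singleton, so $I = J$ as sets, meaning $J$ is a permutation of $I$. Here the argument of \Cref{prop:two-conditions} specializes: the $3\times 3$ ``$A$-matrix'' becomes the $3\times 2$ matrix whose rows are $(1, \alpha_{I_j})$, and I need this matrix to have rank $2$, i.e., the three points $\alpha_{I_1}, \alpha_{I_2}, \alpha_{I_3}$ are not all equal — which holds automatically since the $\alpha_i$ are distinct. Then the injectivity argument forces $f = g$ as polynomials, hence $f_0 = 0$ and $f_1 = g_1$, giving a kernel vector of the allowed form.

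Concretely I would re-run the determinant computations of \cite{con2023optimal} for its specific evaluation vector, verifying that the only new pairs $(I,J)$ one must worry about beyond those handled in \cite{con2023optimal} are the permutation pairs ($I = J$ as sets, $I \neq J$ as ordered sets), and for those the distinctness of the $\alpha_i$ is enough. The field size $q = O(n^3)$ is inherited verbatim from \cite{con2023optimal} since we impose no new algebraic constraints beyond the ones already verified there plus $\alpha_i \neq \alpha_j$. The main obstacle I anticipate is bookkeeping: making sure that the case analysis of \cite{con2023optimal}, which was phrased in terms of increasing ordered sets agreeing on at most $k-1$ coordinates, genuinely exhausts all ordered pairs once we strengthen the requirement to robustness against permutations, and in particular that no pair with $\det(V_{I,J}(\ba)) = 0$ escapes the kernel analysis — but \Cref{prop:two-conditions} with $k=2$ gives a clean template for this, so the remaining work is to confirm that the explicit points of \cite{con2023optimal} meet its two hypotheses.
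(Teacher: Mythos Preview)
Your approach is the paper's: verify the two hypotheses of \Cref{prop:two-conditions} for the explicit points of \cite{con2023optimal}. Your treatment of Condition~\ref{item:2} (the $t=3$ case) is the same, though note that three singleton sub-pairs force $I=J$ as \emph{ordered} sets, not merely as sets; the rank argument still goes through.

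The one concrete case your plan leaves unaddressed is $t=2$ with two size-$1$ sub-pairs, i.e., $I$ and $J$ agree on exactly two coordinates. This does not fall under ``agree on at most $k-1=1$ coordinate,'' so it is not covered by invoking \cite{con2023optimal}, and it is also not a permutation pair (one may have $I\neq J$ as sets). The paper closes this gap with a one-line computation: if w.l.o.g.\ $I_1=J_1$ and $I_2=J_2$, then
\[
\det(V_{I,J}(\ba)) = (\alpha_{J_3}-\alpha_{I_3})(\alpha_{J_2}-\alpha_{J_1}),
\]
which is nonzero because $I_3\neq J_3$ (else $t=3$) and the $\alpha_i$ are distinct. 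With this added, Condition~\ref{item:1} is fully verified. The paper also makes explicit something you assume implicitly: the argument in \cite{con2023optimal} never uses that $I,J$ are increasing, so its conclusion extends to arbitrary ordered pairs agreeing on at most one coordinate (in particular the $t=2$, sizes $(1,2)$ sub-case).
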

\begin{proof}
    In this scenario, the $V$-matrices are of the form
    \[
    V_{I,J} = 
        \begin{pmatrix}
		1 & \alpha_{I_1} & \alpha_{J_1} \\ 
		1 & \alpha_{I_2} & \alpha_{J_2} \\
		1 & \alpha_{I_3} & \alpha_{J_3} \\
	\end{pmatrix}
    \]
    In \cite{con2023optimal}, the authors presented points $\alpha_1, \ldots, \alpha_n$ over $\Fq$, $q = O(n^3)$ such that $\det(V_{I,J} (\ba)) \neq 0$ for any two \emph{ordered} sets that agree on at most one coordinate. It was not explicitly stated that the proof works for any distinct-element sequences, but one can verify that the authors assume no order on $I,J$.
    Therefore, if $(I,J)$ contain two equal minimal sub-pairs where the first one is of size one and the other is of size two, then we also have that $\det(V_{I,J} (\ba)) \neq 0$. 

    Furthermore, we claim that in fact, their claim is true even if $I,J$ agree on two coordinates (namely, it has two minimal equal sub-pairs of size one).
    Indeed, assume w.l.o.g. that $\alpha_{I_1} = \alpha_{J_1}$ and $ \alpha_{I_2} = \alpha_{J_2}$ and note that in this case, $\det(V_{I,J}) = (\alpha_{J_3} - \alpha_{I_3}) (\alpha_{J_2} - \alpha_{J_1}) \neq 0$ as both terms are nonzero. 
    We conclude that Condition~\ref{item:1} in \Cref{prop:two-conditions} holds.

    Now, we observe that condition~\ref{item:2} clearly holds as for any three sets $I^1, I^2, I^3$ that are disjoint, nonempty, and $|I^1| + |I^2| + |I^3|\leq 3$, it must be that $|I^i| = 1$ and then $\text{rank}(A_{I^1, I^2, I^3}) = 2$.

    Thus, both conditions in \Cref{prop:two-conditions} hold and thus the respective $\text{RS}_{n,2}(\alpha_1, \ldots, \alpha_n)$ (where the explicit points are given in \cite{con2023optimal}) is robust against the $n-3$-permutation-insdel adversary.
\end{proof}
\subsection{The general dimension case}
In this section, we show that the construction \cite[Construction 27]{con2023reed} that presents an $\RS$ code that can correct $n-2k+1$ insdel errors is actually also a construction of RS code that is robust against the $(n-2k+1)$-permutation-insdel adversary. 
Specifically, we prove the following.
\begin{thm}\label{thm:cnst-k-gen}
    Let $k$ and $n$ be positive integers where $2k \leq n$. There is a deterministic construction of an $\RS$ code over a field $\Fq$ of order $q = n^{O(k^2 ((2k)!)^2)}$ that is robust against the $n-2k+1$-permutation-insdel adversary.
\end{thm}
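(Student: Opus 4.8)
The plan is to mimic the existential proof of \Cref{thm:main}, but replace the probabilistic argument (Schwartz–Zippel together with a prime-choosing step) with the explicit evaluation points of \cite[Construction 27]{con2023reed}. Recall that in that construction the evaluation points are chosen so that a single ``master'' polynomial — a product of all the relevant determinants $\det(V_{I,J}(\bX))$ over increasing ordered sets $I,J$ agreeing on at most $k-1$ coordinates, together with the Vandermonde factor $\prod_{i<j}(X_i - X_j)$ — evaluates to a nonzero element. What I would do first is enlarge this master polynomial. Define
\[
F(\bX) = \prod_{i<j}(X_i - X_j) \cdot \prod_{(I,J)} \det\bigl(V_{I,J}(\bX)\bigr) \cdot \prod_{(I^1,\ldots,I^k)} \det\bigl(A_{I^1,\ldots,I^k}(\bX)\bigr),
\]
where the second product ranges over all (equivalence classes of) pairs of ordered sets of size $2k-1$ containing at most $k$ minimal equal sub-pairs, and the third ranges over all pairwise disjoint nonempty $I^1,\ldots,I^k\subset[n]$ with $\sum_i|I^i|\le 2k-1$. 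By \Cref{prop:formal-det-1} and \Cref{prop:formal-det-2} each factor is a nonzero polynomial over $\mathbb{Z}$, so $F$ is nonzero. Then I would re-run the field-size bookkeeping of \cite[Construction 27]{con2023reed}: bound $\deg(F)$ and the bit-size of its coefficients exactly as in the proofs of \Cref{prop:existance-first-cond} and \Cref{prop:existance-second-cond} (the new $A$-factors contribute only a lower-order term — degree $O(k^3 (en)^{2k-1})$ and coefficients of size $\exp(O(k^2(en)^{2k-1}))$, both dominated by the $V$-factor contribution of $\binom{n}{2k-1}^2 (2k-1)!$-many determinants), and conclude that the same tower-of-field-extensions construction of \cite{con2023reed}, now applied to this slightly larger $F$, produces an explicit $\ba = (\alpha_1,\ldots,\alpha_n)$ over a field of size $q = n^{O(k^2((2k)!)^2)}$ with $F(\ba)\neq 0$.

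Once such $\ba$ is in hand, the two remaining conditions of \Cref{prop:two-conditions} follow immediately: the Vandermonde factor being nonzero gives distinctness of the $\alpha_i$; the nonvanishing of every $\det(V_{I,J}(\ba))$ factor gives Condition~\ref{item:1}; and the nonvanishing of every $\det(A_{I^1,\ldots,I^k}(\ba))$ factor gives, by \Cref{rem:diff-in-conditions}, the (weaker) rank-$k$ statement of Condition~\ref{item:2}. Hence by \Cref{prop:two-conditions} the code $\RS$ is robust against the $n-2k+1$-permutation-insdels adversary, which is the claim.

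The main obstacle I anticipate is not the argument itself but the \emph{bookkeeping of the field size}: I must verify that folding the $A$-determinant product into the master polynomial of \cite[Construction 27]{con2023reed} does not blow up the exponent beyond $O(k^2((2k)!)^2)$. Since \cite{con2023reed} already uses a recursive/iterated construction where the final field size is governed by the degree and coefficient-size of the master polynomial, and since (as noted above) the $A$-factors are strictly dominated by the $V$-factors in both degree and coefficient magnitude, the asymptotics of the field size should be unchanged; but making this precise requires carefully re-examining the internals of \cite[Construction 27]{con2023reed} rather than treating it as a black box. A secondary, more delicate point is checking that the proof in \cite{con2023reed} that its evaluation points kill $\prod \det(V_{I,J})$ genuinely extends to pairs $(I,J)$ of arbitrary (not necessarily increasing) ordered sets with at most $k$ minimal equal sub-pairs — i.e. that \Cref{prop:formal-det-1} is the right replacement for \Cref{nonzero} and that nothing in their construction secretly relied on $I,J$ being increasing; I expect this to go through exactly as in the $k=2$ case handled in \Cref{thm:cnst-k-2}, but it should be stated explicitly.
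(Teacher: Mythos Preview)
Your high-level strategy --- show that the explicit points of \cite[Construction~27]{con2023reed} satisfy both conditions of \Cref{prop:two-conditions} --- is exactly what the paper does. However, your description of \emph{how} Construction~27 works is off, and this matters for execution.

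You frame Construction~27 as a ``master polynomial'' argument: build a product $F(\bX)$ of all relevant determinants, bound its degree and coefficient size, and then invoke the construction as a black box that outputs $\ba$ with $F(\ba)\neq 0$, with field size governed by $\deg F$ and coefficient magnitude. That is the Schwartz--Zippel/primorial machinery of \Cref{prop:existance-first-cond} and \Cref{prop:existance-second-cond}, but it is \emph{not} how Construction~27 operates. In Construction~27 the points are fixed once and for all as $\alpha_i=(\gamma-i)^\ell$ with $\ell=((2k)!)^2$, and the nonvanishing of each individual determinant is proved directly via the Mason--Stothers (polynomial $abc$) theorem: one writes the determinant as a sum of monomials in the $(\gamma-i)^\ell$, assumes it vanishes, extracts a minimal $\Fp$-linear dependence among the resulting univariate polynomials in $\gamma$, and derives a contradiction between the degree lower bound $\ell$ and the Mason--Stothers upper bound. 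The field size $p^{k^2\ell}$ comes from the extension degree needed for $\gamma$, not from any master-polynomial degree count.

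Consequently, ``folding the $A$-determinant product into the master polynomial'' is not a meaningful step here. What the paper actually does --- and what you would need to do --- is run a \emph{second, separate} Mason--Stothers argument for the $A$-determinants: bound the number of monomials in $\det(A_{I^1,\ldots,I^k}(\bX))$ by $k!\cdot 2^{k-1}$, bound the number of distinct linear factors $(\gamma-i)$ appearing in each monomial by $k-1$, and check that the resulting Mason--Stothers upper bound $(k!)^2 2^{2k-2}(k-1)$ is beaten by $\ell=((2k)!)^2$. This is new work, not bookkeeping, and it is where the content of the proof lies. Your anticipated ``main obstacle'' (field-size bookkeeping for the enlarged $F$) is a red herring; the real obstacle is the one you flagged as secondary.
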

We recall the construction:
	\begin{cnst} \cite[Construction 27]{con2023reed}\label{cnst:abc-rs}
		Let $k$ be a positive integer and set $\ell = ((2k)!)^2$.
		Fix a finite field $\Fp$ for a prime $p > k^2 \cdot \ell$ and let $n$ be an integer such that $2k-1< n \leq p$. Let $\Fq$ be a field extension of $\Fp$ of  degree $k^2\cdot \ell$ and let $\gamma\in \Fq$ be such that $\Fq = \Fp(\gamma)$. Hence, each element of $\Fq$ can be represented as a polynomial in $\gamma$, of degree less than $k^2\ell$, over $\Fp$. Define the $\RS$ code by setting 
		 $\alpha_i := (\gamma-i)^{\ell}$ for  $1\leq i\leq n$.
	\end{cnst}

    Another ingredient that we need is a version of the Mason--Stothers theorem that was used in \cite{con2023reed}.

    \begin{thm}[``Moreover part'' of Proposition 5.2 in         \cite{vaserstein2003vanishing}] \label{thm:abc-poly}
		Let $m \geq 2$ and $Y_0(x) = Y_1(x) + \ldots + Y_m(x)$ with $Y_j(x)\in \Fp [x]$. Suppose that $\gcd (Y_0(x), \ldots, Y_m(x)) = 1$, and  that $Y_1(x), \ldots, Y_m(x)$ are linearly independent over $\Fp (x^p)$.\footnote{$\Fp (x^p)$ is the field of rational functions in $x^p$. Namely, its elements are $f(x^p)/g(x^p)$ where $f(x),g(x) \in \Fp[x]$ and $g(x)\not \equiv 0$.} Then, 
		\[
		\deg (Y_0(x)) \leq -\binom{m}{2} + (m-1)\sum_{j=0}^{m} \nu (Y_j(x))\;.
		\]
		where $\nu (Y_j(x))$ is the number of distinct roots of $Y_j(x)$ whose multiplicity is not divisible by $p$.
	\end{thm}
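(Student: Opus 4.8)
The plan is to prove this exactly as one proves the polynomial $abc$-theorem (Mason--Stothers): via the Wronskian, but with the local bookkeeping done carefully enough to land the sharp linear constant $m-1$ rather than the quadratic Brownawell--Masser constant $\binom{m}{2}$. Write $d_i=\deg Y_i$ and, for $\alpha$ in a fixed algebraic closure $\overline{\Fp}$, $e_i(\alpha)=\operatorname{ord}_\alpha(Y_i)$, so that $\sum_\alpha e_i(\alpha)=d_i$ and $\nu(Y_i)=\#\{\alpha:e_i(\alpha)>0\}$. Let $W:=W(Y_1,\dots,Y_m)=\det\bigl(Y_i^{(j-1)}\bigr)_{1\le i,j\le m}$ be the Wronskian. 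First I would record two facts: (i) the hypothesis that $Y_1,\dots,Y_m$ are linearly independent over $\Fp(x^p)$ is exactly the condition that makes $W\not\equiv 0$ (the classical Wronskian criterion over the field of $d/dx$-constants, $\Fp(x^p)$ being precisely that field of constants on $\Fp(x)$; in particular this forces $m\le p$, since $[\Fp(x):\Fp(x^p)]=p$) --- this is the only place the independence hypothesis gets used; and (ii) since $Y_0=Y_1+\dots+Y_m$, adding all columns of the Wronskian matrix into any single column shows $W=\pm\,W\bigl((Y_i)_{i\in S}\bigr)$ for \emph{every} $m$-element subset $S\subseteq\{0,1,\dots,m\}$. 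Taking $S=\{1,\dots,m\}$ and counting degrees in the permutation expansion of the determinant immediately gives the upper bound $\deg W\le\sum_{i=1}^{m}d_i-\binom{m}{2}$, because each monomial $\prod_i Y_i^{(\sigma(i)-1)}$ has degree at most $\sum_i\bigl(d_i-(\sigma(i)-1)\bigr)=\sum_{i=1}^{m}d_i-\binom{m}{2}$.

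The core of the proof is a matching \emph{lower} bound on $\deg W$ that also involves $d_0$, obtained by working locally at each $\alpha$. Put $c_\alpha=\#\{\,i\in\{0,\dots,m\}:e_i(\alpha)>0\,\}$; since $\gcd(Y_0,\dots,Y_m)=1$, no $\alpha$ is a common zero of all $m+1$ polynomials, so $c_\alpha\le m$ and one may choose an $m$-element subset $S_\alpha\subseteq\{0,\dots,m\}$ containing every index $i$ with $e_i(\alpha)>0$. Writing $W=\pm W\bigl((Y_i)_{i\in S_\alpha}\bigr)$, expanding the determinant over permutations, and using the elementary valuation bound $\operatorname{ord}_\alpha\!\bigl(Y_i^{(r)}\bigr)\ge\max\{0,\,e_i(\alpha)-r\}$ (still valid in characteristic $p$, even when a high derivative vanishes identically), one sees that in each monomial the $c_\alpha$ indices with $e_i(\alpha)>0$ are assigned $c_\alpha$ \emph{distinct} derivative orders taken from $\{0,\dots,m-1\}$ --- whose sum is at most $\sum_{j=1}^{c_\alpha}(m-j)=c_\alpha m-\binom{c_\alpha+1}{2}$ --- while the remaining indices contribute $0$ to the valuation rather than a negative quantity. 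This gives, for every $\alpha$,
\[
\operatorname{ord}_\alpha(W)\ \ge\ \sum_{i=0}^{m}e_i(\alpha)\ -\ c_\alpha m\ +\ \binom{c_\alpha+1}{2}.
\]
Summing over all $\alpha$ (only zeros of some $Y_i$ contribute a nonzero right-hand side), and using $\sum_\alpha\sum_{i=0}^{m}e_i(\alpha)=\sum_{i=0}^{m}d_i$, $\sum_\alpha c_\alpha=\sum_{i=0}^{m}\nu(Y_i)$, and $\binom{c+1}{2}\ge c$ for $c\ge1$, I get $\deg W\ \ge\ \sum_{i=0}^{m}d_i-(m-1)\sum_{i=0}^{m}\nu(Y_i)$.

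Combining the two bounds and cancelling $\sum_{i=1}^{m}d_i$ yields $\deg Y_0\le-\binom{m}{2}+(m-1)\sum_{i=0}^{m}\nu(Y_i)$, which is the claim (for $m=2$ this recapitulates the usual Mason--Stothers bound). The step I expect to be the main obstacle is precisely the local lower bound on $\operatorname{ord}_\alpha(W)$: the naive estimate only gives $\operatorname{ord}_\alpha(W)\ge\sum_i e_i(\alpha)-\binom{m}{2}$, which would produce the weaker constant $\binom{m}{2}$; extracting $m-1$ forces one to use both the hypothesis $\gcd(Y_0,\dots,Y_m)=1$ (so that at most $m$ of the $Y_i$ vanish at any point, which is what makes the choice of $S_\alpha$ possible) and the finer observation that indices at which no vanishing occurs cost nothing in the determinant expansion. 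The secondary care needed is the characteristic-$p$ bookkeeping: checking that the valuation inequalities survive when high derivatives vanish identically, and justifying $W\ne 0$ from the stated independence --- the nontrivial direction, for which one invokes the classical Wronskian criterion over $\Fp(x^p)$.
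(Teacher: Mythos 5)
The paper itself does not prove this theorem; it is imported verbatim as the ``moreover'' clause of Proposition~5.2 of \cite{vaserstein2003vanishing} and used as a black box, so there is no in-paper proof to compare against. Your Wronskian argument is the standard route to this generalized Mason--Stothers bound, and after checking each step I believe it is correct: the column relation $v_0 = v_1+\dots+v_m$ does give $W = \pm W\bigl((Y_i)_{i\in S}\bigr)$ for every $m$-subset $S$ of $\{0,\dots,m\}$; the linear independence over $\Fp(x^p)$ (the constants of $d/dx$ on $\Fp(x)$) is exactly what the Wronskian criterion needs to guarantee $W\neq 0$; the global degree bound $\deg W \le \sum_{i=1}^m d_i - \binom{m}{2}$ is elementary; and the local estimate $\operatorname{ord}_\alpha(W)\ge \sum_i e_i(\alpha) - c_\alpha m + \binom{c_\alpha+1}{2}$, obtained by choosing $S_\alpha$ to contain all the at most $m$ indices vanishing at $\alpha$ (which is where $\gcd=1$ enters) and noting that the non-vanishing indices contribute nothing negative, is the genuinely sharp point. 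Summing, using $\sum_\alpha c_\alpha = \sum_i \nu(Y_i)$ and $\binom{c+1}{2}\ge c$, and cancelling $\sum_{i\ge 1}d_i$ gives exactly the stated inequality. The one thing I would tighten in a write-up is the treatment of terms where some $Y_i^{(r)}$ vanishes identically: rather than assigning it an order, simply discard the corresponding monomial from the determinant expansion (its contribution to $W$ is zero, so it can only help the lower bound); otherwise the bookkeeping is clean.
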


        We are now ready to prove our theorem, and again we shall only indicate which parts of the proof of \cite{con2023reed} change in order not to repeat all the details.
        \begin{prop}
            The $\RS$ code defined in \Cref{cnst:abc-rs} is robust against the $n-2k+1$-permutation-insdel adversary. 
        \end{prop}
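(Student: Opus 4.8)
The plan is to verify that the evaluation points $\alpha_i = (\gamma - i)^{\ell}$ from \Cref{cnst:abc-rs}, with $\ell = ((2k)!)^2$, satisfy the two conditions of \Cref{prop:two-conditions}, from which robustness against the $n-2k+1$-permutation-insdels adversary follows immediately. So the proof splits into two parts, and I expect the bulk of the work — and the main obstacle — to be Condition~\ref{item:1}, since Condition~\ref{item:2} is the genuinely new piece but turns out to be the easier Vandermonde-type computation, while Condition~\ref{item:1} must extend the Mason--Stothers argument of \cite{con2023reed} to unordered sets with up to $k$ minimal equal sub-pairs.

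For Condition~\ref{item:1}, I would argue as follows. Fix ordered sets $I, J$ of size $2k-1$ with decomposition $A_1, \ldots, A_t, B$ where $t \le k$, and suppose toward a contradiction that $\det(V_{I,J}(\ba)) = 0$, i.e.\ there are distinct polynomials $f = \sum_{i=0}^{k-1} f_i x^i$ and $g = \sum_{i=1}^{k-1} g_i x^i$ (not both trivial in the relevant sense) with $f(\alpha_{I_s}) = g(\alpha_{J_s})$ for all $s \in [2k-1]$. Substituting $\alpha_i = (\gamma-i)^\ell$ and clearing denominators turns each of these $2k-1$ scalar equations in $\Fq$ into a polynomial identity in $\Fp[x]$ after replacing $\gamma$ by the indeterminate $x$ (this is the key trick of \cite{con2023reed}: an identity in $\Fp[\gamma]$ of degree $< [\Fq:\Fp]$ lifts to an identity in $\Fp[x]$). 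One then forms the appropriate $\gcd$-reduced sum-zero relation among the polynomials $f((x-I_s)^\ell) - g((x-J_s)^\ell)$ (or the finer pieces dictated by the decomposition), checks linear independence over $\Fp(x^p)$ — here the factor $\ell = ((2k)!)^2$ guarantees each relevant exponent is divisible by enough to make the summands genuinely distinct modulo $p$-th powers, and $p > k^2\ell$ keeps everything in range — and applies \Cref{thm:abc-poly}. Comparing the degree bound it yields against the actual degree (which is $\Theta(k\ell)$ on the left because of the $k-1$ powers, versus roughly $(m-1)$ times the number of distinct roots on the right) forces a contradiction, exactly as in \cite{con2023reed}; the only thing to re-examine is that dropping the ``increasing'' hypothesis on $I, J$ and allowing minimal equal sub-pairs doesn't break the counting, which is precisely what \Cref{prop:formal-det-1} was set up to handle at the level of formal determinants.

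For Condition~\ref{item:2}, I would take pairwise disjoint nonempty $I^1, \ldots, I^{k+1} \subset [n]$ with $\sum_{i=1}^{k}|I^i| \le 2k-1$ and show $\operatorname{rank}(A_{I^1,\ldots,I^{k+1}}(\ba)) = k$. By \Cref{rem:diff-in-conditions} it suffices to show $\det(A_{I^1,\ldots,I^k}(\ba)) \ne 0$ for all such choices; suppose not, so some nontrivial $p(x) = \sum_{i=0}^{k-1} c_i x^i \in \Fq^{<k}[x]$ satisfies $\sum_{j \in I^s} p(\alpha_j) = 0$ for $s = 1, \ldots, k$. Again substitute $\alpha_j = (\gamma - j)^\ell$, lift to $\Fp[x]$, and obtain $k$ polynomial identities $\sum_{j \in I^s} p((x-j)^\ell) = 0$. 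Here the argument should be cleaner than in part one: the polynomials $p((x-j)^\ell)$ for distinct $j$ are ``spread out'' by the shifts, and one can either run a direct degree/valuation argument or invoke \Cref{thm:abc-poly} once more after writing a single $\gcd$-reduced additive relation; the disjointness of the $I^s$ and the bound $\sum |I^s| \le 2k-1 < 2k \le n \le p$ keep the total number of shifted roots small enough that the Mason--Stothers degree bound is violated unless $p \equiv 0$.

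The main obstacle is getting the bookkeeping in Condition~\ref{item:1} right: verifying that the linear-independence-over-$\Fp(x^p)$ hypothesis of \Cref{thm:abc-poly} still holds when $I, J$ are unordered and carry several minimal equal sub-pairs, and that the resulting degree inequality still fails — i.e.\ re-running the \cite{con2023reed} estimate while tracking how the decomposition $A_1,\ldots,A_t,B$ (rather than a single ``agree on $\le k-1$ coordinates'' condition) enters the count of distinct linear factors on the right-hand side. I expect this to amount to reproducing the \cite{con2023reed} computation essentially verbatim once \Cref{prop:formal-det-1} has done the combinatorial heavy lifting, so the proof can reasonably be presented as ``the argument of \cite[Section~5]{con2023reed} applies mutatis mutandis, with \Cref{prop:formal-det-1} in place of \cite[Proposition~18]{con2023reed}, together with the new verification of Condition~\ref{item:2} above.''
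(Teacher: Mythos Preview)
Your overall plan --- verify the two conditions of \Cref{prop:two-conditions} for the points of \Cref{cnst:abc-rs} --- is exactly the paper's, and your bottom line for Condition~\ref{item:1} (``rerun the \cite{con2023reed} argument with \Cref{prop:formal-det-1} in place of \cite[Proposition~18]{con2023reed}'') is correct and matches the paper. However, the intermediate sketch you give for Condition~\ref{item:1}, and more seriously your entire approach to Condition~\ref{item:2}, route the argument through a kernel vector (the polynomials $f,g$, respectively $p$) with coefficients in $\Fq$, and this is where there is a genuine gap.

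The ``lift to $\Fp[x]$'' step only works when the expression being lifted is a polynomial in $\gamma$ \emph{with coefficients in $\Fp$} and of degree below $[\Fq:\Fp]=k^2\ell$. That is precisely why the \cite{con2023reed} argument --- and the paper's treatment of Condition~\ref{item:2} --- works with the \emph{determinant itself}: each monomial $\prod X_{i_s}^{e_s}$ of the formal determinant becomes $\pm\prod(\gamma-i_s)^{\ell e_s}\in\Fp[\gamma]$, a polynomial over $\Fp$ with at most $k-1$ distinct roots. Your kernel vector $p=\sum c_i x^i$ has $c_i\in\Fq$, so $\sum_{j\in I^s}p((\gamma-j)^\ell)=0$ is merely an identity in $\Fq$, not a low-degree polynomial identity over $\Fp$, and the lift is unjustified. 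One can salvage a kernel vector over $\Fp(x)$ by first lifting the singular matrix, but then the $c_i(x)$ are unknown polynomials and you lose all control of the radical count $\nu$ in \Cref{thm:abc-poly}. The paper instead writes $\det(A_{I^1,\ldots,I^k}(\ba))$ as a sum of at most $k!\cdot 2^{k-1}$ monomials in $\Fp[\gamma]$, extracts a minimal zero-sum sub-relation $P_0=\sum_{i=1}^m a_iP_i$ with $m<k!\cdot 2^{k-1}$, and applies \Cref{thm:abc-poly} with $\nu(P_i)\le k-1$ to force $\deg(\overline{P_0})<(k!)^2 2^{2k-2}(k-1)$, contradicting $\deg(\overline{P_0})\ge\ell=((2k)!)^2$. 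That is the missing idea in your Condition~\ref{item:2} argument.
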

        \begin{proof}
            We observe that in the proof of \cite[Proposition 28]{con2023reed} which proves that the code can correct $n-2k+1$ insdel errors, the authors make use of \Cref{bad}. 
            While this suffices for correcting only insdel errors, as discussed in this paper, it is not enough for the permutation-insdel adversary. 
            The analogous proposition in this paper is \Cref{prop:existance-first-cond}. 
            It can be readily verified that if we use \Cref{prop:existance-first-cond} instead of \Cref{bad} in the proof of \cite[Proposition 28]{con2023reed}, then we get that for the $\ba = (\alpha_1, \ldots, \alpha_n)$ defined in \Cref{cnst:abc-rs}, Condition~\ref{item:1} holds.

            We now turn to prove that the condition in \Cref{prop:existance-second-cond} also holds (and recall the condition in \Cref{prop:existance-second-cond} implies Condition~\ref{item:2} in \Cref{prop:two-conditions}). Namely, for every pairwise disjoint, nonempty sets $I^1, \ldots, I^k$ where $\sum_{i=1}^k|I^i|\leq 2k-1$, it holds that $\det(A_{I^1,\ldots, I^k}(\ba)) \neq 0$. 
            Assume that there is a collection of such sets for which $\det(A_{I^1,\ldots, I^k}(\ba)) = 0$.

            Writing $\det(A_{I^1,\ldots, I^k}(\bX))$ as sum of monomials, we recall that we can bound the number of monomials by $k! \cdot 2^{k-1}$. 
            Therefore, we can write
            \begin{equation} \label{eq:contra-mason-eq}
                \det(A_{I^1,\ldots, A^k}(\ba)) = P_0(\gamma) + P_1(\gamma) + \ldots + P_{t}(\gamma) = 0 \;,
            \end{equation}
            in $\Fq$ where $t < k! \cdot 2^{k-1}$ and each one of the $P_i(\gamma)$s is a univariate polynomial in $\gamma$ over $\Fp$. 
            Note that $\deg(P_i)\leq \frac{(k-1)k}{2}\ell < k^2\ell$ and that \eqref{eq:contra-mason-eq} also holds over $\Fp[\gamma]$.
            Moreover, assume that $P_0, \ldots, P_m $ is a minimal subset among $\{P_i\}_{i\geq 0}$ that spans $0$ over $\Fp$. 
            The existence of such a set follows from \eqref{eq:contra-mason-eq}. 
            Note that it must be that $m\geq 2$. Indeed, observe that the mapping $X_i\mapsto (\gamma-i)^\ell$ is injective on the  set of monomials. Therefore, there cannot be two monomials in $\det(A_{I^1,\ldots, I^k}(\bX))$ (written as a sum of monomials) that are mapped to the same polynomial in $\gamma$. 
            Therefore, for some $a_i\in \Fp^*$ we write
            \[
            P_0 = \sum_{i=1}^m a_i P_i \;.
            \]
            Now, let $Q = \gcd(P_0, P_1, \ldots, P_m)$ and observe that $\deg(Q) \leq \ell (k^2 - 1)$. Denote by $\overline{P_i} = P_i/Q$. We borrow another claim from \cite{con2023reed} whose proof is omitted since it is identical to that in \cite[Claim 29]{con2023reed}
            \begin{claim}\cite[Claim 29]{con2023reed}
                The polynomials $\overline{P_1}, \ldots, \overline{P_m}$ are linearly independent over $\Fp (\gamma^p)$.
            \end{claim}
            Given the claim, we have the following three properties on the collection of the $P_i$s: (1) $\gcd(\overline{P_0}, \ldots, \overline{P_m}) = 1$, (2) $\nu(P_i(\gamma)) \leq k-1$, and (3) the polynomials $\overline{P_1}, \ldots, \overline{P_m}$ are linearly independent over $\Fp (\gamma^p)$. Thus, by \Cref{thm:abc-poly}, 
            \begin{align*}
                \deg(\overline{P_0}) &\leq -\binom{m}{2} + m\sum_{i=1}^m\nu(\overline{P_i})\\
                &\leq m^2 (k-1) \\
                &< (k!)^2 2^{2k-2}(k-1) \;.
            \end{align*}
            On the other hand, $\deg(\overline{P_0})\geq \ell = ((2k)!)^2$ which is clearly greater than $(k!)^2 2^{2k-2}(k-1)$, in contradiction.
        \end{proof}
\section*{Acknowledgments}
The author would like to thank Yuval Ishai for motivating the author to extend the work \cite{con2023reed}, for pointing out the possible application to fully anonymous secret-sharing, and for providing helpful comments.
The author also thanks Noam Mazor for many insightful discussions and for providing valuable comments.
Finally, the author thanks Amir Shpilka and Zachi Tamo for fruitful discussions throughout the past several years about this topic.
\bibliographystyle{alpha}
\bibliography{ref}

\end{document}